\newcommand{\keywords}[1]{\par\addvspace\baselineskip
\noindent\keywordname\enspace\ignorespaces#1}
\newcommand{\abort}{\mathsf{abort}}
\renewcommand{\Box}{\square}
\renewcommand{\case}[5]{\mathsf{case}\,#1\,\mathsf{of}\,#2.#3;#4.#5}
\newcommand{\casebare}{\mathsf{case}}
\newcommand{\Cat}{\mathcal{C}}
\newcommand{\defeq}{\triangleq}
\newcommand{\den}[1]{\llbracket#1\rrbracket}
\newcommand{\dia}{\ensuremath{\mathsf{dia}}}
\newcommand{\Dia}{\blacklozenge}
\newcommand{\early}{\ensuremath{\mathsf{early}}}
\newcommand{\IKd}{IK_{\Dia}}
\newcommand{\letd}[3]{\mathsf{let\,dia}\,#1\,\mathsf{be}\,#2\,\mathsf{in}\,#3}
\newcommand{\letdbare}{\mathsf{let\,dia}}
\newcommand{\lock}{\mbox{\faUnlock}}
\newcommand{\open}{\ensuremath{\mathsf{open}}}
\newcommand{\red}{\leadsto}
\newcommand{\RED}[1]{RED_{#1}}
\newcommand{\shut}{\ensuremath{\mathsf{shut}}}
\newcommand{\sslock}{\scriptsize\mbox{\faUnlock}}
\newcommand{\tlock}{\tiny\mbox{\faUnlock}}
\begin{document}

\mainmatter  

\title{Fitch-Style Modal Lambda Calculi}

\titlerunning{Fitch-Style Modal Lambda Calculi}

%
%
\author{Ranald Clouston%
\thanks{We gratefully acknowledge discussions with Patrick Bahr, Lars Birkedal,
Ale\v{s} Bizjak, Christian Uldal Graulund, G.A. Kavvos, Bassel Mannaa,
Rasmus Ejlers M{\o}gelberg, Andrew M. Pitts, and Bas Spitters, and the comments of the
anonymous referees. This research was supported by a research grant (12386) from
Villum Fonden.}%
}
\authorrunning{Fitch-Style Modal Lambda Calculi}

\institute{Department of Computer Science, Aarhus University, Denmark \\
\url{ranald.clouston@cs.au.dk}
}

%
%

\maketitle

\begin{abstract}
Fitch-style modal deduction, in which modalities are eliminated by opening a subordinate proof, and introduced by shutting one, were investigated in the 1990s as a basis for lambda calculi. We show that such calculi have good computational properties for a variety of intuitionistic modal logics. Semantics are given in cartesian closed categories equipped with an adjunction of endofunctors, with the necessity modality interpreted by the right adjoint. Where this functor is an idempotent comonad, a coherence result on the semantics allows us to present a calculus for intuitionistic S4 that is simpler than others in the literature. We show the calculi can be extended \`{a} la tense logic with the left adjoint of necessity, and are then complete for the categorical semantics.
\keywords{intuitionistic modal logic, typed lambda calculi, categorical semantics}
\end{abstract}


\section{Introduction}

The Curry-Howard propositions-as-types
isomorphism~\cite{Girard:Proofs,Sorensen:Lectures,Wadler:Propositions} provides a
correspondence between natural deduction and typed lambda calculus of interest to
both logicians and computer scientists.
For the logician, term assignment offers a convenient notation to express and reason
about syntactic properties such as proof normalisation, and, especially in the
presence of dependent types, allows proofs of non-trivial mathematical theorems to
be checked by computer programs.
For the computer scientist, logics have been repurposed as typing disciplines to
address problems in computing in sometimes surprising ways.
Following Lambek~\cite{Lambek:Introduction}, categories form a third leg of the
isomorphism.
Categorical semantics can be used to prove the consistency of a calculus, and
they are crucial if we wish to prove or program in some particular mathematical
setting.
For example, see the use of the topos of trees as a setting for both programming
with guarded recursion, and proof by L\"{o}b induction, by Clouston et
al~\cite{Clouston:Guarded}.

This work involved two functors, `later' and `constant'.
Where functors interact appropriately with finite
products they correspond to necessity modalities in intuitionistic normal modal logic,
usually written $\Box$.
Such modalities have been extensively studied by logicians, and the corresponding
type-formers are widely applicable in computing, for example to
monads~\cite{Moggi:Computational}, staged programming~\cite{Davies:Modal},
propositional truncation~\cite{Awodey:Propositions}, and recent work in
homotopy type theory~\cite{Shulman:Brouwer}.
There is hence a need to develop all sides of the Curry-Howard-Lambek isomorphism
for necessity modalities.
Approaches to modal lambda calculi are diverse; see the survey by
Kavvos~\cite{Kavvos:Many}, and remarks in the final section of this paper. This paper
focuses on \emph{Fitch-style} modal lambda calculi as first proposed by
Borghuis~\cite{Borghuis:Coming} and (as the ``two-dimensional'' approach) by
Martini and Masini~\cite{Martini:Computational}.

Fitch-style modal lambda calculi%
\footnote{`Fitch-style' deduction can also be used to mean the linear presentation
of natural deduction with subordinate proofs for implication.}
adapt the proof methods of Fitch~\cite{Fitch:Symbolic} in which given a formula
$\Box A$ we may open a `(strict) subordinate proof' in which we eliminate the
$\Box$ to get premise $A$. Such a subordinate proof with conclusion $B$ can then be
shut by introducing a $\Box$ to conclude $\Box B$. Different
modal logics can be encoded by tweaking the open and shut rules; for example we
could shut the proof to conclude merely $B$, if we had the T axiom $\Box B\to B$.
Normal modal logics are usually understood with respect to Kripke's possible worlds
semantics (for the intuitionistic version, see e.g.
Simpson~\cite[Section 3.3]{Simpson:Proof}). In this setting Fitch's approach is highly
intuitive, as opening a subordinate proof corresponds to travelling to a generic related
world, while shutting corresponds to returning to the original world. See
Fitting~\cite[Chapter 4]{Fitting:Proof} for a lengthier discussion of this approach to
natural deduction.

Borghuis~\cite{Borghuis:Coming} kept track of subordinate proofs in a sequent
presentation by introducing a new structural connective to the context when a $\Box$
is eliminated, and removing it from the context when one is introduced, in a style
reminiscent of the treatment of modal logic in display
calculus~\cite{Wansing:Sequent}, or for that matter of the standard duality between
implication and comma. To the category theorist, this suggests an operation on
contexts \emph{left adjoint} to $\Box$.
This paper exploits this insight by presenting categorical semantics for
Fitch-style modal calculi for the first time, answering the challenge of de Paiva and
Ritter~\cite[Section 4]{dePaiva:Basic}, by modelling necessity modalities as right
adjoints. This is logically sound and complete, yet less general than modelling
modalities as monoidal functors as done for example by Bellin et
al.~\cite{Bellin:Extended}. For example, truncation in sets is monoidal but has no
right adjoint. Nonetheless adjunctions are ubiquitous, and in their presence we
argue that the case for Fitch-style calculi is compelling. Examples of right adoints
of interest to type theorists include the aforementioned modalities of
guarded recursion, the closure modalities of (differential) cohesive
$\infty$-toposes~\cite[Section 3]{Schreiber:Differential}, and atom-abstraction in
nominal sets~\cite{Menni:About}.

In Section~\ref{sec:IK} we present Borghuis's calculus for the logic Intuitionistic K,
the most basic intuitionistic modal logic of necessity. To the results of confluence,
subject reduction, and strong normalisation already shown by Borghuis we add
canonicity and the subformula property, with the latter proof raising a subtle issue
with sums not previously observed. We give categorical semantics for this style of
calculus for the first time and prove soundness. In Section~\ref{sec:left} we
introduce the left adjoint as a first-class type former \`{a} la intuitionistic tense
logic~\cite{Ewald:Intuitionistic}, in which the ``everywhere in the future'' modality is
paired with ``somewhere in the past''. To our knowledge this is the first natural
deduction calculus, let alone lambda calculus, for any notion of tense logic. It is not
entirely satisfactory as it lacks the subformula property, but it does allow us to prove
categorical completeness. In Section~\ref{sec:S4} we show how the basic
techniques developed for Intuitionistic K extend to Intuitionistic S4, one of the
most-studied intuitionistic modal logics. Instead of working with known Fitch-style
calculi for this logic~\cite{Pfenning:Modal,Davies:Modal} we explore a new,
particularly simple, calculus where the modality is \emph{idempotent}, i.e. $\Box A$
and $\Box\Box A$ are not merely logically equivalent, but isomorphic. Our semantics
for this calculus rely on an unusual `coherence' proof. In Section~\ref{sec:IR}
we present a calculus corresponding to the logic Intuitionistic R.
In Section~\ref{sec:concl} we conclude with a discussion of related and
further work.


\section{Intuitionistic K}\label{sec:IK}

This section presents results for the calculus of Borghuis~\cite{Borghuis:Coming}
for the most basic modal logic for necessity, first identified to our knowledge by
Bo{\v{z}}i{\'c} et al.~\cite{Bozic:Models} as $HK_{\Box}$; following
Yokota~\cite{Yokota:General} we use the name Intuitionistic K (IK). This logic
extends intuitionistic logic with a new unary connective $\Box$, one new axiom

  K: $\;\Box(A\to B)\to\Box A\to\Box B$ \\
and one new inference rule

  \emph{Necessitation: } if $A$ is a theorem, then so is $\Box A$.

\subsection{Type System}\label{Sec:IKtype}

Contexts are defined by the grammar
\[
  \Gamma \defeq \cdot \mid \Gamma,x:A  \mid \Gamma,\lock
\]
where $x$ is a variable not in $\Gamma$, $A$ is a formula of intuitionistic modal
logic, and $\lock$ is called a \emph{lock}. The open lock symbol is used to suggest
that a box has been opened, allowing access to its contents.

Ignoring variables and terms, sequents $\Gamma\vdash A$ may be interpreted as
intuitionistic modal formulae by the translation
\begin{itemize}
  \item $\den{\cdot\vdash A} = A$;
  \item $\den{B,\Gamma\vdash A} = B\to\den{\Gamma\vdash A}$;
  \item $\den{\lock,\Gamma\vdash A} = \Box\den{\Gamma\vdash A}$.
\end{itemize}
This interpretation will suffice to confirm the soundness and completeness of our
calculus, considered as a natural deduction calculus, with respect to $IK$.
It is however not a satisfactory basis for a categorical semantics, because it does not
interpret the context as an object. In Section~\ref{subsec:IKcat} we shall see that
$\lock$ may instead by interpreted as a \emph{left adjoint} of $\Box$, applied to the
context to its left.

Figure~\ref{fig:IKtyping} presents the typing rules. Rules for the product
constructions
$1$, $A\times B$, $\langle\rangle$, $\langle t,u\rangle$, $\pi_1\,t$, $\pi_2\,t$
are as usual and so are omitted, while sums are discussed at
the end of Section~\ref{sec:IKcomp}. Note that variables can only be introduced or
abstracted if they do not appear to the left of a lock. In the variable rule the context
$\Gamma'$ builds in variable exchange, while in the $\open$ rule $\Gamma'$ builds
in variable weakening. Exchange of variables with locks, and weakening for
locks, are not admissible.

\begin{figure}[t]
  \begin{mathpar}
    \inferrule*[right=$\lock\notin\Gamma'$]{ }{\Gamma, x : A,\Gamma' \vdash x : A}
    \and
    \inferrule*{
      \Gamma,x: A \vdash t:B}{
      \Gamma \vdash \lambda x.t:A\to B}
    \and
    \inferrule*{
      \Gamma \vdash t:A\to B \\
      \Gamma \vdash u:A}{
      \Gamma \vdash t\,u:B}
    \and
    \inferrule*{
      \Gamma,\lock \vdash t:A}{
      \Gamma\vdash \shut\,t:\Box A}
    \and
    \inferrule*[right=$\lock\notin\Gamma'$]{
      \Gamma \vdash t:\Box A}{
      \Gamma,\lock,\Gamma' \vdash \open\,t: A}
  \end{mathpar}
  \caption{Typing rules for Intuitionistic K}
  \label{fig:IKtyping}
\end{figure}

\begin{theorem}[Logical Soundness and Completeness]\label{thm:IKlogsc}
A formula is a theorem of $IK$ if and only if it is an inhabited type in the empty
context.
\end{theorem}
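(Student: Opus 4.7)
The plan is to prove the two directions separately. For soundness the statement that admits induction is stronger than what is stated: whenever $\Gamma\vdash t:A$ is derivable, $\den{\Gamma\vdash A}$ is a theorem of IK. The theorem is the case $\Gamma=\cdot$, where the translation is just $A$. For completeness I would exhibit a closed term for each axiom of IK and show that the two inference rules preserve inhabitation.

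\textbf{Soundness.} By induction on the typing derivation. The variable rule is immediate: the side-condition $\lock\notin\Gamma'$ ensures that $\den{\Gamma,x:A,\Gamma'\vdash A}$ unfolds to a nested implication with $A$ among the antecedents. Abstraction and application proceed by purely intuitionistic reasoning, using only that the translation sends lock-free segments of the context to chains of implications. The shut rule is trivial once the translation is unfolded: $\den{\Gamma,\lock\vdash A}$ is syntactically identical to $\den{\Gamma\vdash\Box A}$. The informative case is open. Writing the lock-free tail $\Gamma'$ as $C_1,\ldots,C_m$, the translation of the conclusion $\Gamma,\lock,\Gamma'\vdash A$ is $\den{\Gamma\vdash\Box(C_1\to\cdots\to C_m\to A)}$, while the hypothesis gives $\den{\Gamma\vdash\Box A}$. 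I would close the gap by applying necessitation to the intuitionistic theorem $A\to C_1\to\cdots\to C_m\to A$ to obtain $\Box(A\to C_1\to\cdots\to C_m\to A)$, then axiom K to derive $\Box A\to\Box(C_1\to\cdots\to C_m\to A)$, and finally modus ponens.

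\textbf{Completeness.} The intuitionistic axioms are inhabited by their standard lambda terms in the obvious way, and modus ponens corresponds to application. For the K axiom the term $\lambda f.\lambda x.\shut\,((\open f)(\open x))$ has type $\Box(A\to B)\to\Box A\to\Box B$: after $\shut$ opens a subordinate derivation, the variables $f:\Box(A\to B)$ and $x:\Box A$ lie to the left of the lock, so they may be unlocked by $\open$ and combined by application. Necessitation is immediate: if $\cdot\vdash t:A$ then $\cdot\vdash\shut\,t:\Box A$, since the premise of shut is $\cdot,\lock\vdash t:A$ and the lock has no variables below it to obstruct typing.

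\textbf{Main obstacle.} The only nontrivial step is the open case of soundness, where Fitch's structural rule must be simulated by an axiomatic combination of necessitation with K. This matches the reading of $\open$ as travelling to a generic related world in which the extra lock-free antecedents $\Gamma'$ may be discharged before returning, and confirms that the additional bookkeeping carried by locks does not take the calculus beyond IK.
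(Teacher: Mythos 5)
Your completeness direction matches the paper's: the term for K and necessitation via $\shut$ are exactly right. One small omission there: the step from $\cdot\vdash t:A$ to $\lock\vdash t:A$ is not literally immediate but is a weakening lemma (the paper's Lemma~\ref{lem:left_weak}), proved by induction on the derivation; this deserves care because the paper notes that weakening by locks is \emph{not} admissible in general in IK, and only prepending at the far left of the context is safe. The genuine gap is in your soundness direction. You declare the variable and application cases ``immediate''/``purely intuitionistic'', and you finish the $\open$ case with a bare modus ponens; all three of these fail as soon as $\Gamma$ contains a lock, because the translation $\den{\Gamma\vdash A}$ then wraps the conclusion in boxes. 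Concretely: for $\Gamma=\lock$ the application case asks you to pass from theorems $\Box(A\to B)$ and $\Box A$ to the theorem $\Box B$, which requires the K axiom, not intuitionistic reasoning; the variable judgement $\lock,x:A\vdash x:A$ translates to $\Box(A\to A)$, which requires necessitation, not a propositional tautology; and in your $\open$ case the implication $\Box A\to\Box(C_1\to\cdots\to C_m\to A)$ and the inductively obtained theorem $\den{\Gamma\vdash\Box A}$ both sit \emph{under} the context $\Gamma$, so top-level modus ponens applies only when $\Gamma=\cdot$.

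What is missing is the paper's key Lemma~\ref{lem:sound_lem}: $\den{\Gamma\vdash A\to B}\to\den{\Gamma\vdash A}\to\den{\Gamma\vdash B}$ is a theorem of $IK$, proved by induction on $\Gamma$, whose lock case applies necessitation to the induction hypothesis and then uses K to distribute the box — together with the companion observation that any theorem $C$ lifts to a theorem $\den{\Gamma\vdash C}$ (prefixing antecedents at variables, necessitation at locks). With these in hand, your case analysis goes through essentially as the paper's: $\lambda$ and $\shut$ are indeed trivial because the translations of premise and conclusion coincide, and your necessitation-plus-K move in the $\open$ case is exactly the paper's, but it must be completed by lifting the implication under $\Gamma$ and invoking Lemma~\ref{lem:sound_lem} rather than plain modus ponens. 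So your ``main obstacle'' diagnosis is off: K and necessitation are needed throughout the soundness induction whenever the context contains locks, not only at the $\open$ rule.
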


We can for example show that the $K$ axiom is inhabited:
\[
\inferrule*{
  f:\Box(A\to B),x:\Box A,\lock\vdash\open\,f: A\to B \\
  f,x,\lock\vdash\open\,x: A
}
{
\inferrule*{
  f:\Box(A\to B),x:\Box A,\lock\vdash(\open\,f)(\open\,x): B
}
{
  f:\Box(A\to B),x:\Box A\vdash\shut((\open\,f)(\open\,x)):\Box B
}}
\]

\subsection{Computation}\label{sec:IKcomp}

We extend the usual notion of $\beta$-reduction on untyped terms with the rule
\[
  \open\,\shut\,t \;\mapsto\; t
\]
We write $\red$ for the reflexive transitive closure of $\mapsto$. This relation is
plainly confluent. Two lemmas, proved by easy inductions on the derivation of the
terms $t$, then allow us to prove subject reduction:

\begin{lemma}[Variable Weakening]\label{lem:weak}
If $\Gamma,\Gamma'\vdash t:B$ then $\Gamma,x:A,\Gamma'\vdash t:B$.
\end{lemma}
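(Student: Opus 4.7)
The plan is to proceed by straightforward induction on the derivation of $\Gamma,\Gamma'\vdash t:B$. At each rule, I need to check that after inserting the new binding $x:A$ at the specified position, I can still apply the same rule, possibly after re-applying the induction hypothesis to the premises.

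For the variable rule, the key observation is that the side condition $\lock\notin\Gamma'$ refers only to the occurrence of locks, and inserting $x:A$ does not introduce any lock. For the lambda and application rules (and the product rules which are analogous), the induction hypothesis applies directly, choosing the matching decomposition of the context --- for $\lambda$ one simply extends $\Gamma'$ by a variable before invoking IH.

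The interesting cases are $\shut$ and $\open$, since these are the ones that manipulate locks. For $\shut$, the premise has context $\Gamma,\Gamma',\lock$, and I view this as a decomposition $\Gamma\mid (\Gamma',\lock)$, apply IH to obtain $\Gamma,x:A,\Gamma',\lock\vdash t:A$, and then reapply $\shut$. For $\open$, the conclusion has shape $\Gamma_0,\lock,\Gamma_1\vdash\open\,t:A$ with $\lock\notin\Gamma_1$, and I must case-split on whether the insertion point $x:A$ falls inside $\Gamma_0$ or inside $\Gamma_1$ (it cannot coincide with the lock itself, since we are inserting a variable). In the first case the IH applies to the premise $\Gamma_0\vdash t:\Box A$ and the rule is reapplied with the same $\Gamma_1$; in the second case the premise is unchanged and we just widen $\Gamma_1$ by $x:A$, noting again that the side condition $\lock\notin\Gamma_1$ is preserved.

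The main obstacle, if any, is purely bookkeeping: one has to be careful in the $\open$ case about which decomposition of the shared context $\Gamma,\Gamma'$ aligns with the $\Gamma_0,\lock,\Gamma_1$ of the rule, and to verify in every sub-case that the lock-freeness side conditions on suffixes survive the insertion of a variable. No nontrivial reasoning about $\beta$- or $\open\,\shut$-reduction is involved, since the lemma is purely about typing derivations.
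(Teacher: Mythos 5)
Your proposal is correct and takes exactly the approach the paper uses: the paper dispatches Lemma~\ref{lem:weak} as an ``easy induction on the derivation of the term $t$'', and your case analysis (in particular, splitting the $\open$ case on whether the insertion point lies before or after the rule's lock, and noting that inserting a variable cannot violate the $\lock\notin\Gamma'$ side conditions) is just that induction written out in full. No discrepancy with the paper's argument.
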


\begin{lemma}[Substitution]\label{lem:subs}
If $\Gamma,x:A,\Gamma'\vdash t:B$ and $\Gamma\vdash u:A$ then
$\Gamma,\Gamma'\vdash t[u/x]:B$.
\end{lemma}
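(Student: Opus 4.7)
The plan is induction on the derivation of $\Gamma, x:A, \Gamma' \vdash t:B$, with cases split on the last typing rule. The product, $\lambda$, application, and $\shut$ cases are routine: for each, we apply the induction hypothesis to each premise (extending $\Gamma'$ with the newly-bound variable or with $\lock$ as appropriate), and then reapply the same rule to the results. The $\shut$ case is safe because substitution under a lock is handled by treating $\Gamma', \lock$ as the new ``$\Gamma'$'' in the inductive call.

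The variable case splits further. If $t$ is $x$ itself, the side condition on the variable rule forces $\lock \notin \Gamma'$. Hence $\Gamma'$ consists only of variable bindings, so iterating Lemma~\ref{lem:weak} on $\Gamma \vdash u:A$ yields $\Gamma, \Gamma' \vdash u:A$, which is $\Gamma, \Gamma' \vdash x[u/x]:A$. If $t$ is a different variable $y$, then $y$ occurs in $\Gamma$ or $\Gamma'$ and $y[u/x] = y$, so we simply re-apply the variable rule with the context $\Gamma, \Gamma'$; the side condition on the right-hand portion is inherited from the original derivation.

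The main obstacle is the $\open$ case, where the derivation factors the conclusion context as $\Gamma_1, \lock, \Gamma_2$ with the premise $\Gamma_1 \vdash t':\Box B$ and $\lock \notin \Gamma_2$. We must determine on which side of this designated lock the binding $x:A$ falls. If $x$ lies in $\Gamma_1$, write $\Gamma_1 = \Gamma, x:A, \Gamma_1''$ with $\Gamma' = \Gamma_1'', \lock, \Gamma_2$; the induction hypothesis gives $\Gamma, \Gamma_1'' \vdash t'[u/x]:\Box B$, and a fresh application of $\open$ rebuilds $\Gamma, \Gamma' \vdash \open\,(t'[u/x]):B$. If instead $x$ lies in $\Gamma_2$, then $t'$ cannot mention $x$ (it is typed in $\Gamma_1$, which does not contain $x$), so $t'[u/x] = t'$; we reapply $\open$ with the premise $\Gamma_1 \vdash t':\Box B$ unchanged and the shortened right context obtained by deleting $x:A$, which still satisfies $\lock\notin\Gamma_2\setminus\{x:A\}$.

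Once the $\open$ case is dispatched, the sum case (when included) and any omitted product cases follow the same template as the $\lambda$ and application cases. No new ideas beyond careful tracking of the position of $x:A$ relative to locks are needed; the mild subtlety that the $\open$ rule exposes two places where $x$ might live is the only genuine point of care, and it is handled purely by case analysis.
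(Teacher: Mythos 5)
Your proof is correct and takes essentially the same route as the paper, which disposes of this lemma as an ``easy induction on the derivation of $t$''; your case analysis on where $x{:}A$ sits relative to the designated lock in the $\open$ rule is precisely the detail the paper elides. One minor caveat: under the generalised sum rules of Figure~\ref{fig:sums} the $\casebare$/$\abort$ cases need the same two-way split as $\open$ (the substituted variable may lie in either half of the split context), not the plain $\lambda$/application template, though this does not affect the lemma for the core calculus of Figure~\ref{fig:IKtyping}.
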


\begin{theorem}[Subject Reduction]\label{lem:red}
If $\Gamma\vdash t:A$ and $t\mapsto u$ then $\Gamma\vdash u:A$.
\end{theorem}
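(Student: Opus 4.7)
The plan is induction on the one-step reduction $t \mapsto u$, where $\mapsto$ is the compatible closure of the two base rules $(\lambda x.t)\,u \mapsto t[u/x]$ and $\open\,\shut\,t \mapsto t$ through every term-forming construct. Each congruence case will be routine: invert the typing derivation of $t$ to expose the subderivation typing the reducing subterm, apply the induction hypothesis to retype the reduct, and reassemble the outer typing rule unchanged. Since each term former corresponds to a unique introduction or elimination rule, inversion is immediate.

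For the two base cases I would proceed as follows. In the $\beta$-reduction case, from $\Gamma \vdash (\lambda x.t)\,u : B$ inversion yields $\Gamma, x : A \vdash t : B$ and $\Gamma \vdash u : A$, whereupon the Substitution Lemma (Lemma~\ref{lem:subs}) immediately gives $\Gamma \vdash t[u/x] : B$. For the modal rule $\open\,\shut\,t \mapsto t$, suppose $\Gamma, \lock, \Gamma' \vdash \open\,\shut\,t : A$ with $\lock \notin \Gamma'$. Inversion on $\open$ recovers $\Gamma \vdash \shut\,t : \Box A$, and inversion on $\shut$ recovers $\Gamma, \lock \vdash t : A$. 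To conclude $\Gamma, \lock, \Gamma' \vdash t : A$, I would iterate the Variable Weakening Lemma (Lemma~\ref{lem:weak}) once for each declaration in $\Gamma'$.

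The main obstacle is precisely this last step of the modal case. It succeeds only because the side condition $\lock \notin \Gamma'$ guarantees that $\Gamma'$ consists entirely of variable declarations, to which Lemma~\ref{lem:weak} applies. This explains why the calculus admits only weakening of variables and not of locks: lock-weakening would be unsound, and if $\Gamma'$ were permitted to contain locks, the reduction $\open\,\shut\,t \mapsto t$ would not in general preserve typing.
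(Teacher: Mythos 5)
Your proposal is correct and matches the paper's own proof, which likewise handles $\beta$-reduction for $\to$ via the Substitution Lemma (Lemma~\ref{lem:subs}) and $\open\,\shut\,t \mapsto t$ via the Variable Weakening Lemma (Lemma~\ref{lem:weak}), with the congruence cases left as routine. Your observation that the side condition $\lock\notin\Gamma'$ is exactly what makes the iterated variable weakening legitimate is the right justification, which the paper leaves implicit.
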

\begin{proof}
$\beta$-reduction for $\to$ requires Lemma~\ref{lem:subs}, and for $\Box$ requires
Lemma~\ref{lem:weak}.
\end{proof}

A term $t$ is \emph{normalisable} if there exists an integer $\nu(t)$ bounding the
length of any reduction sequence starting with $t$, and \emph{normal} if $\nu(t)$ is
$0$. By standard techniques we prove the following theorems:

\begin{theorem}[Strong Normalisation]\label{thm:sn}
Given $\Gamma\vdash t:A$, the term $t$ is normalisable.
\end{theorem}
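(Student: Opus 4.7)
My plan is to apply the standard Tait--Girard reducibility technique, adapted to handle the lock. The reducibility candidates $\RED{A}$ are defined by induction on the type $A$: at base type, $\RED{\alpha}$ is the set of strongly normalising terms of that type; $\RED{A \to B}$ consists of those $t$ such that $t\,u \in \RED{B}$ for every $u \in \RED{A}$; and $\RED{\Box A}$ consists of those $t$ such that $\open\,t \in \RED{A}$, where $\open\,t$ is taken in an ambient context obtained by appending a lock. The product cases are standard and can be omitted.

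The first step is to establish the three candidate conditions: (CR1) every element of $\RED{A}$ is SN; (CR2) $\RED{A}$ is closed under $\red$; (CR3) every neutral term---anything that is not a $\lambda$-abstraction, $\shut$, pair, or unit---whose one-step reducts all lie in $\RED{A}$ is itself in $\RED{A}$. These go through by induction on $A$ in the usual way. The only novelty is observing that terms of shape $\open\,t$ are neutral (they contain no outermost introduction form), so the clause for $\RED{\Box A}$ interacts correctly with (CR3).

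The second step is the Fundamental Lemma: if $\Gamma \vdash t : A$ and $\sigma$ sends each variable of $\Gamma$ to a reducible term of matching type, then $t[\sigma] \in \RED{A}$. The variable, abstraction, and application cases are standard. The $\open$ case is immediate from the definition of $\RED{\Box A}$: from $\Gamma \vdash t : \Box A$ the induction hypothesis gives $t[\sigma] \in \RED{\Box A}$, whence $\open\,(t[\sigma]) \in \RED{A}$. The interesting case is $\shut$: given $\Gamma, \lock \vdash t : A$ and a reducible $\sigma$ for $\Gamma$, the induction hypothesis already yields $t[\sigma] \in \RED{A}$, since the lock introduces no new variable. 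I must then show $\open\,\shut\,(t[\sigma]) \in \RED{A}$. This term is neutral, so (CR3) reduces the problem to checking all its one-step reducts: the head reduct is $t[\sigma]$ itself, which is reducible by hypothesis, and the internal reducts have the form $\open\,\shut\,u$ where $t[\sigma] \red u$ in one step. These are handled by an inner induction on the SN measure $\nu(t[\sigma])$ supplied by (CR1), using (CR2) to keep each such $u$ in $\RED{A}$.

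The conclusion is immediate: variables are neutral with no reducts, hence by (CR3) they lie in $\RED{A}$ at their declared type, so the identity substitution is reducible. Applying the Fundamental Lemma to $\Gamma \vdash t : A$ yields $t \in \RED{A}$, and (CR1) then gives SN, which is exactly normalisability in the sense of the statement. The main obstacle is the $\shut$ case above: one must treat $\open\,\shut\,u$ as a neutral redex rather than an introduction form and combine the outer induction on types with an inner induction on the SN measure of the subject. The lock itself causes no real difficulty, since it is a purely structural marker to which no substitution is attached.
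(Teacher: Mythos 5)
Your proposal is correct and follows essentially the same route as the paper: the same reducibility clause ($t\in\RED{\Box A}$ iff $\open\,t\in\RED{A}$), the same simultaneous (CR1)--(CR3) induction with $\open\,t$ treated as neutral, and the same conclusion via the fundamental lemma applied to the identity substitution. Your inner induction on $\nu(t[\sigma])$ in the $\shut$ case, discharging the reducts $t[\sigma]$ and $\open\,\shut\,u$ via (CR3) and (CR2), is exactly the paper's Lemma~\ref{lem:values_red} inlined rather than stated separately.
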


\begin{theorem}[Canonicity]\label{thm:can}
If $\Gamma$ is a context containing no variable assignments, $\Gamma\vdash t:A$, and $t$ is normal, then the main term-former of $t$ is the introduction for the
main type-former of $A$.
\end{theorem}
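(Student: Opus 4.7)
The plan is to characterise normal forms, and then to show that the ``spine-like'' (neutral) normal forms require a free variable at their head, which is unavailable in a lock-only context.

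First I would define mutually the classes of neutral terms $n$ and canonical values $v$ by
\[
  n \;::=\; x \;\mid\; n\,v \;\mid\; \pi_i\,n \;\mid\; \open\,n,
  \qquad
  v \;::=\; \lambda x.v \;\mid\; \langle\rangle \;\mid\; \langle v,v\rangle \;\mid\; \shut\,v \;\mid\; n,
\]
and show by induction on $t$ that every normal term is a value. Each elimination form forces its immediate subterm to be neutral, since otherwise the subterm would be an introduction and a $\beta$-redex would be exposed; for the modal case, the redex in question is the rule $\open\,\shut\,u\mapsto u$. The introduction cases are immediate by the inductive hypothesis applied beneath the relevant binder (or the lock, for $\shut$).

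The second step is a head-variable lemma: if $\Gamma\vdash n:A$ with $n$ neutral, then $\Gamma$ contains at least one variable assignment. The proof goes by induction on the neutral structure. The base case $n=x$ is forced by the variable rule. For $n\,v$ and $\pi_i\,n$ the context is unchanged and the inductive hypothesis transfers directly. The case $\open\,n'$ is the delicate one: inversion forces $\Gamma = \Gamma_0,\lock,\Gamma'$ with $\Gamma_0\vdash n':\Box B$; by induction $\Gamma_0$ contains a variable assignment, and therefore so does $\Gamma$.

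Combining the two steps, if $\Gamma\vdash t:A$ with $\Gamma$ lock-only and $t$ normal, then $t$ is a value that cannot be neutral, so $t$ is introduction-headed; inversion on the typing rules then confirms that its head term-former matches the principal type-former of $A$. The main obstacle I anticipate is verifying that the $\open$ case of the head-variable lemma closes cleanly: the subordinate-proof discipline means $\open$ alters the surrounding context by a lock, and one must check that the required variable still lies in the prefix $\Gamma_0$ preserved by inversion on the $\open$ rule. The paper's warning about a ``subtle issue with sums not previously observed'' suggests that analogous commuting conversions will be required once sums enter the picture, and I would expect that extension, rather than the core argument above, to be where the real work lies.
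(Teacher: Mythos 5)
Your proposal is correct and matches the paper's proof in essence: the paper likewise reduces canonicity to a lemma about neutral normal terms (Lemma~\ref{lem:normal_neutral_open}, stated as ``a normal neutral term contains a free variable'' rather than your context-based formulation), with the $\open$ case handled exactly as you do, by inversion placing the neutral subterm in the prefix before the lock. Your explicit neutral/value grammar is a presentational refinement the paper leaves implicit, since its definition of neutral refers only to the outermost term-former.
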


Concretely, if $A$ is some base type then $t$ is a value of that type.

\begin{theorem}[Subformula Property]\label{thm:subform}
Given $\Gamma\vdash t:A$ with $t$ normal, all subterms of $t$ have as their type in
the derivation tree a subtype of $A$, or a subtype of a type assigned in $\Gamma$.
\end{theorem}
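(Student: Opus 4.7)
The plan is to prove the property by a standard analysis of the shape of normal terms, organising it as a mutual induction over a neutral/normal grammar. Concretely, I would first define neutral forms $r$ (variables, or eliminations $r\,n$, $\pi_i\,r$, $\open\,r$ whose head is neutral) and normal forms $n$ (introductions $\lambda x.n$, $\shut\,n$, $\langle n,n'\rangle$, $\langle\rangle$ built from normal subterms, or any neutral $r$). A routine check, using the $\beta$ and $\open\,\shut$ rules, confirms that a well-typed term has $\nu(t) = 0$ precisely when it falls under this grammar, so the subformula statement reduces to a claim about the derivations witnessing membership in this grammar.

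I would then prove simultaneously the following two invariants. (a) If $\Gamma\vdash r:A$ with $r$ neutral, then $A$, and the type of every subterm of $r$, is a subtype of some type appearing in $\Gamma$. (b) If $\Gamma\vdash n:A$ with $n$ normal, then every subterm of $n$ has a type that is a subtype of $A$ or of a type in $\Gamma$. The variable rule gives the base case of (a). For $r\,n$ and $\pi_i\,r$, the inductive hypothesis on $r$ forces the head type $A\to B$ or $A\times B$ to be a subtype of a type in $\Gamma$, so its components $A$, $B$ are as well, and the hypothesis on the subterms follows. The introduction cases for (b) either extend $\Gamma$ with a strict subtype of $A$ (as for $\lambda$, or componentwise for pairs) or with a lock (as for $\shut$), in both cases preserving the invariant when appealing to the inductive hypothesis on the body.

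The one case requiring care is $\open\,r$: the rule types $r$ in the prefix $\Gamma$ of the ambient context $\Gamma,\lock,\Gamma'$, and concludes $\open\,r:A$ from $r:\Box A$. By induction $\Box A$ is a subtype of a type in $\Gamma$, hence of a type in $\Gamma,\lock,\Gamma'$, and therefore so is $A$; likewise every subterm of $r$ has type that is a subtype of a type in the larger context. Thus the invariant (a) is preserved across the context extension. This asymmetric context behaviour of $\open$ is the main point where the Fitch-style setting differs from the usual argument, and it is really the only obstacle; once the statement is phrased so that ``subtype of a type in $\Gamma$'' transports monotonically along context extension, the case goes through cleanly.

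Finally, it is worth noting in the proof why this argument breaks when sums are added naively: the scrutinee of a $\casebare$ neutral can have a type with no syntactic relationship to the conclusion, and the interaction between $\open$ and $\casebare$ forces additional commutative conversions before a normal-form grammar with the subformula property can be recovered. This is the subtle issue the author defers to the discussion of sums, and it does not affect the proof for the calculus as presented.
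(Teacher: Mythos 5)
Your proposal is correct and takes essentially the same route as the paper: your invariant (a) for neutrals is precisely the paper's Lemma~\ref{lem:sf_lem}, your invariant (b) is the theorem proved by induction on the term, and the decisive $\open$ case is handled identically, using neutrality of the head to place $\Box A$ among the subtypes of the context and then transporting monotonically along the extension to $\Gamma,\lock,\Gamma'$. Your closing remark on sums also matches the paper's observation that naive sum eliminations break the property and must be repaired by commuting conversions, which the paper makes type-correct by generalising the elimination rules in Figure~\ref{fig:sums}.
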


To attain this final theorem we need to take some care with sums. It is well known
that lambda calculi with sums do not enjoy the subformula property unless they have
additional reductions called commuting conversions~\cite[Chapter 10]{Girard:Proofs}.
However the commuting conversions for the $\Box$ type
\begin{align*}
  \open\,\case{s}{x}{t}{y}{u} &\mapsto \case{s}{x}{\open\,t}{y}{\open\,u} \\
  \open\,\abort\,t &\mapsto \abort\,t
\end{align*}
do not obviously enjoy subject reduction because $\open$ might change the context.
However if we tweak the definitions of the elimination term-formers for sums
according to Figure~\ref{fig:sums} then all results of this section indeed hold.

\begin{figure}[t]
  \begin{mathpar}
  \inferrule*{
      \Gamma \vdash s: A+B \\
      \Gamma,x:A,\Gamma' \vdash t : C \\
      \Gamma,y:B,\Gamma' \vdash u : C}{
      \Gamma,\Gamma' \vdash \case{s}{x}{t}{y}{u} : C}
    \and
    \inferrule*{
      \Gamma \vdash t:0}{
      \Gamma,\Gamma' \vdash \abort\,t : A}
  \end{mathpar}
  \caption{Elimination term-formers for sums}
  \label{fig:sums}
\end{figure}

Finally, while we will not explore computational aspects of $\eta$-equivalence in this
paper, we do note that
\[
  \shut\,\open\,t \;=\; t
\]
obeys subject reduction in both directions (provided, in the expansion case, that the
type of $t$ has $\Box$ as its main type-former).

\subsection{Categorical Semantics}\label{subsec:IKcat}

This section goes beyond Theorem~\ref{thm:IKlogsc} to establish the soundness of
the type system with respect to a \emph{categorical semantics}, in cartesian closed
categories $\Cat$ equipped with an endofunctor $\Box$ that has a \emph{left
adjoint}, which we write $\Dia$.

We interpret types as $\Cat$-objects via the structure of $\Cat$ in the obvious
way. We then interpret contexts as $\Cat$-objects by
\begin{itemize}
  \item $\den{\cdot} \;\defeq\; 1$;
  \item $\den{\Gamma,x:A} \;\defeq\; \den{\Gamma}\times A$;
  \item $\den{\Gamma,\lock} \;\defeq\; \Dia\den{\Gamma}$.
\end{itemize}
We omit the brackets $\den{\cdots}$ where no confusion is possible, and usually
abuse notation by omitting the left-most `$1\times$' where the left of the context is a
variable.

We will also sometimes interpret contexts $\Gamma$ as endofunctors,
abusing notation to also write them as $\den{\Gamma}$, or merely $\Gamma$, by
taking $\den{\cdot}$ as the identity, $\den{\Gamma,x:A} = \den{\Gamma}\times A$,
and $\den{\Gamma,\lock} = \Dia\den{\Gamma}$.

We interpret $\Gamma\vdash t:A$ as a $\Cat$-arrow
$\den{\Gamma\vdash t:A}:\den{\Gamma}\to A$, often abbreviated to $\den{t}$, or
merely $t$, by induction on the derivation of $t$ as follows.

Standard constructions such as variables, abstraction and application are interpreted as
usual. To interpret the rules for sums of Figure~\ref{fig:sums} we use the fact that
$\Dia$, as a left adjoint, preserves colimits.

$\shut$: we simply apply the isomorphism $\Cat(\Dia\den{\Gamma},A)\to
\Cat(\den{\Gamma},\Box A)$ given by the $\Dia\dashv\Box$ adjunction.

$\open$: We apply the isomorphism $\Cat(\den{\Gamma},\Box A)\to
\Cat(\Dia\den{\Gamma},A)$ to the arrow interpreting the premise, then compose with
the projection $\den{\Gamma,\lock,\Gamma'}\to\den{\Gamma,\lock}$.

\begin{theorem}[Categorical Soundness]\label{thm:IKcatsnd}
If $\Gamma\vdash t:A$ and $t\mapsto t'$ then $\den{t}=\den{t'}$.
\end{theorem}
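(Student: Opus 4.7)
The plan is to prove the theorem by induction on the reduction relation. The reductions fall into three classes: the standard $\beta$-rules for $\to$, products, and sums, which are handled by the usual CCC semantics; the new $\beta$-rule $\open\,\shut\,t \mapsto t$, which I will handle via the triangle identities of the $\Dia \dashv \Box$ adjunction; and the commuting conversions for $\open$ against $\casebare$ and $\abort$, which I will handle using the fact that $\Dia$, as a left adjoint, preserves the initial object and coproducts (as already anticipated in the interpretation of the sum rules).

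The key auxiliary result is a pair of semantic analogues of Lemmas~\ref{lem:weak} and~\ref{lem:subs}. The subtlety beyond the simply-typed case is that the extension $\Gamma'$ may contain locks, so the weakening and substitution maps on interpretations are not mere projections out of a product: they must be obtained by applying $\Gamma'$, viewed as an endofunctor in the manner sketched just before the statement of the theorem, to the evident map on $\den{\Gamma}$. Concretely, I will prove by induction on $t$ that
\begin{align*}
\den{\Gamma, x:A, \Gamma' \vdash t:B} &= \den{\Gamma, \Gamma' \vdash t:B} \circ \Gamma'(\pi_1), \\
\den{\Gamma, \Gamma' \vdash t[u/x]:B} &= \den{\Gamma, x:A, \Gamma' \vdash t:B} \circ \Gamma'(\langle \mathrm{id}, \den{u}\rangle).
\end{align*}
Both inductions are routine for the CCC term-formers; the only new cases are $\shut$ and $\open$, which go through by naturality of the adjunction transpose together with the functoriality of $\Gamma'$.

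With these lemmas in hand, the crux is the case $\open\,\shut\,t \mapsto t$. Unfolding definitions, $\den{\Gamma, \lock, \Gamma' \vdash \open\,\shut\,t:A}$ is the adjoint transpose of the adjoint transpose of $\den{\Gamma, \lock \vdash t:A}:\Dia\den{\Gamma}\to A$, precomposed with the projection $\den{\Gamma, \lock, \Gamma'} \to \den{\Gamma,\lock}$ induced by $\Gamma'$. The composite of the two transposes equals $\den{t}$ by naturality of the counit and the $\Dia \dashv \Box$ triangle identity, and precomposition with the projection is precisely the semantic weakening of $\den{\Gamma, \lock \vdash t:A}$ up to the context $\Gamma, \lock, \Gamma'$. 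The commuting conversions reduce, after applying the transpose, to the universal property of the coproduct (respectively initial object) in $\Cat$ transported across the isomorphisms $\Dia(X + Y) \cong \Dia X + \Dia Y$ and $\Dia 0 \cong 0$. The main obstacle is organising the functorial reading of contexts in the weakening and substitution lemmas so that the various projections and transposes land in compatible places; once that bookkeeping is set up, each reduction rule becomes a direct calculation from the CCC structure, the adjunction, or colimit preservation.
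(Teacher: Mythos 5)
Your proposal takes essentially the same route as the paper: the paper's proof also rests on a substitution lemma of exactly your form, $\den{\Gamma,\Gamma'\vdash t[u/x]:B}=\den{\Gamma,x:A,\Gamma'\vdash t:B}\circ\Gamma'\langle\mathrm{id},\den{u}\rangle$ (Lemma~\ref{lem:catsubs}, with the same functorial reading of $\Gamma'$ and the same $\shut$/$\open$ naturality squares), handles $\open\,\shut\,t\mapsto t$ by cancelling the two adjoint transposes $\varphi^{-1}\varphi$ against weakening-as-projection, and verifies the sum rules and commuting conversions via the distributivity isomorphisms coming from products and $\Dia$ being left adjoints. The one mild understatement is classifying sum $\beta$-reduction as ``usual CCC semantics'': with the generalised rules of Figure~\ref{fig:sums} the paper needs an induction through locks showing $d\circ\Gamma'(\Gamma\times in_1)=in_1$, but this uses exactly the colimit-preservation machinery your plan already invokes.
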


We also have that $\eta$-equivalent terms have the same denotation.


\section{Left Adjoints and Categorical Completeness}\label{sec:left}

In this section we extend the calculus to include the left adjoint $\Dia$ as a first-class
type-former, and hence prove categorical completeness. The underlying logic is the
fragment of intuitionistic tense logic~\cite{Ewald:Intuitionistic} with just one pair of
modalities, studied by Dzik et al.~\cite{Dzik:Intuitionistic} as `intuitionistic
logic with a Galois connection'; we use the name $\IKd$. We have two new axioms

  $\eta^m$: $\;A\to\Box\Dia A$

  $\varepsilon^m$: $\;\Dia\Box A\to A$ \\
We use the superscript $m$ to identify these as the unit as the unit and counit of the
\emph{modal} adjunction $\Dia\dashv\Box$, to differentiate them from other
(co)units used elsewhere in the paper. We have one new inference rule:

  \emph{Monotonicity: } if $A\to B$ is a theorem, then so is $\Dia A\to\Dia B$.

\subsection{Type System and Computation}
\label{sec:leftadjbasic}

We extend the type system of Figure~\ref{fig:IKtyping} with the new rules
for $\Dia$ presented in Figure~\ref{fig:adjtyping}. $\Dia$, unlike $\Box$, need not
commute with products, so does not interact well with contexts. Hence the
subterms of a $\letdbare$ term may not share variables.

\begin{figure}[t]
  \begin{mathpar}
    \inferrule*[right=$\lock\notin\Gamma'$]{
      \Gamma \vdash t: A}{
      \Gamma,\lock,\Gamma' \vdash \dia\,t: \Dia A}
    \and
    \inferrule*{
      \Gamma\vdash t:\Dia A \\
      x:A,\lock\vdash u:B}{
      \Gamma\vdash \letd{x}{t}{u}:B}
  \end{mathpar}
  \caption{Additional typing rules for logic $\IKd$}
  \label{fig:adjtyping}
\end{figure}

We can construct the axioms of $\IKd$:
\begin{mathpar}
\inferrule*{
  x:A,\lock\vdash\dia\,x:\Dia A }{
  x:A\vdash\shut\,\dia\,x:\Box\Dia A}
\and
\inferrule*{
   x:\Dia\Box A\vdash x:\Dia\Box A \\
   y:\Box A,\lock\vdash\open\,y:A}{
  x:\Dia\Box A\vdash\letd{y}{x}{\open\,y}:A}
\end{mathpar}
and given a closed term $f:A\to B$ we have the monotonicity construction
\[
\inferrule*{
  x:\Dia A\vdash x:\Dia A \\
  y:A,\lock\vdash\dia(f\,y):\Dia B}{
  x:\Dia A\vdash\letd{y}{x}{\dia(f\,y)}:\Dia B}
\]

To this we add the new $\beta$ rule
\[
  \letd{x}{\dia\,t}{u} \;\mapsto\; u[t/x]
\]

We can hence extend the syntactic results of the previous section to the logic $\IKd$,
with the exception of the subformula property. Consider the term
\[
\inferrule*{
  x:\Dia A\vdash \letd{y}{x}{\lambda z.\dia\,y}:\Dia A\to\Dia A \\
  x:\Dia A\vdash x:\Dia A}{
  x:\Dia A\vdash (\letd{y}{x}{\lambda z.\dia\,y})x:\Dia A}
\]
This term is normal but evidently fails the subformula property. One might expect, as
with sums, that a commuting conversion would save the day by reducing the term to
$\letd{y}{x}{((\lambda z.\dia\,y)x)}$, but this term sees the free variable $x$ appear
in the second subterm of a $\letdbare$ expression, which is not permitted.

We now turn to $\eta$-equivalence, and an equivalence which
we call \emph{associativity}:
\begin{align*}
  \letd{x}{t}{\dia\,x} &\;=\; t \\
  \letd{x}{s}{(t[u/y])} &\;=\; t[\letd{x}{s}{u}/y] \mbox{ if $t$'s context contains $y$ only}
\end{align*}
For example, under associativity the counter-example to the subformula property
equals $(\lambda z.\letd{y}{x}{\dia\,y})x$, which reduces to $\letd{y}{x}{\dia\,y}$,
which is $\eta$-equal to $x$. The equivalences enjoy subject reduction in both
directions (requiring, as usual, that $t$ has the right type for $\eta$-expansion). 

\subsection{Categorical Semantics}

We interpret the new term-formers in the same categories as used in
Section~\ref{subsec:IKcat}. For $\dia$, given $t:\Gamma\to A$ we compose $\Dia t$
with the projection $\Gamma,\lock,\Gamma'\to\Gamma,\lock$. The denotation of
$\letd{x}{t}{u}$ is simply $u\circ t$. We may then confirm the soundness of
$\beta$-reduction, $\eta$-equivalence, and associativity; we call these equivalences
collectively \emph{definitional equivalence}.

We extend standard techniques for proving
completeness~\cite{Lambek:Introduction}, constructing a \emph{term model},
a category with types as objects and, as arrows $A\to B$, terms of form
$x:A\vdash t:B$ modulo definitional equivalence. This is a category by taking
identity as the term $x$ and composition $u\circ t$ as $u[t/x]$. It is a cartesian
closed category using the type- and term-formers for products and function spaces.

The modalities $\Dia$ and $\Box$ act on types; they also act on terms by, for $\Dia$,
the monotonicity construction, and for $\Box$, mapping $x:A
\vdash t:B$ to $x:\Box A\vdash\shut\,t[\open\,x/x]:\Box B$. One can check these
constructions are functorial, and that the terms for $\eta^m$ and $\varepsilon^m$
are natural and obey the triangle equalities for the adjunction $\Dia\vdash\Box$.

Given a context $\Gamma$ we define the \emph{context term} $\Gamma\vdash
c_{\Gamma}:\den{\Gamma}$ by
\begin{itemize}
\item $c_{\cdot} \;\defeq\; \langle\rangle$;
\item $c_{\Gamma,x:A} \;\defeq\; \langle c_{\Gamma},x\rangle$;
\item $c_{\Gamma,\sslock} \;\defeq\; \dia\,c_{\Gamma}$.
\end{itemize}

\begin{lemma}\label{lem:compl_lem}
Given $\Gamma\vdash t:A$, $t$ is definitionally equal to $\den{\Gamma\vdash t:A}
[c_{\Gamma}/x]$.
\end{lemma}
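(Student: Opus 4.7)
The proof is by structural induction on the derivation of $\Gamma\vdash t:A$. At each step the plan is to unfold the categorical interpretation in the term model to exhibit an explicit representative $x:\den{\Gamma}\vdash \den{t}:A$, substitute $c_\Gamma$ for $x$, and then simplify using the definitional equivalences (the $\beta$- and $\eta$-rules for $\to$, $\times$, $\Box$ and $\Dia$, together with associativity for $\letdbare$) and the inductive hypotheses on the immediate subderivations.

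The variable, abstraction, application, and cartesian-product cases are routine and mirror the standard Lambek term-model completeness proof: no modal equivalence is needed beyond $\beta\eta$ for $\to$ and $\times$. The engine driving the modal cases is the identity $c_{\Gamma,\sslock} = \dia\,c_\Gamma$, which follows by unfolding the definition of the context term, together with the observation that the projection $\den{\Gamma,\lock,\Gamma'}\to\Dia\den{\Gamma}$ restricts $c_{\Gamma,\sslock,\Gamma'}$ to $c_{\Gamma,\sslock}$.

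For $\shut$, the adjunction transpose in the term model, written out using the unit $y:A\vdash\shut\,\dia\,y:\Box\Dia A$ composed with the functorial action of $\Box$ on the premise, yields the representative $\shut\,(\den{t'}[\open\,\shut\,\dia\,x/y])$; a single use of the $\beta$-rule $\open\,\shut\,s\mapsto s$ together with the context-term identity above collapses this, after substitution, to $\shut\,(\den{t'}[c_{\Gamma,\sslock}/y])$, and the IH on $t'$ finishes the case. The $\open$ and $\dia$ cases follow the same template: the functorial action of $\Dia$, together with its counit for $\open$, builds a $\letdbare$-binder around the inductive data, which collapses using the $\beta$-rule for $\letdbare$ and the $c_{\Gamma,\sslock} = \dia\,c_\Gamma$ identity.

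The delicate step is the $\letdbare$ case, and this is where the associativity and $\eta$-equivalences earn their keep. The semantic denotation is composition, which in the term model unfolds to $\den{u'}[\den{t'}/z]$; after the IH on $t'$ this becomes $\den{u'}[t'/z]$, and we must identify it with $\letd{x}{t'}{u'}$. The plan is to rewrite $u'$, using the IH and the fact that $c_{x:A,\sslock} = \dia\,x$, as $\den{u'}[\dia\,x/z]$, giving $\letd{x}{t'}{\den{u'}[\dia\,x/z]}$; associativity, whose side-condition is met because $\den{u'}$ has only $z$ free, then pulls the $\letdbare$ past the substitution to yield $\den{u'}[\letd{x}{t'}{\dia\,x}/z]$, and finally the $\eta$-rule $\letd{x}{t'}{\dia\,x}=t'$ closes the gap. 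This round-trip is the only place where both associativity and $\eta$ for $\letdbare$ are essential simultaneously, so it is the main obstacle of the proof; once it is handled, the other modal cases follow uniformly by the recipe above.
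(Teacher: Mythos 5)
Your proposal is correct and matches the paper's own proof essentially step for step: induction on the derivation, the unfolding $c_{\Gamma,\sslock}=\dia\,c_{\Gamma}$ driving the $\shut$, $\open$ and $\dia$ cases (including the $\open\,\shut\,\dia\,x$ collapse for $\shut$), and in the $\letdbare$ case the same round-trip $\den{u}[t/x]=\den{u}[\letd{x}{t}{\dia\,x}/x]=\letd{x}{t}{(\den{u}[\dia\,x/x])}$ via $\eta$ then associativity, closed by the inductive hypothesis. No gaps to report.
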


\begin{theorem}[Categorical Completeness]
If $\Gamma\vdash t:A$ and $\Gamma\vdash u:A$ are equal in all models then they
are definitionally equal.
\end{theorem}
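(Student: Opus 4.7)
The plan is the standard term-model argument, leveraging the preceding construction. The term model $\mathcal{T}$, whose objects are types and whose arrows $A \to B$ are terms $x:A \vdash t:B$ modulo definitional equivalence, has already been exhibited as a cartesian closed category equipped with an adjunction $\Dia \dashv \Box$ between the functors that act on types as the syntactic type-formers and on arrows as monotonicity and the $\shut/\open$ construction respectively. So $\mathcal{T}$ is a model in the sense of Section~\ref{subsec:IKcat} extended with $\Dia$. The interpretation of $\Gamma \vdash t:A$ in $\mathcal{T}$ is an arrow $\den{\Gamma} \to A$, i.e.\ a term $x:\den{\Gamma} \vdash \den{\Gamma\vdash t:A} : A$ up to definitional equivalence.

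The crux is Lemma~\ref{lem:compl_lem}: in the term model, substituting the context term $c_\Gamma$ for the sole free variable of $\den{\Gamma\vdash t:A}$ recovers $t$ up to definitional equivalence. Granting this, completeness is immediate: if $\Gamma\vdash t:A$ and $\Gamma\vdash u:A$ are equal in all models, then in particular they are equal in $\mathcal{T}$, meaning $\den{\Gamma\vdash t:A}$ and $\den{\Gamma\vdash u:A}$ are definitionally equivalent as terms $x:\den{\Gamma}\vdash \cdot : A$. Substituting $c_\Gamma$ for $x$ preserves definitional equivalence, and Lemma~\ref{lem:compl_lem} then identifies the results with $t$ and $u$ respectively, yielding $t = u$ definitionally.

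The work therefore lies in proving Lemma~\ref{lem:compl_lem}, which proceeds by induction on the derivation of $\Gamma \vdash t : A$. The variable, product, abstraction, and application cases unfold using the explicit CCC structure of $\mathcal{T}$ and are essentially those of Lambek--Scott~\cite{Lambek:Introduction}. The interesting cases are modal. For $\shut$ and $\open$ one must unfold the isomorphism $\mathcal{T}(\Dia\den{\Gamma},A) \cong \mathcal{T}(\den{\Gamma},\Box A)$ in $\mathcal{T}$, which is given concretely by post- and pre-composition with the syntactic unit and counit; the calculation reduces to $\beta\eta$-equalities for $\shut/\open$ together with the definition $c_{\Gamma,\sslock} = \dia\,c_\Gamma$. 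For $\dia$ one similarly unfolds the action of $\Dia$ on arrows (monotonicity) and uses $c_{\Gamma,\sslock} = \dia\,c_\Gamma$ to absorb the projection. The case that is the main obstacle is $\letdbare$: the denotation is simply $u \circ t$, which in $\mathcal{T}$ is the syntactic substitution $u[t/y]$, but $t$ after the inductive hypothesis comes wrapped as $\den{t}[c_\Gamma/x]$, so one needs to commute the $\letdbare$ arising inside $\den{t}$ past the outer substitution by $u$. This is exactly what the associativity equation of Section~\ref{sec:leftadjbasic} was introduced to license; it is the only place in the proof where associativity is truly needed, and it must be applied with care to the side-condition that the body of the outer $\letdbare$ contains $y$ as its only free variable.
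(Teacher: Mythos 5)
Your proposal is correct and takes essentially the same route as the paper: the term model with types as objects and terms-modulo-definitional-equivalence as arrows, Lemma~\ref{lem:compl_lem} proved by induction on derivations, and completeness obtained by substituting $c_{\Gamma}$ and using that substitution preserves definitional equality. One inaccuracy worth flagging: associativity is \emph{not} needed only in the $\letdbare$ case of Lemma~\ref{lem:compl_lem} --- the paper also invokes it in the $\open$ and $\dia$-adjacent calculations (to commute the $\letdbare$ encoding the projection past the rest of the term) and, earlier, in verifying that the term model really carries the adjunction $\Dia\dashv\Box$ (functoriality of $\Dia$ on composites, naturality of $\varepsilon^m$, and the triangle equality $\varepsilon^m\circ\Dia\eta^m=id$), so your parenthetical claim should be dropped even though it does not affect the validity of the overall argument.
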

\begin{proof}
$t$ and $u$ have equal denotations in the term model, so their denotations are
definitionally equal. Definitional equality is preserved by substitution, so
$\den{\Gamma\vdash t:A}[c_{\Gamma}/x]=
\den{\Gamma\vdash u:A}[c_{\Gamma}/x]$, so by Lemma~\ref{lem:compl_lem},
$t=u$.
\end{proof}


\section{Intuitionistic S4 for Idempotent Comonads}\label{sec:S4}

Intuitionistic S4 (IS4) is the extension of IK with the axioms

  T: $\;\Box A\to A$

  4: $\;\Box A\to \Box\Box A$ \\
To the category theorist IS4 naturally suggests the notion of a
\emph{comonad}. IS4 is one of the most studied and widely applied intuitionistic
modal logics; in particular there exist two Fitch-style
calculi~\cite{Pfenning:Modal,Davies:Modal}. We conjecture that similar results to the
previous sections could be developed for these calculi. Instead of pursuing such a result,
we here show that a simpler calculus is possible if we restrict to
\emph{idempotent} comonads, where $\Box A$ and $\Box\Box A$ are isomorphic.
This restriction picks out an important class of examples --  see for example the
discussion of Rijke et al.~\cite{Rijke:Modalities} -- and relies on a novel
`coherence' proof.

\subsection{Type System and Computation}\label{Sec:S4type}

A calculus for IS4 is obtained by replacing the $\open$ rule of
Figure~\ref{fig:IKtyping} by
\[
\inferrule*{
  \Gamma\vdash t:\Box A}{
  \Gamma,\Gamma'\vdash \open\,t:A
}
\]

The T and 4 axioms are obtained by
\begin{mathpar}
  \inferrule*{
    x:\Box A\vdash x:\Box A}{
    x:\Box A\vdash \open\,x:A}
  \and
  \inferrule*{
    \inferrule*{
        x:\Box A,\lock,\lock\vdash\open\,x:A}{
      x:\Box A,\lock\vdash\shut\,\open\,x:\Box A}}{
    x:\Box A\vdash\shut\,\shut\,\open\,x:\Box\Box A}
\end{mathpar}

This confirms logical completeness; once can also easily check soundness.

Subject reduction for the $\beta$-reduction $\open\,\shut\,t\mapsto t$
requires a new lemma, proved by an easy induction on $t$:

\begin{lemma}[Lock Replacement]\label{lem:lockrep}
If $\Gamma,\lock,\Gamma''\vdash t:A$ then $\Gamma,\Gamma',\Gamma''\vdash t:A$.
\end{lemma}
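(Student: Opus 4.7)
The plan is to proceed by induction on the derivation of $\Gamma,\lock,\Gamma''\vdash t:A$, invoking the induction hypothesis on subderivations and then re-applying the same rule. For the variable rule the side condition that no lock occurs to the right of the chosen variable is preserved, since we are only altering a position strictly to the left of that variable. For $\lambda$-abstraction, application, and the product and sum term-formers the induction is direct, because none of those rules' side conditions refer to the contents of the middle segment of the context.

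The interesting case is the new IS4 form of $\open$:
\[
\inferrule*{\Delta\vdash t:\Box A}{\Delta,\Delta'\vdash\open\,t:A}
\]
The decomposition $\Gamma,\lock,\Gamma''=\Delta,\Delta'$ can place the target lock either inside $\Delta$ or inside $\Delta'$. In the former case I invoke the induction hypothesis on the premise to replace the lock by $\Gamma'$ and then re-apply $\open$ with the same tail $\Delta'$. In the latter case the premise is unchanged and I simply re-apply $\open$, this time with a tail obtained from $\Delta'$ by substituting $\Gamma'$ for the distinguished lock. Either way the conclusion has the required shape $\Gamma,\Gamma',\Gamma''\vdash\open\,t:A$. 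The $\shut$ rule is routine: its premise has context $\Gamma,\lock,\Gamma'',\lock$, whose leading lock the induction hypothesis replaces directly.

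The main obstacle, such as it is, is conceptual rather than technical: one must recognise that the revised $\open$ rule no longer demands any particular lock structure in its tail, so that inserting or deleting material in $\Delta'$ is harmless. This is exactly what makes the lemma hold here while failing in the IK calculus of Section~\ref{sec:IK}, where $\open$ requires an explicit lock that an arbitrary $\Gamma'$ might not provide. In this sense the lemma is really a syntactic reflection of the T and 4 axioms having been absorbed into the shape of the IS4 $\open$ rule, and the induction is just a matter of propagating this observation through the derivation tree.
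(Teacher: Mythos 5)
Your proof is correct and follows exactly the route the paper intends: the paper dismisses this as ``an easy induction on $t$'', and your case analysis --- routine cases for variables, $\lambda$, application, and $\shut$, with the only real content in the IS4 $\open$ rule, split according to whether the distinguished lock lies in the premise context or in the weakened tail --- is precisely that induction spelled out. Your closing observation, that the lemma holds here but fails in IK because the IS4 $\open$ rule imposes no structure on its tail, is also accurate and consistent with the paper's remark that lock weakening is not admissible in the IK calculus.
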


The key syntactic theorems~\ref{thm:sn},~\ref{thm:can}, and~\ref{thm:subform}
then follow easily.

$\eta$-expansion obeys subject reduction as before, but it is not the case, for
example, that the term presented
above for the 4 axiom reduces to $\shut\,x$. We may however accept a notion of
$\eta$-reduction on typed terms-in-context:
\[
  \Gamma\vdash\shut\,\open\,t\mapsto t:\Box A \mbox{ provided that }
  \Gamma\vdash t:\Box A
\]
This equivalence is more powerful than it might appear; it allows us to derive the
idempotence of $\Box$, as the 4 axiom is mutually inverse with the instance
$\Box\Box A\to\Box A$ of the T axiom. That is,
$\lambda x.\open\,\shut\,\shut\,\open\,x$ reduces to the identity
on $\Box A$, and $\lambda x.\shut\,\shut\,\open\,\open\,x$ reduces to the identity on
$\Box\Box A$.

\subsection{Categorical Semantics}

We give semantics to our type theory in a cartesian closed category with an
adjunction of endofunctors $\Dia\dashv\Box$ in which $\Box$ is a \emph{comonad}.
Equivalently~\cite[Section 3]{Eilenberg:Adjoint}, $\Dia$ is a monad, equipped with
a unit $\eta$ and multiplication $\mu$.
To confirm the \emph{coherence} of these semantics,
discussed in the next subsection, and the soundness of $\eta$-equivalence, we further
require that $\Box$ is idempotent, or equivalently that all
$\mu_A:\Dia\Dia A\to\Dia A$ are isomorphisms with inverses
$\eta_{\Dia A}=\Dia\eta_A$.

To define the semantics we define \emph{lock replacement} natural transformations
$l_{\Gamma}:\den{\Gamma}\to\Dia$, corresponding to Lemma~\ref{lem:lockrep},
by induction on $\Gamma$:
\begin{itemize}
\item $l_{\cdot}$ is the unit $\eta$ of the monad;
\item $l_{\Gamma,x:A}$ is the projection composed with $l_{\Gamma}$;
\item $l_{\Gamma,\sslock}$ is $\Dia l_{\Gamma}$ composed with $\mu$.
\end{itemize}
Note that $l_{\sslock}$ is the identity by the monad laws.

We may now define the interpretation of $\open$: given $t:\Gamma\to\Box A$
we apply the adjunction to get an arrow $\Dia\Gamma\to A$, then compose with
$l_{\Gamma'}:\Gamma,\Gamma'\to\Gamma,\lock$.

\begin{lemma}\label{lem:lock_over}
If we replace part of a context with a lock, then replace part of the new
context that includes the new lock, we could have done this in one step:
\[\xymatrix{
  \Gamma_1,\Gamma_2,\Gamma_3,\Gamma_4 \ar[rr]^-{l_{\Gamma_2,\Gamma_3,\Gamma_4}} \ar[dr]_-{\Gamma_4(l_{\Gamma_3})} & & \Gamma_1,\lock \\
  & \Gamma_1,\Gamma_2,\lock,\Gamma_4 \ar[ur]_-{\,l_{\Gamma_2,\tlock,\Gamma_4}}
}\]
\end{lemma}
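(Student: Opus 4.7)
The plan is to read the diagram as an identity between natural transformations in $\Cat$, whiskered by $\den{\Gamma_1}$, and prove it by a nested induction on the context data, using the recursive definition of $l_{(-)}$: $l_\cdot = \eta$, $l_{\Gamma, x:A} = l_\Gamma \circ \pi_1$, and $l_{\Gamma, \lock} = \mu \circ \Dia l_\Gamma$.

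First I would do an outer induction on $\Gamma_4$. The case $\Gamma_4 = \Gamma_4', x:A$ is routine: both composites pre-compose with the same projection, so the inductive hypothesis on $\Gamma_4'$ closes the case. When $\Gamma_4 = \Gamma_4', \lock$, the defining clause yields $\mu \circ \Dia(\ldots)$ on both sides; rewriting $(\Gamma_4', \lock)(l_{\Gamma_3}) = \Dia(\Gamma_4'(l_{\Gamma_3}))$ and using functoriality of $\Dia$ lets the IH on $\Gamma_4'$ identify the two composites.

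The base case $\Gamma_4 = \cdot$ asks that $l_{\Gamma_2, \Gamma_3} = l_{\Gamma_2, \lock} \circ l_{\Gamma_3}$ (whiskered by $\Gamma_1$), and I would handle it by a second induction on $\Gamma_3$. For $\Gamma_3 = \cdot$ the equation becomes $l_{\Gamma_2} = (\mu \circ \Dia l_{\Gamma_2}) \circ \eta$, which follows from naturality of $\eta$ slid past $\Dia l_{\Gamma_2}$ together with the monad unit law $\mu \circ \eta = \mathsf{id}$. The variable case is again by a common projection. In the $\Gamma_3 = \Gamma_3', \lock$ case, expanding both $l_{\Gamma_2, \Gamma_3', \lock}$ and $l_{\Gamma_3', \lock}$ via the lock clause, then invoking the IH on $\Gamma_3'$ and functoriality of $\Dia$, reduces the goal to $\mu \circ \Dia l_{\Gamma_2, \lock} = l_{\Gamma_2, \lock} \circ \mu$. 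Substituting $l_{\Gamma_2, \lock} = \mu \circ \Dia l_{\Gamma_2}$ turns this into an equation that follows from monad associativity $\mu \circ \Dia\mu = \mu \circ \mu$ combined with naturality of $\mu$ at $l_{\Gamma_2}$.

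The main obstacle will be the bookkeeping of whiskering and the instantiation points of $\eta$ and $\mu$ as $\Dia$s accumulate; the mathematical content is just the standard monad laws and naturalities. In particular, the idempotence of $\Box$ required elsewhere in this section plays no role in this lemma.
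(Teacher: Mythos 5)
Your proposal is correct and matches the paper's proof: the paper likewise performs an outer induction on $\Gamma_4$ with the base case handled by an inner induction on $\Gamma_3$, using exactly the same ingredients (naturality of $\eta$ and $\mu$, the unit law for the $\Gamma_3=\cdot$ case, projections for the variable case, and associativity $\mu\circ\Dia\mu=\mu\circ\mu$ plus naturality of $\mu$ for the lock case). Your closing observation is also right: idempotence is not used here, entering the paper's development only in Lemma~\ref{lem:strength_and_l}.
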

\begin{proof}
By induction on $\Gamma_4$, with the base case
following by induction on $\Gamma_3$.
\end{proof}


\begin{lemma}\label{lem:lockreponterms}
$\den{\Gamma,\lock,\Gamma''\vdash t:A}\circ \den{\Gamma''}(l_{\Gamma'})=
\den{\Gamma,\Gamma',\Gamma''\vdash t:A}$.
\end{lemma}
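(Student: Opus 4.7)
I plan to proceed by induction on the derivation of $\Gamma,\lock,\Gamma''\vdash t:A$. The non-modal cases are routine: in the variable case the side condition $\lock\notin\Gamma'$ forces the variable to lie in $\Gamma''$, so both projections agree once we observe that $\den{\Gamma''}(l_{\Gamma'})$ is, by construction, the whiskering of $l_{\Gamma'}$ by the endofunctor $\den{\Gamma''}$; the $\lambda$-abstraction, application, and pairing cases follow from the inductive hypotheses by functoriality of $-\times A$ and $(-)^A$; sum elimination goes through similarly, relying on the fact that $\Dia$ (and hence any context functor built from it) preserves colimits.

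For the $\shut$ rule, suppose the derivation concludes $\shut\,s$ from $\Gamma,\lock,\Gamma'',\lock\vdash s:A'$. Applying the inductive hypothesis to $s$, with $\Gamma''$ in the statement replaced by $\Gamma'',\lock$, and noting $\den{\Gamma'',\lock}(l_{\Gamma'}) = \Dia(\den{\Gamma''}(l_{\Gamma'}))$, gives the required equation for $s$ one level up inside $\Dia$. Since $\den{\shut\,s}$ is the adjunction transpose of $\den{s}$, the standard naturality identity $\widetilde{f\circ\Dia g} = \widetilde{f}\circ g$ for $\Dia\dashv\Box$ then delivers the result immediately.

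The main obstacle is the $\open$ rule, because the IS4 formulation permits the subterm $s:\Box A$ to be typed in any prefix $\Gamma_0$ of the ambient context, so the derivation is non-unique. I propose a case split on whether the distinguished $\lock$ in $\Gamma,\lock,\Gamma''$ is absorbed into $\Gamma_0$ or left in the tail $\Gamma_1$. If $\Gamma_0 = \Gamma,\lock,\Gamma''_1$ with $\Gamma'' = \Gamma''_1,\Gamma''_2$ and tail $\Gamma_1 = \Gamma''_2$, then unfolding $\den{\open\,s} = \widetilde{\den{s}}\circ l_{\Gamma''_2}$ and invoking the inductive hypothesis on $s$ reduces the goal to the naturality square of $l_{\Gamma''_2}$ against the morphism $\den{\Gamma''_1}(l_{\Gamma'})$, combined again with $\widetilde{f\circ g} = \widetilde{f}\circ\Dia g$. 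If instead $\Gamma = \Gamma_0,\Delta$ and the distinguished lock lies inside $\Gamma_1 = \Delta,\lock,\Gamma''$, then $s$ is derived identically in both contexts and the equation reduces to $l_{\Delta,\lock,\Gamma''}\circ\den{\Gamma''}(l_{\Gamma'}) = l_{\Delta,\Gamma',\Gamma''}$, which is precisely the instance of Lemma~\ref{lem:lock_over} with $\Gamma_2 = \Delta$, $\Gamma_3 = \Gamma'$, $\Gamma_4 = \Gamma''$. These two subcases together cover the $\open$ rule and complete the induction.
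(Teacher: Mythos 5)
Your proposal is correct and takes essentially the same route as the paper's proof: induction on the derivation, with the $\shut$ case handled by whiskering the inductive hypothesis under $\Dia$ and transpose naturality, and the $\open$ case split exactly as the paper does --- Lemma~\ref{lem:lock_over} when the distinguished lock falls inside the weakening, and the inductive hypothesis plus naturality of lock replacement when it lies in the context typing the subterm. Your writeup merely spells out explicitly (with the instance $\Gamma_2=\Delta$, $\Gamma_3=\Gamma'$, $\Gamma_4=\Gamma''$) the case analysis the paper states in one sentence.
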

\begin{proof}
By induction on the derivation of $t$.
\end{proof}

Now $\open\,\shut\,t$, where the $\open$ has weakening $\Gamma'$, has denotation
$\varepsilon^{m}\circ\Dia\Box t\circ\Dia\eta^{m}\circ l_{\Gamma'}$, which
is $t\circ l_{\Gamma'}$ by the naturality of $\varepsilon^{m}$, and the
adjunction. This is what is required by Lemma~\ref{lem:lockreponterms}, so
$\beta$-reduction for $\Box$ is soundly modelled.

\subsection{Coherence}

Because the $\open$ rule involves a weakening, and does not explicitly record in the
term what that weakening is, the same typed term-in-context can be the root of
multiple derivation trees, for example:
\begin{mathpar}
  \inferrule*{
    \inferrule*{
      x:\Box\Box A\vdash x:\Box\Box A}{
      x:\Box\Box A,\lock\vdash \open\,x:\Box A}}{
    x:\Box\Box A,\lock,\lock\vdash\open\,\open\,x:A}
  \and
  \inferrule*{
    \inferrule*{
      x:\Box\Box A\vdash x:\Box\Box A}{
      x:\Box\Box A\vdash \open\,x:\Box A}}{
    x:\Box\Box A,\lock,\lock\vdash\open\,\open\,x:A}
\end{mathpar}
The categorical semantics of the previous section is defined by induction on
derivations, and so does not truly give semantics to \emph{terms} unless any two
trees with the same root must have the same denotation. In this section we show that
this property, here called \emph{coherence}, indeed holds. We make crucial use of
the idempotence of the comonad $\Box$.

We first observe that if $\Gamma,\Gamma',\Gamma''\vdash t:A$ and all variables of
$\Gamma'$ are not free in $t$, then $\Gamma,\Gamma''\vdash t:A$. The following
lemma, proved by easy inductions, describes how the denotations of these derivations
are related:

\begin{lemma}\label{lem:strength}
\begin{enumerate}
\item
If $x$ is not free in $t$ then $\Gamma,x:A,\Gamma'\vdash t:B$ has the same
denotation as $\Gamma,\Gamma'\vdash t:B\circ\Gamma'(pr)$.
\item
$\Gamma,\Gamma'\vdash t:B$ has denotation $\Gamma,\lock,\Gamma'\vdash t:B
\circ\Gamma'(\eta)$.
\end{enumerate}
\end{lemma}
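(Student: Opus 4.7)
The plan is to prove both statements by induction on the derivation of $t$, handling the two parts in parallel since they share the same shape: the denotation in the extended context ($\Gamma, x:A, \Gamma'$ or $\Gamma, \lock, \Gamma'$) equals the denotation in the shorter context $\Gamma, \Gamma'$ post-composed with $\Gamma'(pr)$ or $\Gamma'(\eta)$ respectively, where $pr : \Gamma \times A \to \Gamma$ is the product projection and $\eta : \Gamma \to \Dia \Gamma$ is the unit of the monad.

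For the variable rule, the projection selecting a free variable from the longer context factors through the shorter-context projection via $\Gamma'(pr)$ in part 1; for part 2 one uses that $\eta$ is a natural transformation from the identity, so $\Gamma'(\eta)$ is compatible with any lookup from $\Gamma$ or $\Gamma'$. The $\lambda$-abstraction and application cases follow directly from the induction hypothesis together with the fact that $\Gamma'$, applied as a functor to a natural transformation, commutes with products and exponentials in the cartesian closed structure. The $\shut$ case extends $\Gamma'$ to $\Gamma', \lock$, which wraps its context functor with an outer $\Dia$; the induction hypothesis transports through the adjunction transpose by naturality. The $\open$ case uses the naturality of the lock-replacement map $l_{\Gamma''}$: post-composition with $\Gamma''(pr)$ (or $\Gamma''(\eta)$) can be slid past $l_{\Gamma''}$ and absorbed onto the premise, where the induction hypothesis applies.

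The main (and essentially only) subtlety lies in the $\open$ case of part 2, when the $\open$ rule in the shorter derivation picks a lock sitting inside $\Gamma$: the extended derivation can no longer pick that same lock, because the weakening to its right would then contain the newly-inserted $\lock$, so it must pick the fresh $\lock$ instead, yielding a potentially different lock-replacement composite. Lemma~\ref{lem:lock_over} is precisely the coherence statement needed to identify these two composites modulo the required $\Gamma'(\eta)$, and the induction closes there as well. All remaining cases are routine diagrammatic chases, which is why the paper describes the overall argument as easy.
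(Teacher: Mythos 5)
Your overall route is the paper's: induction on the derivation of $t$, with the routine cases ($\lambda$, application, $\shut$, variables) handled by functoriality and naturality, the $\open$ case split according to whether the inserted context entry lands in the premise context (slide past the lock-replacement map by naturality and absorb into the premise, where the induction hypothesis applies) or in the weakening. Your appeal to Lemma~\ref{lem:lock_over} in the part-2 weakening subcase is legitimate and essentially equivalent to what the paper does: instantiating that lemma with $\Gamma_3=\cdot$ and using $l_{\cdot}=\eta$ yields exactly $l_{\Gamma',\Gamma''}=l_{\Gamma',\sslock,\Gamma''}\circ\Gamma''(\eta)$, which is the identity the paper instead proves by a direct induction on $\Gamma''$ (with the same base case: naturality of $\eta$ plus the monad laws). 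One minor misdescription: the IS4 $\open$ rule does not ``pick a lock''---there is no lock in its premise at all---so the correct framing of the subtlety is simply that the fresh $\lock$ falls inside the weakening $\Gamma'$ of the $\open$, changing the lock-replacement composite; the identity you reach is nevertheless the right one.

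There is, however, a genuine (if small) gap: you claim the $\open$ case of part 2 is ``essentially the only'' subtlety, but part 1 has the mirror-image case, where the fresh variable $x:A$ is introduced as part of the weakening of an $\open$. There your stated mechanism fails: $\Gamma''(pr)$ now reshapes the weakening itself rather than acting inside the premise context, so it cannot be slid past $l_{\Gamma''}$ by naturality and absorbed onto the premise (the premise is unchanged); and Lemma~\ref{lem:lock_over} does not help either, since it only concerns replacing context segments by locks, not deleting a variable entry. What is needed is the separate identity $l_{\Gamma',x:A,\Gamma''}=l_{\Gamma',\Gamma''}\circ\Gamma''(pr)$, proved by its own easy induction on $\Gamma''$ (base case: $l_{\Gamma',A}=l_{\Gamma'}\circ pr$ is the definition of lock replacement at a variable entry)---exactly the point the paper singles out in its appendix notes on this lemma. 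The omission is easily repaired, but as written your proof does not cover this case.
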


The technical lemma below is the only place where
idempotence is used.

\begin{lemma}\label{lem:strength_and_l}
Given $\Gamma,\Gamma'\vdash t:A$ with $\Gamma'$ not free in $t$, we have
\[\xymatrix{
  \Gamma,\Gamma' \ar[r]^-{t} \ar[d]_{l_{\Gamma'}} & A \ar[d]^{\eta} \\
  \Gamma,\lock \ar[r]_-{\Dia t} & \Dia A
}\]
where $t$ on the bottom line is the original arrow with $\Gamma'$ strengthened
away.
\end{lemma}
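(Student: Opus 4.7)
The plan is to argue by induction on the shape of $\Gamma'$. Write $\hat{t}_{\Gamma'}:\den{\Gamma,\Gamma'}\to A$ for the denotation of $\Gamma,\Gamma'\vdash t:A$ and $t_s:\den{\Gamma}\to A$ for the strengthened-away arrow, so that the square asserts $\eta_A\circ\hat{t}_{\Gamma'}=\Dia t_s\circ l_{\Gamma'}$.

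For the base case $\Gamma'=\cdot$, I have $l_\cdot=\eta_{\den{\Gamma}}$ and $\hat{t}_\cdot=t_s$, so the required equation is just naturality of $\eta$ at $t_s$. For the variable case $\Gamma'=\Gamma'',x:B$, the definition gives $l_{\Gamma'',x:B}=l_{\Gamma''}\circ\pi$ with $\pi$ the projection forgetting $B$, while the hypothesis that $x$ is not free in $t$ together with Lemma~\ref{lem:strength}(1) gives $\hat{t}_{\Gamma'',x:B}=\hat{t}_{\Gamma''}\circ\pi$; cancelling $\pi$ on the right reduces the claim to the inductive hypothesis for $\Gamma''$.

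The crucial case is $\Gamma'=\Gamma'',\lock$, where $l_{\Gamma'',\lock}=\mu\circ\Dia l_{\Gamma''}$. On the right, I rewrite $\Dia t_s\circ\mu\circ\Dia l_{\Gamma''}$ using naturality of $\mu$ as $\mu\circ\Dia(\Dia t_s\circ l_{\Gamma''})$, apply the inductive hypothesis inside the $\Dia$ to obtain $\mu\circ\Dia\eta_A\circ\Dia\hat{t}_{\Gamma''}$, and collapse the first two factors via the monad unit law $\mu\circ\Dia\eta=\mathrm{id}$, arriving at $\Dia\hat{t}_{\Gamma''}$. On the left, naturality of $\eta$ turns $\eta_A\circ\hat{t}_{\Gamma'',\lock}$ into $\Dia\hat{t}_{\Gamma'',\lock}\circ\eta_{\Dia\Gamma''(\den{\Gamma})}$, while Lemma~\ref{lem:strength}(2), applied with empty trailing context, lets me rewrite $\hat{t}_{\Gamma''}$ as $\hat{t}_{\Gamma'',\lock}\circ\eta_{\Gamma''(\den{\Gamma})}$ and hence $\Dia\hat{t}_{\Gamma''}$ as $\Dia\hat{t}_{\Gamma'',\lock}\circ\Dia\eta_{\Gamma''(\den{\Gamma})}$. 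The two sides therefore coincide exactly when $\eta_{\Dia X}=\Dia\eta_X$, which is the stated idempotence condition on the comonad.

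The main obstacle is this lock case: pinning down which combination of unit/multiplication identities and strengthening lemmas to invoke, and then recognising that the residual discrepancy between $\eta_{\Dia X}$ and $\Dia\eta_X$ is precisely the characterisation of an idempotent monad, is the crux. The variable case and the bookkeeping with Lemma~\ref{lem:strength} are routine, and idempotence plays no role there, matching the author's remark that this lemma is the only place where it is needed.
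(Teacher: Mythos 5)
Your proof is correct and takes essentially the same route as the paper's: induction on $\Gamma'$, with the base case by naturality of $\eta$, the variable case by Lemma~\ref{lem:strength}(1), and the lock case assembled from $\mu$-naturality, the unit law $\mu\circ\Dia\eta=\mathrm{id}$, Lemma~\ref{lem:strength}(2) with empty trailing context, and idempotence $\eta_{\Dia X}=\Dia\eta_X$ --- the paper merely chains these in a single pass (applying idempotence to the LHS before strengthening the lock away) where you meet in the middle at $\Dia\den{\Gamma,\Gamma''\vdash t:A}$, and it likewise isolates idempotence to exactly this case. One cosmetic point: in the variable case there is no need to ``cancel'' $\pi$ (which need not be epi); composing the inductive hypothesis with $\pi$ already gives the required equality directly.
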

\begin{proof}
By induction on $\Gamma'$. The base case holds by the naturality of $\eta$.

We present only the lock case: $\eta\circ t=\Dia t\circ \eta$ by the naturality of
$\eta$. But by
\textbf{idempotence}, $\eta:\Gamma,\Gamma',\lock\to\Gamma,\Gamma',\lock,\lock$
equals
$\Dia\eta$. Then by Lemma~\ref{lem:strength} $\Dia t\circ\Dia\eta$ is
$\Dia\den{\Gamma,\Gamma'\vdash t:A}$, i.e. we have strengthened the lock away
and can hence use our induction hypothesis, making the top trapezium commute in:
\[\xymatrix{
  \Gamma,\Gamma',\lock \ar[rrr]^(0.57){\Dia t} \ar[dr]_-{\Dia l_{\Gamma'}} \ar[dd]_{l_{\Gamma',\tlock}} & & & \Dia A \ar[dl]^-{\Dia\eta} \ar[dd]^{id} \\
  & \Gamma,\lock,\lock \ar[r]^-{\Dia\Dia t} \ar[dl]_-{\mu} & \Dia\Dia A \ar[dr]^-{\mu} \\
  \Dia{\Gamma} \ar[rrr]_(0.57){\Dia t} & & & \Dia A
}\]
The left triangle commutes by definition, the bottom trapezium commutes by the
naturality of $\mu$, and the right triangle commutes by the monad laws.
\end{proof}

\begin{lemma}\label{lem:strength_and_l2}
Given $\Gamma,\Gamma'\vdash t:A$ with $\Gamma'$ not free in $t$, we have
\[\xymatrix{
  \Gamma,\Gamma',\Gamma'' \ar[r]^-{l_{\Gamma''}} \ar[d]_{l_{\Gamma',\Gamma''}} & \Gamma,\Gamma',\lock \ar[d]^{\Dia t} \\
  \Gamma,\lock \ar[r]_-{\Dia t} & \Dia A
}\]
where the bottom $t$ is obtained via strengthening.
\end{lemma}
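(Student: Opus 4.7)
The plan is to prove this by induction on $\Gamma''$, leveraging Lemma~\ref{lem:strength_and_l} in the base case and otherwise just unfolding the definition of $l_{-}$ and using naturality. The heavy lifting involving idempotence has already been done in Lemma~\ref{lem:strength_and_l}, so here I expect the argument to be essentially mechanical.

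For the base case $\Gamma'' = \cdot$, we have $l_{\cdot} = \eta$, so the top composite reduces to $\Dia t\circ\eta_{\Gamma,\Gamma'}$, which by naturality of $\eta$ equals $\eta_A\circ t$ (with $t$ the unstrengthened version). The left-bottom composite is $\Dia t'\circ l_{\Gamma'}$ where $t'$ is the strengthened arrow, and Lemma~\ref{lem:strength_and_l} gives exactly $\eta_A\circ t = \Dia t'\circ l_{\Gamma'}$. So the base case is immediate.

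For the inductive step, I would handle the two extensions of $\Gamma''$ separately. If $\Gamma'' = \Gamma_0'', y:B$, then both $l_{\Gamma_0'',y:B}$ and $l_{\Gamma',\Gamma_0'',y:B}$ are obtained by precomposing the corresponding maps for $\Gamma_0''$ with the appropriate projection, so the square collapses to the square for $\Gamma_0''$ preceded by a common projection, and the induction hypothesis applies. If $\Gamma'' = \Gamma_0'',\lock$, then $l_{\Gamma_0'',\sslock} = \mu\circ \Dia l_{\Gamma_0''}$ and similarly for $l_{\Gamma',\Gamma_0'',\sslock}$. Using naturality of $\mu$ to slide $\Dia t$ past $\mu$, both sides become $\mu$ composed with $\Dia$ applied to the two sides of the induction hypothesis square for $\Gamma_0''$, so functoriality of $\Dia$ closes the argument.

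The main obstacle I anticipate is purely bookkeeping: keeping straight which copy of $t$ (unstrengthened, living on $\Gamma,\Gamma'$, versus strengthened, living on $\Gamma$) appears at each stage, since both sides of the square mix them. Beyond that, no new structural properties of $\Box$ or $\Dia$ are needed, and in particular idempotence is not invoked again at this level.
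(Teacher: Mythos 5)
Your proof is correct and is essentially the paper's own argument: induction on $\Gamma''$, with the base case reduced (via naturality of $\eta$) to Lemma~\ref{lem:strength_and_l}, the variable case handled by factoring both lock-replacement maps through a common projection, and the lock case handled by unfolding $l_{-,\sslock}=\mu\circ\Dia l_{-}$ and sliding $\Dia t$ past $\mu$ by naturality before applying $\Dia$ to the induction-hypothesis square. You are also right that idempotence is used only in Lemma~\ref{lem:strength_and_l} and is not invoked again at this level.
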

\begin{proof}
By induction on $\Gamma''$. The base case follows by
Lemma~\ref{lem:strength_and_l}.
\end{proof}

\begin{lemma}\label{lem:coherence}
Given $\Gamma,\Gamma'\vdash t:\Box\,A$ with the variables of $\Gamma'$ not free in
$t$, the following arrows are equal:
\begin{itemize}
\item $\Gamma,\Gamma',\Gamma''\vdash\open\,t:A$ where the weakening is
$\Gamma''$;
\item obtaining an arrow $\Gamma\to\Box A$ via Lemma~\ref{lem:strength}, then
applying $\open$ with weakening $\Gamma,\Gamma''$.
\end{itemize}
\end{lemma}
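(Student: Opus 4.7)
The plan is to unfold the two denotations using the IS4 semantics of $\open$ from Section~\ref{subsec:IKcat} (as adapted in Section~\ref{sec:S4}), and then to match them via Lemma~\ref{lem:strength_and_l2}. Recall that the $\open$ rule interprets a premise $\Delta\vdash s:\Box A$ with weakening $\Theta$ as the composite $\varepsilon^m_A\circ\Dia s\circ l_{\Theta}$, where $\varepsilon^m$ is the counit of $\Dia\dashv\Box$ and $l_{\Theta}$ is the lock-replacement transformation.

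First I would read off each side explicitly. Applying the $\open$ rule to the premise $\Gamma,\Gamma'\vdash t:\Box A$ with weakening $\Gamma''$, the first arrow is $\varepsilon^m_A\circ\Dia t\circ l_{\Gamma''}$. For the second, write $t':\Gamma\to\Box A$ for the strengthened arrow produced by Lemma~\ref{lem:strength}; then applying $\open$ to $\Gamma\vdash t':\Box A$ with weakening $\Gamma',\Gamma''$ gives the arrow $\varepsilon^m_A\circ\Dia t'\circ l_{\Gamma',\Gamma''}$.

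The heart of the proof is then an immediate appeal to Lemma~\ref{lem:strength_and_l2}, instantiated at the premise $\Gamma,\Gamma'\vdash t:\Box A$ and the contexts $\Gamma'$, $\Gamma''$. Its commuting square is precisely the equation $\Dia t\circ l_{\Gamma''}=\Dia t'\circ l_{\Gamma',\Gamma''}$, since the bottom horizontal of the square is by definition the strengthened $t'$. Post-composing both sides with $\varepsilon^m_A$ yields the desired equality of denotations.

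I expect no real obstacle here: the substantive work, including the sole appeal to idempotence of $\Box$, has already been discharged in Lemma~\ref{lem:strength_and_l} and packaged by induction on $\Gamma''$ into Lemma~\ref{lem:strength_and_l2}. The only things to verify in the present lemma are that the interpretation of $\open$ really factors as $\varepsilon^m_A\circ\Dia(-)\circ l_{(-)}$, and that the notion of strengthening used here agrees with the one along the bottom of the square in Lemma~\ref{lem:strength_and_l2}; both are immediate from the definitions.
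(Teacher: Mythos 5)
Your proof is correct and is essentially identical to the paper's: the paper also proves this lemma by instantiating the square of Lemma~\ref{lem:strength_and_l2} at $t:\den{\Gamma,\Gamma'}\to\Box A$ and post-composing with $\varepsilon^{m}$, exactly as you do. You also (rightly) read the weakening in the second bullet as $\Gamma',\Gamma''$, which is what the paper's own diagram uses, the $\Gamma,\Gamma''$ in the statement being an apparent typo.
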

\begin{proof}
Immediate from Lemma~\ref{lem:strength_and_l2}, i.e.
\[\xymatrix{
  \Gamma,\Gamma',\Gamma'' \ar[r]^-{l_{\Gamma''}} \ar[d]_{l_{\Gamma',\Gamma''}} & \Gamma,\Gamma',\lock \ar[d]^{\Dia t} \\
  \Gamma,\lock \ar[r]_-{\Dia t} & \Dia \Box A \ar[r]_-{\varepsilon^{m}} & A
}\]
\end{proof}

\begin{theorem}[Coherence]\label{thm:cohere}
Given two different derivation trees of a term, their denotation is equal.
\end{theorem}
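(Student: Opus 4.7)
The plan is to proceed by induction on the structure of the term $t$. For every term-former other than $\open$, the rule applied at the root of a derivation of $\Gamma\vdash t:A$ is uniquely determined by $t$, and the premise contexts are forced by $\Gamma$ and the shape of $t$: $\shut\,s$ requires the premise $\Gamma,\lock\vdash s:A$; $\lambda x.s$ requires $\Gamma, x:A\vdash s:B$; the variable rule is pinned down by the position of $x$; application and the product/sum rules are similarly rigid. In each of these cases two derivations can differ only in their subderivations, so the induction hypothesis gives equality of denotations immediately.

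The only subtle case is $t=\open\,s$, where a derivation of $\Gamma\vdash\open\,s:A$ involves a choice of splitting $\Gamma=\Gamma_1,\Gamma''_1$ with $\lock\notin\Gamma''_1$, together with a subderivation of $\Gamma_1\vdash s:\Box A$. Given two such splittings $(\Gamma_1,\Gamma''_1)$ and $(\Gamma_2,\Gamma''_2)$, I would let $\Gamma_0$ be the shortest prefix of $\Gamma$ that already suffices to type $s$ --- it contains every variable free in $s$ and ends just after the rightmost lock used by $s$. The constraint $\lock\notin\Gamma''_i$ implies that each $\Gamma_i$ extends $\Gamma_0$ only by material irrelevant to $s$, so by Lemma~\ref{lem:strength} each subderivation can be strengthened to a canonical arrow $\Gamma_0\to\Box A$, and these canonical arrows agree by the induction hypothesis applied to $s$.

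Lemma~\ref{lem:coherence} then provides the closing move: applied to each splitting, it equates the $\open$-denotation with the one whose premise is $\Gamma_0\vdash s:\Box A$ and whose weakening absorbs everything else in $\Gamma$. Since both splittings reduce to this common canonical form, the two denotations coincide. The main obstacle I expect is not the induction itself but the bookkeeping needed to relate $\Gamma_1$ and $\Gamma_2$: one must verify that they share a minimal common prefix valid for $s$, and that Lemma~\ref{lem:coherence} applies cleanly on each side to reach it. This is ultimately where the idempotence of $\Box$, hidden inside Lemma~\ref{lem:strength_and_l} and hence inside Lemma~\ref{lem:coherence}, does the essential work: without it, strengthening past intermediate locks would not produce equal denotations, and the whole reduction to the canonical form would break down.
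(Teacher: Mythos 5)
Your proof is essentially the paper's own argument: induct over derivations, observe that only the $\open$ rule admits genuinely different premises, strengthen via Lemma~\ref{lem:strength}, equate the strengthened premises by the induction hypothesis, and close with Lemma~\ref{lem:coherence}. Two small remarks. First, your detour through a minimal prefix $\Gamma_0$ is sound but unnecessary: since both premise contexts are prefixes of the same conclusion context, one is automatically a prefix of the other, so the paper simply writes the two splittings as $\Gamma$ and $\Gamma,\Gamma'$, strengthens the longer down to the shorter, and applies Lemma~\ref{lem:coherence} once rather than twice; your ``bookkeeping'' worry about relating $\Gamma_1$ and $\Gamma_2$ dissolves under this observation. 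Second, you state the IS4 $\open$ rule with the side condition $\lock\notin\Gamma''_i$ --- but that condition belongs to the IK rule and is precisely what the IS4 rule \emph{drops}; if it held, the weakening would be lock-free, idempotence would never be invoked, and coherence would be nearly trivial, contradicting your own (correct) closing remark that idempotence is needed to strengthen past locks. Fortunately the only consequence you draw from the spurious condition --- that each $\Gamma_i$ extends $\Gamma_0$ by material whose variables are not free in $s$ --- follows instead from the derivability of $\Gamma_0\vdash s:\Box A$ together with the disjointness of variables in a context, so the argument survives with that clause deleted.
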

\begin{proof}
By induction on the number of nodes in the trees. The base case
with one node is trivial.
Suppose we have $n+1$ nodes. Then the induction hypothesis immediately completes
the proof unless the nodes above the roots are non-equal. Then the final construction
must be an instance of $\open$, i.e. we have
\begin{mathpar}
  \inferrule*{
    \Gamma\vdash t:\Box\,A}{
    \Gamma,\Gamma',\Gamma''\vdash \open\,t:A}
  \and
  \inferrule*{
      \Gamma,\Gamma'\vdash t:\Box\,A}{
      \Gamma,\Gamma',\Gamma''\vdash \open\,t:A}
\end{mathpar}
Clearly any variables in $\Gamma'$ are not free in $t$, so we can use
Lemma~\ref{lem:strength} on the top line of the right hand tree to derive
$\Gamma\vdash t:\Box\,A$. By induction hypothesis this has the same
denotation as the top line of the left hand tree. But Lemma~\ref{lem:coherence}
tells us that applying this strengthening and then opening with $\Gamma',\Gamma''$ is
the same as opening with $\Gamma''$ only.
\end{proof}

We can now demonstrate the soundness of $\eta$-equivalence:
given $\Gamma\vdash t:\Box A$ and $\Gamma\vdash\shut\,\open\,t:\Box A$ by any
derivations, we can by coherence safely assume that $\open$ used one lock only as
its weakening, and so the arrows are equal by the $\Dia\dashv\Box$ adjunction.

\subsection{Left Adjoints and Categorical Completeness}

Following Section~\ref{sec:left} we can add $\Dia$ to the type theory; we need only
modify the $\dia$ rule to
\[
    \inferrule*{
    \Gamma \vdash t: A}{
    \Gamma,\Gamma' \vdash \dia\,t: \Dia A}
\]
to retain Lemma~\ref{lem:lockrep}. The results of the previous sections, apart once
more for the subformula property, still hold, where we define the
denotation of $\Gamma,\Gamma'\vdash\dia\,t$ as $\Dia t$ composed
with $l_{\Gamma'}$. In particular, we must confirm that
Lemma~\ref{lem:compl_lem} extends to the new definitions of $\open$ and $\dia$,
for which we need the lemma below:

\begin{lemma}\label{lem:lockrep_and_ctxterm}
Given the term $x:\den{\Gamma,\Gamma'}\vdash l_{\Gamma'}:\Dia\den{\Gamma}$
defined in the term model, $l_{\Gamma'}[c_{\Gamma,\Gamma'}/x]$ is definitionally
equal to $\dia\,c_{\Gamma}$.
\end{lemma}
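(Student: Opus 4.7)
The plan is to proceed by induction on $\Gamma'$, mirroring the three defining clauses of the lock replacement arrow and checking at each step that substituting the enlarged context term, combined with the appropriate $\beta$-equivalences and (in the lock case) the monad laws, yields $\dia\,c_\Gamma$.

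In the base case $\Gamma' = \cdot$, the term $l_\cdot$ is the monad unit $\eta_{\den{\Gamma}}$, realised in the term model as $x:\den{\Gamma}\vdash\dia\,x:\Dia\den{\Gamma}$, so substituting $c_\Gamma$ for $x$ immediately gives $\dia\,c_\Gamma$. For $\Gamma' = \Gamma'',y:A$, the term $l_{\Gamma'',y:A}$ is a projection followed by $l_{\Gamma''}$, and since $c_{\Gamma,\Gamma'',y:A} = \langle c_{\Gamma,\Gamma''},y\rangle$ the product $\beta$-rule collapses the projection, leaving $l_{\Gamma''}[c_{\Gamma,\Gamma''}/x]$, to which the inductive hypothesis applies.

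The lock case $\Gamma' = \Gamma'',\lock$ is the one that really needs the machinery. Here $l_{\Gamma'',\lock} = \mu\circ\Dia l_{\Gamma''}$, and we substitute $c_{\Gamma,\Gamma'',\lock} = \dia\,c_{\Gamma,\Gamma''}$. The functorial action of $\Dia$ on $l_{\Gamma''}$ is realised via monotonicity as a $\letdbare$-expression; applying it to the $\dia$-introduced argument fires the $\letdbare$ $\beta$-rule and reduces to $\dia\,(l_{\Gamma''}[c_{\Gamma,\Gamma''}/x])$, which the inductive hypothesis equates to $\dia\,(\dia\,c_\Gamma)$. Composing on the outside with $\mu$ amounts to evaluating $\mu\circ\eta_{\Dia\den{\Gamma}}$ at $\dia\,c_\Gamma$, and the monad law $\mu\circ\eta = \mathrm{id}$ then delivers the final $\dia\,c_\Gamma$.

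The main obstacle I anticipate is being precise about the term-model representatives of $\mu$ and of the $\Dia$ action on arrows, so that the $\letdbare$ $\beta$-rule really fires on the monotonicity-expanded form and so that the final $\mu\circ\eta$ cancellation counts as a definitional equality rather than merely a semantic one. Both hold because the term model is a model in which these equations are sound; the idempotence of $\Box$ assumed throughout this section is what ensures the monad structure on $\Dia$ that we are invoking, and thereby quietly underwrites the concluding step.
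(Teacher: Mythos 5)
Your proof is correct and has the same skeleton as the paper's: induction on $\Gamma'$, with base case $l_{\cdot}=\dia\,x$ substituting to exactly $\dia\,c_{\Gamma}$, and the variable case collapsing the projection via the product $\beta$-rule before applying the induction hypothesis — both identical to the paper. The difference is in the lock case. The paper fuses $\mu\circ\Dia l_{\Gamma'}$ into the single term-model term $\letd{x}{x}{\open\,\letd{x}{l_{\Gamma'}}{\shut\,\dia\,x}}$ and finishes by pure computation: the outer $\letdbare$ fires on $\dia\,c_{\Gamma,\Gamma'}$, the induction hypothesis is applied inside, the inner $\letdbare$ fires on $\dia\,c_{\Gamma}$, and the resulting $\open\,\shut\,\dia\,c_{\Gamma}$ $\beta$-reduces to $\dia\,c_{\Gamma}$. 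You instead keep $\mu$ abstract, fire only the monotonicity $\letdbare$ to reach $\dia(l_{\Gamma'}[c_{\Gamma,\Gamma'}/x])$, and conclude by the monad unit law $\mu\circ\eta_{\Dia\den{\Gamma}}=id$. This is legitimate but carries a proof obligation the paper's route avoids: the unit law must hold \emph{definitionally} for the term-model representative of $\mu$. The paper verifies the comonad laws for $\Box$ in the term model ($\varepsilon\circ\delta=id$, $\Box\varepsilon\circ\delta=id$, and the naturalities) but never states the $\Dia$-monad laws directly; they do follow by the mate correspondence, whose ingredients (triangle equalities and naturality of $\eta^{m}$, $\varepsilon^{m}$, all verified definitionally), so your step goes through — though discharging it by direct term computation would essentially reproduce the paper's chain of reductions, which is why the paper's version is arguably more self-contained. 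One genuine slip in your closing remark: you attribute the monad structure on $\Dia$ to the idempotence of $\Box$. That is not right — $\Dia$ is a monad simply because $\Box$ is a comonad adjoint to it (the paper cites Eilenberg and Moore for this equivalence); idempotence is needed only for coherence and the soundness of $\eta$-equivalence, and in particular the unit law $\mu\circ\eta=id$ that you invoke holds without it.
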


Now $\den{\open\,t}[c_{\Gamma,\Gamma'}/x]$ is $\letd{x}{(\letd{x}{l_{\Gamma'}
[c_{\Gamma,\Gamma'}/x]}{\dia\den{t}})}{\open\,x}$, which by the lemma above is
$\letd{x}{(\letd{x}{\dia\,c_{\Gamma}}{\dia\den{t}})}{\open\,x}\mapsto$
$\open\den{t}[c_{\Gamma}/x]$, which equals $\open\,t$ by induction. The proof for
$\dia$ is similar.


\section{Intuitionistic R}\label{sec:IR}

One can readily imagine how the calculus for IS4 could be modified for logics with
only one of the T and 4 axioms. In this section we instead illustrate the flexibility of
Fitch-style calculi by defining a calculus for the rather
different logic Intuitionistic R (IR), which extends IK with the axiom

  R: $\;A\to\Box A$ \\
This axiom was first studied for
intuitionistic necessity modalities by Curry~\cite{Curry:Theory},
along with the axiom M, $\Box\Box A\to\Box A$, to develop a logic for monads. The
importance of the logic with R but without M was established by McBride and
Paterson~\cite{McBride:Applicative} who showed that it captured the useful
programming abstraction of \emph{applicative functors}. We take the name R
for the axiom from Fairtlough and Mendler~\cite{Fairtlough:Propositional}, and for
the logic from Litak~\cite{Litak:Constructive}.

We modify Figures~\ref{fig:IKtyping} and~\ref{fig:adjtyping} simply by
removing the side-conditions $\lock\notin\Gamma$ from the variable, $\open$, and
$\dia$ rules. We can then derive R:
\[
  \inferrule*{
    x:A,\lock\vdash x:A}{
    x:A\vdash \shut\,x:\Box A}
\]

For substitution and subject reduction we require the following lemma, easily proved
by induction on the derivation of $t$:

\begin{lemma}[Lock Weakening]\label{lem:tickweak}
If $\Gamma,\Gamma'\vdash t:A$ then $\Gamma,\lock,\Gamma'\vdash t:A$.
\end{lemma}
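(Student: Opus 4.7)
The plan is a straightforward induction on the derivation of $\Gamma,\Gamma'\vdash t:A$, splitting into cases by the last rule applied and exploiting exactly the side-conditions that were dropped from the variable, $\open$, and $\dia$ rules when moving from IK to IR.

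First I would handle the purely structural cases. For the variable rule, if $\Gamma,\Gamma'\vdash x:B$ is derived, then $x:B$ appears either in $\Gamma$ or in $\Gamma'$; in either situation the new context $\Gamma,\lock,\Gamma'$ still contains this assignment, and we can re-apply the variable rule. This is the one place where the removal of the $\lock\notin\Gamma'$ side-condition is essential: in IK inserting a lock between $x$ and its use would have been forbidden. The cases for $\lambda$-abstraction, application, and the product and sum term-formers go through by straightforward appeal to the induction hypothesis on the subterm(s), perhaps with a re-bracketing of $\Gamma'$ to account for whether a bound variable sits to the left or right of the newly inserted lock.

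Next I would treat the modal cases. For $\shut$, the premise has shape $\Gamma,\Gamma',\lock\vdash t:A$; applying the induction hypothesis with $\Gamma'$ replaced by $\Gamma',\lock$ gives $\Gamma,\lock,\Gamma',\lock\vdash t:A$, and one more application of $\shut$ yields the desired conclusion. For $\open$, the last step has shape
\[
  \inferrule*{\Gamma_0\vdash u:\Box A}{\Gamma_0,\lock,\Gamma_1\vdash \open\,u:A}
\]
with $\Gamma,\Gamma'=\Gamma_0,\lock,\Gamma_1$. The inserted lock falls either inside $\Gamma_0$ (in which case the induction hypothesis applied to $u$ places the lock into $\Gamma_0$ and we re-apply $\open$), or it coincides with the distinguished lock, or it falls inside $\Gamma_1$ (in which case $u$ is unchanged and we simply re-apply $\open$ with a longer weakening context). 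The $\dia$ rule is analogous, again relying on the fact that the side-condition $\lock\notin\Gamma'$ has been discarded. The $\letdbare$ rule applies the induction hypothesis to each premise, since lock insertion into $\Gamma$ does not disturb the isolated premise $x:A,\lock\vdash u:B$.

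The only genuine obstacle is bookkeeping: one has to be careful, in each case involving a modal rule, to track exactly where in the context the new lock is being inserted relative to the lock(s) already present in the derivation. The removal of the IK side-conditions in Section~\ref{sec:IR} is precisely what makes each re-application of a rule legal, so no further syntactic machinery is needed.
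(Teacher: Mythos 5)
Your proof is correct and matches the paper's approach exactly: the paper proves Lemma~\ref{lem:tickweak} by precisely this ``easy induction on the derivation of $t$'', merely without spelling out the cases. Your case analysis for $\open$ and $\dia$ (splitting on where the inserted lock falls relative to the rule's distinguished lock) and your observation that the dropped side-conditions $\lock\notin\Gamma'$ are what legitimise each re-application fill in the details the paper leaves implicit, and they are all sound.
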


We can also observe that $\eta$-equivalence preserves types in both directions.

We give semantics for this calculus in a cartesian closed category equipped with an
adjunction of endofunctors $\Dia\dashv\Box$ and a `point' natural transformation
$r:Id\to\Box$ preserved by $\Box$, i.e. $\Box r=r:\Box A\to\Box\Box A$. This last
property makes this model slightly less general than the notion of tensorial strength
used for categorical semantics by McBride and
Paterson~\cite{McBride:Applicative}, but is needed for coherence and the
soundness of $\eta$-equivalence. We will use the arrow $\Dia A\to A$ defined
by applying the adjunction to $r$; we call this $q$ and note the property:

\begin{lemma}\label{q_preserved}
$q=\Dia q:\Dia\Dia A\to\Dia A$.
\end{lemma}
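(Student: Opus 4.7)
The plan is to prove $q_{\Dia A}=\Dia q_A$ by transposing both sides under the $\Dia\dashv\Box$ adjunction and showing the two resulting morphisms $\Dia A\to\Box\Dia A$ agree. Since $q$ is by definition the mate of $r$, the characterising equation $\Box q_X\circ\eta^m_X=r_X$ holds for every $X$, and it is this alternative description of $q$ that I will exploit rather than the explicit formula $q_X=\varepsilon^m_X\circ\Dia r_X$.

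First I compute the transpose of $q_{\Dia A}$: by the characterising equation at $X=\Dia A$ it is simply $r_{\Dia A}$. Next I compute the transpose of $\Dia q_A$, namely $\Box\Dia q_A\circ\eta^m_{\Dia A}$; by naturality of $\eta^m$ applied to $q_A:\Dia A\to A$ this rewrites as $\eta^m_A\circ q_A$. Hence the lemma reduces to the single equality $r_{\Dia A}=\eta^m_A\circ q_A$ in $\Cat(\Dia A,\Box\Dia A)$.

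To establish this identity I would apply naturality of $r$ at the morphism $\eta^m_A:A\to\Box\Dia A$, obtaining $\Box\eta^m_A\circ r_A=r_{\Box\Dia A}\circ\eta^m_A$. The hypothesis that $\Box$ preserves $r$ rewrites the right-hand side, yielding $\Box\eta^m_A\circ r_A=\Box r_{\Dia A}\circ\eta^m_A$ in $\Cat(A,\Box\Box\Dia A)$. Taking transposes of both sides once more, the left side becomes $\eta^m_A\circ\varepsilon^m_A\circ\Dia r_A$ by naturality of $\varepsilon^m$ at $\eta^m_A$, which then collapses to $\eta^m_A\circ q_A$ by the definition of $q$; the right side becomes $r_{\Dia A}\circ\varepsilon^m_{\Dia A}\circ\Dia\eta^m_A$ by naturality of $\varepsilon^m$ at $r_{\Dia A}$, and this collapses to $r_{\Dia A}$ via the triangle identity $\varepsilon^m_{\Dia A}\circ\Dia\eta^m_A=\mathrm{id}$.

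The main obstacle is choosing to reformulate the desired equation on the $\Box$-side of the adjunction -- that is, working with morphisms into $\Box\Dia A$ rather than out of $\Dia\Dia A$ -- so that the hypothesis $\Box r=r\Box$, which is the only non-formal input, can be applied at exactly one step. All other steps are routine naturality squares for $\eta^m$, $\varepsilon^m$, and $r$, together with one use of a triangle identity, so once the reformulation is in place the proof is essentially mechanical.
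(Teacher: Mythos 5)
Your proposal is correct and takes essentially the same route as the paper: both arguments prove the lemma by showing that $q_{\Dia A}$ and $\Dia q_A$ have the same adjoint transpose $r_{\Dia A}$ under $\Dia\dashv\Box$, using the hypothesis $\Box r=r$ exactly once together with naturality of $r$, $\eta^m$, $\varepsilon^m$ and the triangle identity $\varepsilon^m\circ\Dia\eta^m=\mathrm{id}$. Your intermediate reformulation $r_{\Dia A}=\eta^m_A\circ q_A$ followed by a second transposition is just a rearrangement of the paper's diagram chase establishing $\Box\Dia q\circ\eta^m=r$.
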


The \emph{weakening} natural transformation $w_{\Gamma}:\Gamma\to Id$ is
defined by induction on $\Gamma$ via projection and $q$. Variables are then
denoted by projection composed with weakening, and weakening is
used similarly for $\open$ and $\dia$. We can hence show the soundness
of $\beta$-reduction for $\Box$ and $\Dia$.
For the soundness of $\eta$-equivalence for $\Box$ we need the following lemma:

\begin{lemma}\label{lem:weak_and_tick}
$w_{\Gamma',\sslock}=\Dia w_{\sslock,\Gamma'}:\Gamma,\lock,\Gamma',\lock\to
\Gamma,\lock$.
\end{lemma}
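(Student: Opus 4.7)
The plan is induction on the structure of $\Gamma'$, unfolding the defining clauses of $w$, namely $w_{\Delta,x:A} = w_{\Delta}\circ pr$ and $w_{\Delta,\sslock} = q\circ \Dia w_{\Delta}$. Both sides of the equation are arrows $\den{\Gamma,\lock,\Gamma',\lock}\to\den{\Gamma,\lock}$: the left arises from the natural transformation $w_{\Gamma',\sslock}$ evaluated at $\den{\Gamma,\lock} = \Dia\den{\Gamma}$, while the right arises from $w_{\sslock,\Gamma'}$ evaluated at $\den{\Gamma}$ and then $\Dia$-applied.

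The base case $\Gamma' = \cdot$ unfolds both sides to $q$ and $\Dia q$ respectively on the object $\Dia\Dia\den{\Gamma}$, which are equal by Lemma~\ref{q_preserved}. For $\Gamma' = \Gamma'',x:A$, both sides split off a common $\Dia pr$ in the variable slot, reducing the claim to the induction hypothesis on $\Gamma''$. For $\Gamma' = \Gamma'',\sslock$, expanding the outer $w_{-,\sslock}$ on the left, and on the right pushing $\Dia$ through the composite, yields $q \circ \Dia\Dia w_{\sslock,\Gamma''}$ and $\Dia q\circ \Dia\Dia w_{\sslock,\Gamma''}$ respectively, after using the induction hypothesis once under $\Dia$ to identify the tails; these then agree by Lemma~\ref{q_preserved} at the object $\Dia\den{\Gamma''}(\Dia\den{\Gamma})$.

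The main obstacle is bookkeeping rather than any substantive idea: one must keep track of which natural transformation is evaluated at which object, since the two sides have the same signature but are built from different instances of the same defining clauses, with an extra $\Dia$ applied externally on the right and extra $\Dia$s buried inside the functorial interpretation of $\sslock,\Gamma'$ on that side. The essential algebraic ingredient throughout is the preservation identity $q = \Dia q$ supplied by Lemma~\ref{q_preserved}, invoked exactly in those inductive steps that add a lock to $\Gamma'$ (including the base case, which one can think of as adding the single outermost lock).
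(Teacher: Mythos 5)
Your proof is correct and is essentially the paper's own argument: induction on $\Gamma'$, with the empty case being exactly Lemma~\ref{q_preserved}, the variable case peeling off $\Dia pr$ and invoking the induction hypothesis (the paper routes this step through the naturality of $q$, which is equivalent to your chosen orientation of the clause $w_{\Delta,\sslock} = q\circ\Dia w_{\Delta}$ versus the paper's $w_{\Delta}\circ q$), and the lock case applying the induction hypothesis under $\Dia$ and then $q=\Dia q$. One bookkeeping quibble: in your post-composed lock-case composites $q\circ\Dia\Dia w_{\sslock,\Gamma''}$ versus $\Dia q\circ\Dia\Dia w_{\sslock,\Gamma''}$, the instance of Lemma~\ref{q_preserved} actually needed is $q=\Dia q\colon\Dia\Dia\den{\Gamma}\to\Dia\den{\Gamma}$, i.e.\ at $\den{\Gamma}$, not at $\Dia\den{\Gamma''}(\Dia\den{\Gamma})$ --- the latter would be the right instance only for the pre-composed decomposition $\Dia w_{\sslock,\Gamma''}\circ q$.
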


The
denotation of $\Gamma,\lock,\Gamma'\vdash\shut\,\open\,t$ is $\Box\varepsilon^m
\circ\Box\Dia t\circ\Box w_{\Gamma',\sslock}\circ\eta^m$. By the above lemma
we replace $\Box w_{\Gamma',\sslock}$ with $\Box\Dia w_{\sslock,\Gamma'}$, so
by the naturality of $\eta^m$ we have $\Box\varepsilon^m\circ\eta^m\circ t\circ
w_{\sslock,\Gamma'}$, which is $t\circ w_{\sslock,\Gamma'}$ by the monad laws.

Moving to coherence, we conduct a similar induction to Theorem~\ref{thm:cohere},
considering the case
\begin{mathpar}
  \inferrule*{
    \Gamma\vdash t:\Box\,A}{
    \Gamma,\lock,\Gamma',\lock,\Gamma''\vdash \open\,t:A}
  \and
  \inferrule*{
      \Gamma,\lock,\Gamma'\vdash t:\Box\,A}{
      \Gamma,\lock,\Gamma',\lock,\Gamma''\vdash \open\,t:A}
\end{mathpar}
The top line on the left weakens to the top line on the right, with denotation
$t\circ w_{\sslock,\Gamma'}$. By induction this equals the denotation of the top line
of the right. Then the right hand term has denotation $\varepsilon^m\circ\Dia t
\circ\Dia w_{\sslock,\Gamma'}\circ w_{\Gamma''}$. But by
Lemma~\ref{lem:weak_and_tick} $\Dia w_{\sslock,\Gamma'}=
w_{\Gamma',\sslock}$.
It is clear that $w_{\Gamma',\sslock}\circ w_{\Gamma''}=
w_{\Gamma',\sslock,\Gamma''}$, which is exactly the weakening used on the left.
Coherence for $\dia$ follows similarly.

Moving finally to categorical completeness, in the term model $\Box t\circ r$ is
$\shut\,t[\open\,\shut\,x/x]$, which reduces to $\shut\,t$, so $r$ is natural.
$\Box r:\Box A\to\Box\Box A$ is $\shut\,\shut\,\open\,x$, which is indeed
$\eta$-equal to $\shut\,x$.

We finally need to update Lemma~\ref{lem:compl_lem} for our new definitions. We
do this via a lemma similar to Lemma~\ref{lem:lockrep_and_ctxterm}:

\begin{lemma}\label{lem:weak_and_ctx}
Given the term $x:\den{\Gamma,\Gamma'}\vdash w_{\Gamma'}:\den{\Gamma}$
defined in the term model, $w_{\Gamma'}[c_{\Gamma,\Gamma'}/x]$ is definitionally
equal to $c_{\Gamma}$.
\end{lemma}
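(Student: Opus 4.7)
The plan is to proceed by induction on $\Gamma'$, in exactly the same spirit as the proof of Lemma~\ref{lem:lockrep_and_ctxterm} in the IS4 section. The base case $\Gamma'=\cdot$ is immediate because $w_\cdot$ is the identity term $x$, so $w_\cdot[c_\Gamma/x]$ reduces to $c_\Gamma$ already.

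For $\Gamma'=\Gamma'',y:B$, the weakening $w_{\Gamma'',y:B}$ is constructed by composing $w_{\Gamma''}$ with a first-projection. Substituting $c_{\Gamma,\Gamma'',y:B}=\langle c_{\Gamma,\Gamma''},y\rangle$ fires a single product $\beta$-reduction to expose $c_{\Gamma,\Gamma''}$, at which point the induction hypothesis closes the case.

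The one genuinely new case is $\Gamma'=\Gamma'',\lock$, where $w_{\Gamma'',\sslock}$ is built from $w_{\Gamma''}$ and the arrow $q:\Dia A\to A$. In the term model, $q$ is realised by $x:\Dia A\vdash\letd{y}{x}{y}:A$, which is well-typed only because IR drops the side-condition $\lock\notin\Gamma'$ on variables, so that $y:A,\lock\vdash y:A$ is admissible. Since $c_{\Gamma,\Gamma'',\sslock}$ unfolds to $\dia\,c_{\Gamma,\Gamma''}$, after using the associativity equation from Section~\ref{sec:leftadjbasic} to fuse the nested $\letdbare$s arising from composing $w_{\Gamma''}$ and $q$, the substitution yields a subterm of the form $\letd{y}{\dia\,c_{\Gamma,\Gamma''}}{y}$. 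A single $\letdbare$--$\dia$ $\beta$-step reduces this to $c_{\Gamma,\Gamma''}$, leaving $w_{\Gamma''}[c_{\Gamma,\Gamma''}/x]$, which the induction hypothesis equates with $c_\Gamma$.

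The main obstacle is bookkeeping rather than anything conceptual: one must pin down the concrete term-model shape of $w_{\Gamma'}$ from the informal definition given in the paper, and then manoeuvre the lock case through the associativity equation so that a single $\Dia$ $\beta$-step exposes $c_{\Gamma,\Gamma''}$. Once the terms are lined up, no further modal reasoning is required, and the argument is just the product-level reduction plus one application of the $\letdbare$--$\dia$ rule.
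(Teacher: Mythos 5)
Your proof is correct and takes essentially the same route as the paper's: induction on $\Gamma'$ with the same term-model unfolding of $w_{\Gamma'}$ ($w_{\cdot}=x$, a first projection for variables, and the IR-typable term $\letd{x}{x}{x}$ realising $q$ for locks), a product $\beta$-step in the variable case, and a single $\letdbare$--$\dia$ $\beta$-step on $\letd{x}{\dia\,c_{\Gamma,\Gamma''}}{x}$ in the lock case, followed by the induction hypothesis. The only cosmetic difference is your appeal to associativity: because composition in the term model is substitution, the paper's definition $w_{\Gamma',\sslock}=w_{\Gamma'}[\letd{x}{x}{x}/x]$ makes the $\beta$-redex appear directly after substituting $c_{\Gamma,\Gamma',\sslock}=\dia\,c_{\Gamma,\Gamma'}$, so no fusing of nested $\letdbare$s is required (associativity is only implicitly used, if at all, in justifying that $\letd{x}{x}{x}$ is the term-model shape of $q$).
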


Now the denotation of $\Gamma,x:A,\Gamma'\vdash x:A$ is $\pi_2 w_{\Gamma'}$.
Therefore we have $\pi_2 w_{\Gamma'}[c_{\Gamma,A,\Gamma'}/x]$, which is
$\pi_2 c_{\Gamma,A}$ by the lemma above. This is $\pi_2\langle c_{\Gamma},x
\rangle$, which reduces to $x$.

The denotation of $\Gamma,\lock,\Gamma'\vdash\open\,t:A$ is
$\letd{x}{w_{\Gamma'}}{\open\den{t}}$. Applying the substitution
$[c_{\Gamma,\sslock,\Gamma'}/x]$ along with the lemma above yields the term
$\letd{x}{\dia\,c_{\Gamma}}{\open\den{t}}\mapsto\open\den{t}[c_{\Gamma}/x]$,
and induction completes. The calculations for $\dia$ follow similarly.


\section{Related and Further Work}\label{sec:concl}

\textbf{\emph{Conventional contexts.}} Lambda calculi with conventional contexts
containing typed variables only have been proposed for the logic of
monads~\cite{Moggi:Computational}, for IS4~\cite{Bierman:Intuitionistic}, for
IK~\cite{Bellin:Extended}, and for a logic with `L\"{o}b
induction'~\cite{Birkedal:Intensional}, from which one can extract a calculus
for IR. In previous work~\cite{Clouston:Guarded} we developed the \emph{guarded
lambda calculus} featuring two modalities, where one (`constant') was an
(idempotent) comonad,
and the other (`later') supported a notion of guarded recursion corresponding to
L\"{o}b induction. We therefore used the existing
work~\cite{Bierman:Intuitionistic,Birkedal:Intensional} `off the shelf'.

Problems arose when we attempted to extend our calculus with
dependent types~\cite{Bizjak:Guarded}. Neither of the calculi with conventional
contexts we had used scaled
well to this extension. The calculus for IS4~\cite{Bierman:Intuitionistic}, whose terms
involved explicit substitutions, turned out to require these substitutions on types also,
which added a level of complexity that made it difficult to write even quite
basic dependently typed programs. The constant modality was
therefore jettisoned in favour of an approach based on clock
quantification~\cite{Atkey:Productive}, of which more below. The calculus for
later employed a connective $\circledast$ (from McBride and
Patterson~\cite{McBride:Applicative}) which acted on
function spaces under the
modality. However with dependent types we need to act not merely on function
spaces, but on $\Pi$-types, and $\circledast$ was unable to be used. Instead a novel
notion of
`delayed substitution' was introduced. These were given an equational theory, but
some of
these equations could not be directed, so they did not give rise to a useful notion of
computation.

\textbf{\emph{Modalities as quantifiers.}} The suggestive but formally rather
underdeveloped paper of De Queiroz and Gabbay~\cite{deQueiroz:Functional}
proposed that necessity modalities should be treated as universal quantifiers, inspired
by the standard semantics of necessity as
`for all possible worlds'. This is one way to understand the relationship
between the constant modality and clock quantification~\cite{Atkey:Productive}.
However clock quantification is more general than a single
constant modality because we can identify multiple free clock variables with multiple
`dimensions' in which a type may or may not be constant. This gap in generality can
probably be bridged by using multiple independent constant modalities. More
problematically, while it is clear what the denotational semantics of the constant
modality are, the best model for clock quantifiers yet
found~\cite{Bizjak:Denotational} is
rather complicated and still leaves open some problems with coherence in the
presence of a universe.

\textbf{\emph{Previous Fitch-style calculi.}} The Fitch-style approach was pioneered,
apparently independently, by Martini and
Masini~\cite{Martini:Computational} and Borghuis~\cite{Borghuis:Coming}. Martini
and Masini's work is rather notationally heavy, and weakening appears not to be
admissible. Borghuis's calculus for IK is excellent, but his calculi for stronger logics
are not so compelling, as each different axiom is expressed with another version of
the $\open$ or $\shut$ rules, not all of which compute when combined. The
calculus for IS4 of Pfenning and Wong~\cite{Pfenning:Modal}, refined by
Davies and Pfenning~\cite[Section 4]{Davies:Modal}, provide the basis of
the IS4 calculus of this paper, but involve some complications which appear to
correlate to not assuming idempotence. We have extended this previous work by
investigating the subformula property, introducing categorical semantics, and showing
how left adjoints to necessity modalities \`{a} la tense logic can be used as types.
Finally, the recent clocked type theory of Bahr
et al.~\cite{Bahr:Clocks} independently gave a treatment of the later modality that on
inspection is precisely Fitch-style (albeit with named `locks'), and which has
better computational properties than the delayed substitution approach.

\textbf{\emph{Dual contexts.}} Davies and
Pfenning~\cite{Davies:Modal} use a pair of contexts $\Delta;
\Gamma$ with intended meaning $\Box\Delta\land\Gamma$. This is quite different
from the semantics of Fitch-style sequents, where structure in the context
denotes the \emph{left adjoint} of $\Box$. In recent work
Kavvos~\cite{Kavvos:Dual} has shown that dual contexts may capture a number of
different modal logics, and the approach has been used as a foundation for both
pen-and-paper mathematics~\cite{Shulman:Brouwer} and, via an Agda
fork~\cite{Vezzosi:Agda}, formalisation~\cite{Licata:Internal}.
We support this work but there is reason to explore other
options. First, writing programs with dual context calculi
was described by Davies and Pfenning themselves as
`somewhat awkward', and in the same paper they suggest the Fitch-style
approach as a less awkward alternative. Indeed, Fitch's approach was exactly designed to
capture `natural' modal deduction.
Second, any application with multiple interacting modalities is unlikely to
be accommodated in a mere two zones; the \emph{mode theories} of Licata
et al.~\cite{Licata:Fibrational} extend the dual zone approach to a richer setting in which
interacting modalities, substructural contexts, and even Fitch-style natural deduction can
be expressed%
\footnote{We are grateful to an anoymous reviewer for this last observation.}%
, but the increase in complexity is considerable and much work remains to be done.

\textbf{\emph{Further logics and algorithmic properties.}} We wish to bring more
logics into the Fitch-style framework, in particular the logic of the later modality,
extending IR with the strong L\"{o}b axiom
$(\Box A\to A)\to A$. The obvious treatment of this axiom does not terminate.
but Bahr et al.~\cite{Bahr:Clocks} suggest that this can be managed by giving
names to locks. We would further like to develop calculi
with multiple modalities. This is easy to do by assigning each modality
its own lock; two IK modalities give exactly the intuitionistic tense logic of
Gor{\'e} et al.~\cite{Gore:Cut}. The situation is rather more interesting where
the modalities interact, as with the later and constant modalities. Finally, we would like to
further investigate algorithmic properties of Fitch-style calculi such as type checking, type
inference, and $\eta$-expansion and other notions of computation. In particular, we
wonder if a notion of commuting conversion can be defined so that the calculi with
$\Dia$ enjoy the subformula property.


\newpage
\bibliographystyle{splncs03}
\bibliography{bibl}

\newpage
\appendix


\section{Intuitionistic K}

This appendix presents proof details for the theorems of Section~\ref{sec:IK}. We
omit routine proof details for products, and sometimes function spaces also, and we
delay discussion of sums until Appendix~\ref{app:sums}.

\subsection{Proof of Theorem~\ref{thm:IKlogsc} (Logical Soundness and Completeness)}

In this section we prove the soundness and completeness of the type system of
Figure~\ref{fig:IKtyping} (considered as a natural deduction system) with respect to
the logic $IK$. The typing rules for the connectives of intuitionistic logic are as usual
and so soundness and completeness for this fragment is clear.

For logical completeness we then need only show that the $K$ axiom is derivable,
which is done in Section~\ref{Sec:IKtype}, and that necessitation holds. For this
we need the lemma:

\begin{lemma}[Left Weakening]\label{lem:left_weak}
If $\Gamma'\vdash t:A$ then $\Gamma,\Gamma'\vdash t:A$.
\end{lemma}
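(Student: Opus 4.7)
The plan is to proceed by straightforward induction on the derivation of $\Gamma'\vdash t:A$, mirroring the pattern already used in Lemmas~\ref{lem:weak} and~\ref{lem:subs}. The essential observation is that every side-condition appearing in Figure~\ref{fig:IKtyping} only constrains the \emph{suffix} of the context (the portion sitting to the right of a distinguished variable or lock), and so adjoining an arbitrary $\Gamma$ on the far left cannot invalidate any of them.

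For the variable rule, an instance $\Gamma_0,x:A,\Gamma_1\vdash x:A$ with $\lock\notin\Gamma_1$ immediately re-derives as $\Gamma,\Gamma_0,x:A,\Gamma_1\vdash x:A$, because $\Gamma_1$ is unchanged. For abstraction, application, and $\shut$, the step is mechanical: apply the induction hypothesis to each premise and re-apply the rule. In the $\shut$ case, for example, $\Gamma',\lock\vdash t:B$ yields $\Gamma,\Gamma',\lock\vdash t:B$ by induction, after which $\shut$ gives $\Gamma,\Gamma'\vdash\shut\,t:\Box B$.

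The $\open$ case is the only one that warrants explicit comment. Its conclusion has the shape $\Gamma'_0,\lock,\Gamma''_1\vdash\open\,t:A$ with side-condition $\lock\notin\Gamma''_1$, arising from a premise $\Gamma'_0\vdash t:\Box A$. The induction hypothesis supplies $\Gamma,\Gamma'_0\vdash t:\Box A$, after which a single re-application of $\open$ with the same lock and the same $\Gamma''_1$ produces $\Gamma,\Gamma'_0,\lock,\Gamma''_1\vdash\open\,t:A$; the constraint $\lock\notin\Gamma''_1$ transfers across verbatim, since $\Gamma''_1$ is not touched.

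I expect no genuine obstacle: left weakening is structurally benign in this calculus precisely because locks act as barriers only against material to their \emph{right}, never against additional assumptions placed far away on the left. Once the lemma is in hand, necessitation follows at once: from a closed derivation $\cdot\vdash t:A$, left weakening with $\Gamma\defeq\lock$ gives $\lock\vdash t:A$, and the $\shut$ rule then concludes $\cdot\vdash\shut\,t:\Box A$.
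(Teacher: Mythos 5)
Your proof is correct and takes essentially the same route as the paper, which disposes of the lemma with ``easy induction on the derivation of $t$''; your case analysis, including the key observation that every side-condition in Figure~\ref{fig:IKtyping} constrains only the suffix of the context and is therefore undisturbed by material added on the far left, simply makes explicit the routine details the paper leaves implicit. Your closing remark on necessitation also matches the paper's own use of the lemma verbatim.
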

\begin{proof}
Easy induction on the derivation of $t$.
\end{proof}

If $A$ is a theorem, then by induction on the length of Hilbert derivations $\cdot\vdash
A$ is derivable in the type system, so, by the lemma above, $\lock\vdash A$, so by the
$\shut$ rule $\cdot\vdash\Box A$.

We then turn to soundness.

\begin{lemma}\label{lem:sound_lem}
$\den{\Gamma\vdash A\to B}\to\den{\Gamma\vdash A}\to\den{\Gamma\vdash B}$
is a theorem of $IK$.
\end{lemma}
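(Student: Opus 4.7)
The plan is to proceed by induction on the structure of the context $\Gamma$, using the inductive definition of $\den{\Gamma\vdash-}$ to peel one layer off at a time. At each step the outermost connective introduced by the translation is either an implication (variable case) or a $\Box$ (lock case), and in each case the inductive hypothesis together with a standard modal manipulation delivers the required theorem.

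For the base case $\Gamma=\cdot$, the statement reduces to $(A\to B)\to A\to B$, which is a tautology of intuitionistic propositional logic. For the variable case $\Gamma=C,\Gamma'$, the translation gives $\den{\Gamma\vdash A\to B}=C\to\den{\Gamma'\vdash A\to B}$, and similarly for $A$ and $B$. The inductive hypothesis is that $\den{\Gamma'\vdash A\to B}\to\den{\Gamma'\vdash A}\to\den{\Gamma'\vdash B}$ is a theorem; combining this with the intuitionistic tautology
\[
\bigl(P\to Q\to R\bigr)\to\bigl((C\to P)\to(C\to Q)\to(C\to R)\bigr)
\]
(essentially the $S$ combinator) gives the desired formula.

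For the lock case $\Gamma=\lock,\Gamma'$, the translation yields $\Box\den{\Gamma'\vdash A\to B}\to\Box\den{\Gamma'\vdash A}\to\Box\den{\Gamma'\vdash B}$. The inductive hypothesis gives us the unboxed statement $\den{\Gamma'\vdash A\to B}\to\den{\Gamma'\vdash A}\to\den{\Gamma'\vdash B}$ as a theorem. Applying necessitation puts a $\Box$ around it, and then two applications of the K axiom distribute the $\Box$ across the two implications, yielding precisely the formula we need.

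There is no real obstacle here: the argument is a routine structural induction, and the only slightly non-trivial observation is that the lock case is exactly where we exploit the fact that IK has both necessitation and the K axiom, which is of course the whole point of the translation. One might pedantically check that the applications of K and necessitation are legitimate inside the Hilbert system for IK, but this is immediate from the definition of that system given earlier.
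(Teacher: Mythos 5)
Your proof is correct and follows essentially the same route as the paper's: induction on the (leftmost element of the) context, with the variable case handled by an intuitionistic tautology (the paper phrases your $S$-combinator step as assuming the antecedents and applying Modus Ponens twice) and the lock case by necessitation on the induction hypothesis followed by distributing the $\Box$ via the K axiom. No gaps; your observation that K must be applied twice in the lock case is a slightly more explicit rendering of the paper's ``distribute the box through''.
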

\begin{proof}
By induction on $\Gamma$. The base case is trivial. The variable case asks that
\[
  (C\to\den{\Gamma\vdash A\to B})\to(C\to\den{\Gamma\vdash A})\to C\to
  \den{\Gamma\vdash B}.
\]
This follows by assuming all formulae to the left of implications, applying Modus
Ponens twice, and induction.

The lock case starts by applying necessitation to the induction hypothesis, then using
the K axiom to distribute the box through.
\end{proof}

\begin{lemma}
All typing rules are sound with respect to the formula translation. 
\end{lemma}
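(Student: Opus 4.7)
The plan is to proceed by case analysis on the five typing rules of Figure~\ref{fig:IKtyping}, showing in each case that the translation of the conclusion is a theorem of $IK$ whenever the translations of the premises are. Two easy unfoldings do much of the work. First, a short induction on $\Gamma$ shows that $\den{\Gamma,\lock \vdash A} = \den{\Gamma \vdash \Box A}$, which makes the $\shut$ rule trivially sound. Second, $\den{\Gamma, x:A \vdash t:B}$ and $\den{\Gamma \vdash \lambda x.t : A \to B}$ unfold to the same formula, so $\lambda$-abstraction is sound on the nose. For the application rule I would combine Lemma~\ref{lem:sound_lem} with two instances of Modus Ponens.

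For the variable rule with $\lock \notin \Gamma'$, the rightmost segment of the translation is a chain of implications in which $A$ appears both as an antecedent and as the final succedent, so standard intuitionistic projection produces a theorem at that depth. I would then build outwards through $\Gamma$, applying necessitation at each lock and weakening (prepending fresh antecedents) at each variable slot; this is legitimate precisely because the inner formula is already a theorem rather than a hypothesis.

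The $\open$ rule is the main obstacle. By the $\shut$-style identity together with the absence of locks in $\Gamma'$, the conclusion's translation $\den{\Gamma,\lock,\Gamma' \vdash A}$ equals $\den{\Gamma \vdash \Box\den{\Gamma' \vdash A}}$. The basic intuitionistic theorem $A \to \den{\Gamma' \vdash A}$ (iterated weakening through the implication chain coming from $\Gamma'$), combined with necessitation and the $K$ axiom, yields $\Box A \to \Box\den{\Gamma' \vdash A}$. To lift this across the outer $\Gamma$ I would prove a small auxiliary context-functoriality lemma by induction on $\Gamma$: whenever $C \to D$ is a theorem, so is $\den{\Gamma \vdash C} \to \den{\Gamma \vdash D}$, with the lock case using necessitation and $K$ and the variable case using plain intuitionistic reasoning. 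Combining this with the premise $\den{\Gamma \vdash \Box A}$ via Modus Ponens then delivers the required $\den{\Gamma \vdash \Box\den{\Gamma' \vdash A}}$.
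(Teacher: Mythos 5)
Your proposal is correct and takes essentially the same approach as the paper: identical translations make $\lambda$ and $\shut$ immediate, the variable rule uses the theorem $A\to B_1\to\cdots\to B_n\to A$ built outward through $\Gamma$ with necessitation at locks, and application and $\open$ rest on Lemma~\ref{lem:sound_lem} with Modus Ponens, with $\open$ exploiting the identity $\den{\Gamma,\lock,\Gamma'\vdash A}=\den{\Gamma\vdash\Box\den{\Gamma'\vdash A}}$ plus necessitation and K exactly as the paper does. Your auxiliary context-functoriality lemma (theoremhood of $C\to D$ implies theoremhood of $\den{\Gamma\vdash C}\to\den{\Gamma\vdash D}$) is merely a repackaging of the paper's step of lifting the theorem $\Box A\to\Box(B_1\to\cdots\to B_n\to A)$ into the context and then applying Lemma~\ref{lem:sound_lem}.
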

\begin{proof}
The $\lambda$ and $\shut$ rules are trivial because in each the formula translations
of the premise and conclusion are identical.

Variable rule: Let $\Gamma'$ (which contains no locks) be $B_1,\ldots,B_n$. Then
$A\to B_1\to\cdots\to B_n\to A$ is a theorem. We then construct the context
$\Gamma$ from the right by, for formulae $B$, observing that $B\to C$ is a theorem
for any theorem $C$, and, for locks, using necessitation.

Application follows by Lemma~\ref{lem:sound_lem} and Modus Ponens.

$\open$: note that $\Gamma,\lock,B_1,\ldots,B_n\vdash A$ has the same
interpretation as $\Gamma\vdash\Box(B_1\to\cdots\to B_n\to A)$. Now $A\to(B_1\to
\cdots\to B_n\to A)$ is a theorem; by necessitation and K, so is $\Box A\to\Box(B_1\to
\cdots\to B_n\to A)$. Then $\den{\Gamma\vdash\Box A\to\Box(B_1\to\cdots\to B_n\to
A)}$ is likewise. Lemma~\ref{lem:sound_lem} completes the proof.
\end{proof}

As a corollary, any type inhabited in the empty context, i.e. $\cdot\vdash A$, is indeed
a theorem of $IK$.

\subsection{Proof of Theorem~\ref{thm:sn} (Strong Normalisation)}\label{SN_app}

Strong normalisation could be proved in a number of ways; we choose Tait's method,
as presented for example by Girard et al.~\cite[Chapter 6]{Girard:Proofs}. We define
sets $\RED{A}$ of \emph{reducible} untyped terms by induction on the type $A$ by
taking the usual definitions, e.g.
\begin{itemize}
\item $t\in\RED{A\to B}$ if for all $u\in\RED{A}$, $t\,u\in\RED{B}$
\end{itemize}
and extending them with
\begin{itemize}
\item $t\in\RED{\Box A}$ if $\open\,t\in\RED{A}$.
\end{itemize}

A term is \emph{neutral} if it is a variable $x$, or if its outermost term-former is an
elimination, e.g. it has form $t\,u$ or $\open\,t$.

\begin{lemma}
All sets $\RED{A}$ obey the criteria
\begin{description}
\item[(CR1)] If $t\in\RED{A}$ then $t$ is normalisable;
\item[(CR2)] If $t\in\RED{A}$ and $t\red t'$ then $t'\in\RED{A}$;
\item[(CR3)] If $t$ is neutral and for all one step reductions $t'$ of $t$ we have
$t'\in\RED{A}$, then $t\in\RED{A}$.
\end{description}
Note that the third criteria vacuously implies that if $t$ is neutral and normal, then it is
reducible. We call this criterion (CR4).
\end{lemma}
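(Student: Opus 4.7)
The plan is to prove (CR1), (CR2), (CR3) simultaneously by induction on the structure of the type $A$, following Girard's recipe and extending it with a clause for $\Box$. For the base case of atomic types I would take $\RED{A}$ to be the set of normalisable terms of that type, so (CR1) is definitional, (CR2) is the standard closure of normalisability under one-step reduction, and (CR3) is immediate from the definition of $\nu$, since a neutral term whose one-step reducts all admit reduction-length bounds itself admits one (namely their supremum plus one).

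For the function space $A\to B$ I would proceed in the usual way. For (CR1), pick a fresh variable $x$, which is neutral and normal, hence in $\RED{A}$ by (CR4) applied to the inductive hypothesis; then $t\,x\in\RED{B}$ is normalisable by (CR1) for $B$, so $t$ is. (CR2) follows from congruence of $\red$ under application combined with (CR2) for $B$. (CR3) is the classical subtle step: given $t$ neutral with every one-step reduct in $\RED{A\to B}$ and $u\in\RED{A}$, I would do a nested induction on a normalisation bound $\nu(u)$ (finite by (CR1) for $A$), noting that $t\,u$ is neutral, that $t$ cannot be an abstraction (being neutral), and hence that every one-step reduct of $t\,u$ is either $t'\,u$ or $t\,u'$, both in $\RED{B}$ by the outer and inner induction hypotheses respectively; then (CR3) for $B$ concludes.

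For the new case $\Box A$, (CR1) follows because $\open\,t\in\RED{A}$ is normalisable by induction and any reduction of $t$ lifts to one of $\open\,t$, so $t$ is normalisable. (CR2) follows because $t\red t'$ entails $\open\,t\red\open\,t'$ by congruence, and then (CR2) for $A$ applies. For (CR3), given $t$ neutral with every one-step reduct in $\RED{\Box A}$, I would apply (CR3) for $A$ to the neutral term $\open\,t$. The key observation is that because $t$ is neutral it cannot be of the form $\shut\,s$, so the head redex $\open\,\shut\,s \mapsto s$ is unavailable; hence every one-step reduct of $\open\,t$ is of the form $\open\,t'$ with $t\red t'$, and $t'\in\RED{\Box A}$ immediately gives $\open\,t'\in\RED{A}$.

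I expect this blocking argument in the (CR3) clause for $\Box A$ to be the main point of interest, since it is the only place where the new elimination $\open$ and the new head-reduction $\open\,\shut\,s\mapsto s$ interact with the reducibility method; it works for exactly the same structural reason as in the function-space case (neutral terms carry no outermost introduction), which is encouraging since it suggests that the rest of the strong normalisation argument, including the proof that every typeable term is reducible under an appropriate substitution, will go through by a routine adaptation of the Tait/Girard template.
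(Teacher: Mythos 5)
Your proposal is correct and takes essentially the same route as the paper: a simultaneous induction on the type following Girard, with the $\Box A$ case of (CR3) resting on exactly the paper's key observation that a neutral $t$ cannot have $\shut$ as its outermost term-former, so every one-step reduct of $\open\,t$ has the form $\open\,t'$ and (CR3) at type $A$ applies to the neutral term $\open\,t$. The paper presents only the $\Box A$ clauses (handling (CR1) via the correspondence between reduction sequences of $t$ and those of $\open\,t$ not touching the outer $\open$, which matches your lifting argument) and defers the standard clauses you spell out to the usual Tait/Girard treatment.
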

\begin{proof}
We prove all three properties simultaneously by induction on the type. Here we
present only the $\Box A$ case.

For (CR1), the reduction sequences starting with $t\in\RED{\Box A}$ are in
correspondence with the reduction sequences on $\open\,t$ which do not touch the 
outer $\open$. By induction these may only be finitely long, so $t$ is normalisable.

For (CR2), if $t\in\RED{\Box A}$ then $\open\,t\in\RED{A}$, and if $t\red t'$ then
$\open\,t\red\open\,t'$ and so by induction $\open\,t'\in\RED{A}$ as required.

For (CR3), if $t$ is neutral then all reductions $\open\,t\mapsto\open\,t'$ are in
correspondence with reductions $t\mapsto t'$ because $t$ does not have $\shut$ as
its outermost term-former. If we have $t'\in\RED{\Box A}$ for all
such reductions then $\open\,t'\in\RED{A}$ by definition. Hence by induction
$\open\,t\in\RED{A}$ as required.
\end{proof}

\begin{lemma}\label{lem:values_red}
If $t\in\RED{A}$ then $\shut\,t\in\RED{\Box A}$.
\end{lemma}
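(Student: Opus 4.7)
The goal is to show $\shut\,t \in \RED{\Box A}$, which by definition of reducibility at type $\Box A$ unfolds to $\open\,\shut\,t \in \RED{A}$. My plan is to prove this via Tait's criterion (CR3): since $\open\,\shut\,t$ is neutral (its outermost term-former is the elimination $\open$), it will suffice to show that every one-step reduct of $\open\,\shut\,t$ lies in $\RED{A}$.

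The one-step reducts fall into two classes. First, the $\beta$-step $\open\,\shut\,t \mapsto t$ itself, and $t \in \RED{A}$ is exactly the hypothesis. Second, any internal reduction, which must take place inside $t$: if $t \mapsto t'$, the reduct is $\open\,\shut\,t'$. Here I would appeal to (CR2) to get $t' \in \RED{A}$, but that alone is not enough — I need $\open\,\shut\,t' \in \RED{A}$, which is the very statement of the lemma applied to $t'$. So the argument is circular unless it is performed inductively.

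The standard fix is to run an auxiliary induction on $\nu(t)$, which exists by (CR1) because $t \in \RED{A}$. When $t \mapsto t'$, the term $t'$ is reducible by (CR2) and satisfies $\nu(t') < \nu(t)$, so the inner induction hypothesis yields $\shut\,t' \in \RED{\Box A}$, i.e.\ $\open\,\shut\,t' \in \RED{A}$. Combined with the $\beta$-case, every one-step reduct of $\open\,\shut\,t$ lies in $\RED{A}$, so (CR3) delivers $\open\,\shut\,t \in \RED{A}$, and hence $\shut\,t \in \RED{\Box A}$, as required.

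The only subtle point is getting the induction set up correctly: the lemma cannot be proved by a direct unfolding, since the internal reduction case is precisely an instance of the lemma at a syntactically smaller-but-not-structurally-smaller term. Measuring size by $\nu(t)$ is the natural choice, exactly paralleling the treatment of the analogous clause for $\lambda$-abstractions in Girard's presentation. No new work beyond invoking (CR1)--(CR3) and the $\beta$-rule for $\Box$ is needed.
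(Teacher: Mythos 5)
Your proof is correct and follows essentially the same route as the paper: the paper likewise proves $\open\,\shut\,t\in\RED{A}$ by induction on $\nu(t)$, handling the root $\beta$-step via the hypothesis and the internal reductions $\open\,\shut\,t'$ via the induction hypothesis (with $t'\in\RED{A}$ by (CR2)), closing with (CR3). Your explicit remarks on neutrality and on why the naive argument is circular are just a more detailed spelling-out of the same argument.
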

\begin{proof}
We need only that $\open\,\shut\,t\in\RED{A}$, which we show by induction on
$\nu(t)$. $\open\,\shut\,t$ reduces to $t$, which is in $\RED{A}$. All other
reductions of this term have form $\open\,\shut\,t'$, where $t\mapsto t'$. But $\nu(t)
>\nu(t')$, so by induction $\open\,\shut\,t'\in\RED{A}$, so (CR3) concludes the
proof.
\end{proof}

\begin{lemma}\label{lem:fundamental}
Let $\Gamma\vdash t:A$ be a typed term where $\Gamma$ has as variable
assignments $x_1:A_1,\ldots,x_n:A_n$, and let $u_1,\ldots,u_n$ be a set of terms
with $u_i\in\RED{A_i}$ for $1\leq i\leq n$. Then $t[u_1/x_1,\ldots,u_n/x_n]\in
\RED{A}$.
\end{lemma}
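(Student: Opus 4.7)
The plan is to prove Lemma~\ref{lem:fundamental} by induction on the derivation of $\Gamma \vdash t : A$, following Tait's standard pattern and relying on the criteria (CR1)--(CR4) and Lemma~\ref{lem:values_red}. Throughout I write $\sigma = [u_1/x_1,\ldots,u_n/x_n]$ for the simultaneous substitution hypothesised in the statement.

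The variable case is immediate: $x_i\sigma = u_i \in \RED{A_i}$ by assumption. The application case is equally routine: by induction $t\sigma \in \RED{A\to B}$ and $u\sigma \in \RED{A}$, so by definition $(t\,u)\sigma = (t\sigma)(u\sigma) \in \RED{B}$. The abstraction rule $\Gamma, x:A \vdash t:B$ is handled by the usual neutral-term argument: to prove $(\lambda x.t\sigma)\in \RED{A\to B}$, fix $v \in \RED{A}$ and show $(\lambda x.t\sigma)\,v \in \RED{B}$ using (CR3), doing a secondary induction on $\nu(t\sigma) + \nu(v)$ (both finite by (CR1) and the induction hypothesis applied with $v$ substituted for $x$). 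The only non-identical one-step reduct that is not already covered by the inner IH is the $\beta$-redex itself, which reduces to $t\sigma[v/x] = t[\sigma, v/x]$, reducible by the outer IH.

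The two modal cases are what is of interest. For $\shut$, the premise is $\Gamma,\lock \vdash t:A$; the variables of $\Gamma,\lock$ are precisely those of $\Gamma$, so the IH gives $t\sigma \in \RED{A}$, and then Lemma~\ref{lem:values_red} yields $(\shut\,t)\sigma = \shut(t\sigma) \in \RED{\Box A}$. For $\open$, the premise is $\Gamma \vdash t:\Box A$ but the conclusion lives in the longer context $\Gamma,\lock,\Gamma'$, so the substitution $\sigma$ may include reducible terms for the extra variables of $\Gamma'$. The crucial observation is that none of those extra variables are free in $t$ (its derivation is in $\Gamma$ alone), so $t\sigma$ depends only on the components of $\sigma$ associated to $\Gamma$. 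By the IH applied to this restricted substitution, $t\sigma \in \RED{\Box A}$, so by definition $(\open\,t)\sigma = \open(t\sigma) \in \RED{A}$.

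There is no deep obstacle: the locks in the context play no role in substitution, and the reducibility candidates for $\Box$ were designed so that the introduction case reduces to Lemma~\ref{lem:values_red} and the elimination case is immediate from the definition. The only subtlety worth flagging is the bookkeeping in the $\open$ case, where the context of the conclusion strictly extends the context of the premise, and one must be careful that the premise is substituted only by those components of $\sigma$ that correspond to its own free variables. Strong normalisation then follows from the lemma by instantiating each $x_i$ with itself, using (CR4) to see that every variable is reducible, and then invoking (CR1).
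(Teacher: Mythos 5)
Your proof is correct and follows essentially the same route as the paper's: induction on the typing derivation, with $\shut$ handled by Lemma~\ref{lem:values_red} and $\open$ immediate from the definition of $\RED{\Box A}$, the paper leaving the standard intuitionistic cases (which you spell out) implicit. Your observation that in the $\open$ case the substitution restricts to the variables of $\Gamma$, since $\Gamma'$ is not free in $t$, is the right bookkeeping and consistent with the paper's (terse) treatment.
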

\begin{proof}
By induction on the derivation of $t$. Looking only at the term-formers for $\Box$, this
holds for $\shut\,t$ by induction and Lemma~\ref{lem:values_red}, and for $\open\,t$
by induction and the definition of $\RED{\Box A}$.
\end{proof}

Theorem~\ref{thm:sn} may hence be proved as follows: variables are neutral and
normal, so are in $\RED{A_i}$ for any $A_i$ by (CR4). Hence we can apply
Lemma~\ref{lem:fundamental} to the identity substitution, replacing variables by
themselves, to conclude that $t\in\RED{A}$. Then by (CR1) $t$ is normalisable.

\subsection{Proof of Theorem~\ref{thm:can} (Canonicity)}

\begin{lemma}\label{lem:normal_neutral_open}
If an typed term-in-context $\Gamma\vdash t:A$ is normal and neutral, then $t$
contains a free variable.
\end{lemma}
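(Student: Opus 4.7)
The proof proceeds by structural induction on the neutral term $t$. In the base case $t = x$, the variable $x$ is itself free in $t$, so we are done immediately.

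For the inductive step, $t$'s outermost term-former is an elimination, so $t$ has one of the forms $u\,v$, $\pi_i\,u$, $\open\,u$, $\case{s}{x}{v}{y}{w}$, or $\abort\,u$. In each case I isolate the \emph{principal} subterm — $u$ in the first three forms and in $\abort$, and the scrutinee $s$ in the case construction. This principal subterm is certainly normal, since any one-step reduction of it would lift to a reduction of $t$. I then argue it is also neutral: if it were an introduction form matching the surrounding eliminator, the corresponding $\beta$-rule would fire, ruling out $u = \lambda x.u'$, $u = \langle u_1,u_2\rangle$, $u = \shut\,u'$, and $s$ being $\mathsf{inl}$ or $\mathsf{inr}$. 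For the $\open$ case, the commuting conversions recorded in Section~\ref{sec:IKcomp} additionally rule out $u$ being a case or abort. The induction hypothesis then supplies a free variable $z$ in the principal subterm.

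It remains to observe that $z$ is still free in $t$. Applications, projections, $\open$, and $\abort$ bind no variables at their root, and in $\case{s}{x}{v}{y}{w}$ the binders $x$ and $y$ scope only over $v$ and $w$, not over the scrutinee $s$ in which $z$ lives. Hence $t$ contains a free variable, as required.

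The main obstacle is ruling out that a case or abort expression appears as the principal subterm under an elimination other than $\open$, since the main text explicitly records commuting conversions only for $\Box$. Either the routine commuting conversions for application, projection, etc., through sum eliminators must be added (with the context-splitting discipline of Figure~\ref{fig:sums} preserving subject reduction), or the induction must be arranged so that these cases are not needed in the downstream application of this lemma to Theorem~\ref{thm:subform}. Once the commuting-conversion framework is fixed, the structural argument above is routine.
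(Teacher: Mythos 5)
Your structural induction is essentially the paper's own proof. The paper argues by induction on the derivation of $t$, presenting only the $\open\,t$ case: if $\open\,t$ is normal then $t$ is normal and cannot start with $\shut$ (else the $\beta$-rule fires), and since by typing $t:\Box A$ admits no other introduction form, $t$ is neutral, so the induction hypothesis supplies a free variable, which survives into $\open\,t$. Your treatment of applications, projections, case, and abort is the same argument spelled out for the remaining eliminators, and your observation that the binders of $\casebare$ do not scope over the scrutinee is exactly the right closing step.

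However, the ``main obstacle'' in your final paragraph is a phantom, and your hedge that the proof is incomplete pending further commuting conversions is unnecessary. By the paper's definition, a term is neutral if it is a variable \emph{or if its outermost term-former is an elimination} --- and $\casebare$ and $\abort$ are eliminations. So if the principal subterm of a neutral normal term is itself a case or abort expression, it is already neutral, and the induction hypothesis applies to it directly; there is nothing to rule out, and no commuting conversions for application or projection through sum eliminators are needed for this lemma. (The commuting conversions of Section~\ref{sec:IKcomp} earn their keep in the subformula property, Theorem~\ref{thm:subform}, via Lemma~\ref{lem:sf_lem}, where one genuinely needs the immediate subterm of a normal $\open$ not to be a case or abort; your invocation of them here, to exclude case and abort under $\open$, is likewise superfluous.) One step you use implicitly and should make explicit: excluding introduction forms that do not match the surrounding eliminator (e.g.\ $\shut\,u'$ as the function part of an application) is done by the typing discipline, not by any reduction rule --- which is precisely why the lemma is stated for typed terms-in-context rather than raw terms.
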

\begin{proof}
By induction on the derivation of $t$. We present only the $\open\,t$ case: if such a
term is normal then $t$ is normal and does not start with $\shut$, and so is neutral,
and so by induction contains a variable.
\end{proof}

Theorem~\ref{thm:can} then follows because a normal term with no free variables
cannot be neutral by the lemma above, so it must have as main term-former the
appropriate introduction.

\subsection{Proof of Theorem~\ref{thm:subform} (Subformula Property)}

\begin{lemma}\label{lem:sf_lem}
If $\Gamma\vdash t:A$ is normal and neutral then $A$ is a subtype of some type
assigned in $\Gamma$.
\end{lemma}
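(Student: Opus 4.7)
The plan is to proceed by structural induction on the derivation of $\Gamma \vdash t : A$. Since $t$ is neutral, the final rule must be either the variable rule or one of the elimination rules; introductions are excluded by the definition of a neutral term. The variable case is immediate: if $t = x$, then $x : A$ appears in $\Gamma$, so $A$ is trivially a subtype of a type assigned in $\Gamma$.

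For each elimination case, the crucial observation is that normality of $t$ forces the principal subterm to be neutral: it cannot be the matching introduction (else we would have a $\beta$-redex), and being an introduction of some other type-former is precluded by well-typedness. Hence the subterm is itself normal and neutral, and the induction hypothesis applies. In the application case $t = s\,u$ with $\Gamma \vdash s : B \to A$, the IH yields that $B \to A$ is a subtype of some type assigned in $\Gamma$, whence so is $A$. In the $\open$ case, $t = \open\,s$ arises from a derivation splitting $\Gamma$ as $\Gamma_1, \lock, \Gamma_2$ with $\Gamma_1 \vdash s : \Box A$; normality excludes $s = \shut\,v$, so $s$ is normal and neutral. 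The IH then places $\Box A$ as a subtype of some type assigned in $\Gamma_1$, which is a prefix of $\Gamma$, giving $A$ as a subtype of the same type. Product projections are structurally identical.

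The step I expect to be most delicate is ensuring that no elimination case introduces a ``fresh'' type unrelated to the context. This is why sums --- whose elimination rule produces an arbitrary conclusion $C$ that is not obviously a subformula of the scrutinee type $A+B$ --- are deferred to Appendix~\ref{app:sums}. There, the non-standard elimination rule of Figure~\ref{fig:sums}, together with the commuting conversions for $\Box$ and an appeal to the induction hypothesis of Theorem~\ref{thm:subform} itself applied to the branches, will be needed to control $C$. For $\open$, by contrast, the analogous issue does not arise, because its conclusion $A$ is already a subformula of its premise type $\Box A$, and the weakening $\Gamma_2$ admitted by the rule is harmless since subtypes of $\Gamma_1$ remain subtypes of the extended context $\Gamma_1, \lock, \Gamma_2$.
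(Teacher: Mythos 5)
Your proof is correct and takes essentially the same route as the paper's: induction on the term, with the variable case immediate and, in each elimination case, normality forcing the principal subterm to be neutral (in particular $s \neq \shut\,v$ under $\open$) so the induction hypothesis applies; the paper presents only the $\open$ case, arguing exactly as you do that $\Box A$ being a subtype of the context makes $A$ one as well. Your extra care in noting that the context of $s$ is the prefix $\Gamma_1$ of $\Gamma,\lock,\Gamma'$ is a point the paper glosses over, and your remarks on sums agree with the paper's decision to defer them to the appendix.
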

\begin{proof}
By induction on $t$, as usual presenting only the $\Box$ case:

If $\open\,t$ is normal and neutral then so must $t$ be. But then by induction the type
$\Box A$ of $t$ is a subformula of $\Gamma$, so the type $A$ of $\open\,t$ is also.
\end{proof}

We then prove Theorem~\ref{thm:subform} by induction on $t$, presenting only the
$\Box$ cases:

$\Gamma\vdash\shut\,t:\Box A$: all proper subterms are subterms of $t$, which by
induction have type included in $\Gamma$, or $A$, and hence, in the latter case,
$\Box A$.

$\Gamma,\lock,\Gamma'\vdash\open\,t:A$: by induction all subterms of $t$ are
contained in $\Gamma$ or $\Box A$. But if $\open\,t$ is normal then $t$ is neutral, so
by Lemma~\ref{lem:sf_lem} $\Box A$ is contained in $\Gamma$.

\subsection{Proof of Theorem~\ref{thm:IKcatsnd} (Categorical Soundness)}

We first confirm the soundness of $\open\,\shut\,t\mapsto t$.
$\den{\Gamma,\lock,\Gamma'\vdash\open\,\shut\,t:A}$ is
$\varphi^{-1}\varphi\,\den{\Gamma,\lock\vdash t:A}$ composed with a projection,
where $\varphi$ is the isomorphism given by the $\Dia\dashv\Box$ adjunction. But
this is exactly $\den{\Gamma,\lock,\Gamma'\vdash t:A}$ as required.

The soundness of $\beta$-reduction for functions follows immediately given the
lemma below regarding the interpretation of substitution. The lemma is slightly more
general than necessary for function spaces, because the general form will be useful
later.

\begin{lemma}\label{lem:catsubs}
If $\Gamma,x:A,\Gamma'\vdash t:B$ and $\Gamma\vdash u:A$ then
$\den{\Gamma,\Gamma'\vdash t[u/x]:B}$ is
\[\xymatrix{
  \Gamma,\Gamma' \ar[r]^-{\Gamma'\langle\Gamma,u\rangle} & \Gamma,A,\Gamma' \ar[r]^-{t} & B
}\]
\end{lemma}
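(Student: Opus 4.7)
The plan is to proceed by induction on the derivation of $\Gamma, x:A, \Gamma' \vdash t:B$. The key observation is that $\Gamma'$ is interpreted as an endofunctor built by alternating products with variable types and applications of $\Dia$; the morphism $\Gamma'\langle\Gamma,u\rangle$ is simply its functorial action on the pairing $\langle\mathrm{id}_\Gamma, u\rangle : \Gamma \to \Gamma \times A$. Each induction step then amounts to verifying that the relevant term-former commutes with this functorial action.

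For variables, if $t = y \neq x$ both sides reduce to the same projection through $\Gamma'$, and the substitution morphism commutes with it. If $t = x$, the variable rule forces $\Gamma'$ to be lock-free, so $\Gamma'$ acts on morphisms as $(-) \times \Gamma'$ and composing with the projection onto $A$ collapses the expression to $u$ precomposed with the projection to $\Gamma$. The cases for $\lambda$, application, and products follow from the induction hypothesis together with the universal properties; the details are standard.

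The $\shut$ case uses naturality of the adjunction transpose. For $\Gamma, x:A, \Gamma' \vdash \shut\,t : \Box B$ derived from $\Gamma, x:A, \Gamma', \lock \vdash t : B$, the induction hypothesis gives $\den{t[u/x]} = \den{t} \circ (\Gamma',\lock)\langle\Gamma,u\rangle = \den{t} \circ \Dia(\Gamma'\langle\Gamma,u\rangle)$. Transposing across $\Dia \dashv \Box$ to form $\shut$ and invoking naturality of the transpose converts precomposition by $\Dia(\Gamma'\langle\Gamma,u\rangle)$ into postcomposition by $\Gamma'\langle\Gamma,u\rangle$, which is exactly the required form.

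The $\open$ case requires splitting on where the $\lock$ introduced by the $\open$ rule lies. If it sits in $\Gamma' = \Gamma'_0, \lock, \Gamma'_1$ with $\Gamma'_1$ lock-free, then the premise term is typed in $\Gamma, x:A, \Gamma'_0$, the induction hypothesis applies directly, and the final projection through $\Gamma'_1$ carries the claim through. If instead the $\lock$ lies in $\Gamma = \Gamma_0, \lock, \Gamma_1$, the premise $t$ lives in $\Gamma_0$, so $x$ is not free and $t[u/x] = t$; the claim then reduces to the observation that the projection used to interpret $\open$ factors through $\Dia\Gamma_0$, which absorbs $\Gamma'\langle\Gamma,u\rangle$ since the latter is the identity on the $\Gamma_0$ component. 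The main technical care is this bookkeeping around the location of the $\lock$, together with ensuring naturality of the adjunction is invoked in the correct direction; once set up, both sub-cases are small diagram chases.
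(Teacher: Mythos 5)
Your proposal is correct and follows essentially the same route as the paper's proof: induction on the derivation, using lock-freeness of $\Gamma'$ in the $t=x$ case, naturality of the $\Dia\dashv\Box$ transpose (equivalently of the unit $\eta$, as the paper draws it) for $\shut$, and naturality of projections for $\open$ with the same case split on where the lock sits, including the easy sub-case where the lock lies left of $x$ so that $x$ is not free in $t$ (which the paper only makes explicit in its IS4 appendix). The one blemish is the word ``postcomposition'' in your $\shut$ case --- naturality turns precomposition by $\Dia(\Gamma'\langle\Gamma,u\rangle)$ into \emph{pre}composition of $\den{\shut\,t}$ by $\Gamma'\langle\Gamma,u\rangle$ --- but this is a wording slip only, as the form you conclude with is the right one.
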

\begin{proof}
By induction on $t$. We present one variable case and the cases for $\Box$; other
cases are routine.

If $t$ is the variable $x$ then $\Gamma'$ contains no locks so we have
\[\xymatrix{
 \Gamma,\Gamma' \ar[r]^-{\Gamma'\langle\Gamma,u\rangle} \ar[d]_{pr} & \Gamma,A,\Gamma' \ar[d]_{pr} \ar[ddr]^-{pr} \\
 \Gamma \ar[r]^-{\langle\Gamma,u\rangle} \ar[drr]_-u & \Gamma,A \ar[dr]_(0.35){pr} \\
 & & A
}\]

$\shut$:
\[\xymatrix{
  \Gamma,\Gamma'  \ar[r]^-{\Gamma'\langle\Gamma,u\rangle} \ar[d]_{\eta} & \Gamma,A,\Gamma' \ar[d]^{\eta} \\
  \Box\Dia(\Gamma,\Gamma') \ar[r]_-{\Box\Dia\Gamma'\langle\Gamma,u\rangle} & \Box\Dia(\Gamma,A,\Gamma') \ar[r]_-{\Box t} & \Box B
}\]

$\open$:
\[\xymatrix{
  \Gamma,\Gamma',\lock,\Gamma'' \ar[rr]^-{(\Gamma,\sslock,\Gamma'')\langle\Gamma,u\rangle} \ar[d]_{pr} & & \Gamma,A,\Gamma',\lock,\Gamma'' \ar[d]^{pr} \\
  \Gamma,\Gamma,\lock \ar[rr]_-{\Dia\Gamma'\langle\Gamma,u\rangle} & & \Gamma,A,\Gamma',\lock \ar[r]_-{\Dia t} & \Dia\Box B \ar[r]_-{\varepsilon} & B
}\]
\end{proof}

Finally, the soundness of $\eta$-equivalence for $\Box$ follows immediately from the
$\Dia\dashv\Box$ adjunction.

\subsection{Sums in $IK$}\label{app:sums}

We finally show that the proofs above still hold given the generalised rules for sums
of Figure~\ref{fig:sums}.

For the logical soundness of the $\casebare$ rule we note that
\[
A+B\to(A\to\den{\Gamma'\vdash C})\to(B\to\den{\Gamma'\vdash C})\to\den{\Gamma'\vdash C}
\]
is a theorem and use Lemma~\ref{lem:sound_lem} and the $K$ axiom to complete.
For $\abort$ we use that $0\to\den{\Gamma'\vdash A}$.

The other syntactic proofs proceed as usual for sums.

For the categorical semantics, we need to interpret the new $\casebare$ and $\abort$
rules, and to confirm that the $\beta$-reductions and commuting conversions still
hold.

$\casebare$: We interpret $\case{s}{x}{t}{y}{u}$ as
\[\xymatrix{
  \Gamma,\Gamma' \ar[r]^-{\Gamma'\langle\Gamma,s\rangle} &
  \Gamma,A+B,\Gamma' \ar[r]^-d &
  (\Gamma,A,\Gamma')+(\Gamma,B,\Gamma') \ar[r]^-{[t,u]} & C
}\]
Products and $\Dia$ are left adjoints, so preserve coproducts, yielding the
isomorphism $d$.

Now where $\Gamma'$ is empty we can show by diagram chase that
$d\circ(\Gamma\times in_1)=in_1:\Gamma\times A\to
(\Gamma\times A)+(\Gamma\times B)$. This suffices to show that
$\beta$-reduction (for the first injection) is sound where $\Gamma'$ is empty. For
the general $\Gamma'$ we hence use this as the base case for a proof that
$d\circ\Gamma'(\Gamma\times in_1)=in_1:
\Gamma,A,\Gamma'\to\Gamma,A,\Gamma'+
\Gamma,B,\Gamma'$. We show the step case for locks only:
\[\xymatrix{
  \Gamma,A,\Gamma',\lock \ar[r]^-{id} \ar[d]_{\Dia\Gamma'(id\times in_1)} \ar[dr]^-{\Dia\eta} \ar[ddr]^>>>>{\Dia in_1} & \Gamma,A,\Gamma',\lock \ar[r]^-{in_1} &\Gamma,A,\Gamma',\lock\,+\,\Gamma,B,\Gamma',\lock \\
  \Gamma,A\times B,\Gamma',\lock \ar[dr]_-{\Dia d} & \Dia\Box(\Gamma,A,\Gamma',\lock) \ar[u]_{\varepsilon} \ar[dr]^-{\Dia\Box in_1} \\
  & \Dia(\Gamma,A,\Gamma'\,+\,\Gamma,B,\Gamma') \ar[r]_-*!/u-4pt/{\labelstyle\Dia[\Box in_1\circ\eta,\Box in_2\circ\eta]} & \Dia\Box(\Gamma,A,\Gamma',\lock\,+\,\Gamma,B,\Gamma',\lock) \ar[uu]_{\varepsilon}
}\]
Here the leftmost triangle commutes by induction.

The commuting conversions can be confirmed by diagram chase similarly. We
present one case:
\[\xymatrix{
  \Gamma,\Gamma',\lock,\Gamma'' \ar[r]^-*!/u4pt/{\labelstyle(\Gamma',\sslock,\Gamma'')\langle\Gamma,s\rangle} \ar[d]_{pr} & \Gamma,A+B,\Gamma',\lock,\Gamma'' \ar[r]^-d \ar[d]^{pr} & \Gamma,A,\Gamma',\lock,\Gamma''\,+\,\Gamma,B,\Gamma',\lock,\Gamma'' \ar[r]^-*!/u4pt/{\labelstyle[\varepsilon\circ\Dia t\circ pr,\varepsilon\circ\Dia u\circ pr]} \ar[d]_{[pr,pr]} \ar[dr]_(0.42){[\Dia t\circ pr,\Dia u\circ pr]\;\;} & C \\
  \Gamma,\Gamma',\lock \ar[r]_-*!/u-4pt/{\labelstyle\Dia\Gamma'\langle\Gamma,s\rangle} & \Gamma,A+B,\Gamma',\lock \ar[dr]_-{\Dia d} & \Gamma,A,\Gamma',\lock\,+\,\Gamma,B,\Gamma',\lock \ar[r]_(0.66){\Dia t+\Dia u} \ar[d]_{\cong} & \Dia\Box C \ar[u]_{\varepsilon} \\
  & & \Dia(\Gamma,A,\Gamma'\,+\,\Gamma,B,\Gamma') \ar[ur]_-{\Dia[t,u]}
}\]
Here the top line is $\den{\case{s}{x}{\open\,t}{y}{\open\,u}}$, while the other
perimeter is $\den{\open\,\case{s}{x}{t}{y}{u}}$.

$\abort$ is interpreted as
\[\xymatrix{
  \Gamma,\Gamma' \ar[r]^-{\Gamma' t} & 0,\Gamma' \ar[r]^-{!} & A
}\]
where the right hand arrow is unique because products and $\Dia$ are left adjoints,
so preserve the initial object. The soundness of the commuting conversions then
follow easily. We show one case:
\[\xymatrix{
  \Gamma,\Gamma',\lock,\Gamma'' \ar[rr]^-{(\Gamma',\sslock,\Gamma')t} \ar[d]_{pr} & & 0,\Gamma',\lock,\Gamma'' \ar[r]^-{!} \ar[d]_{pr} & A \\
  \Gamma,\Gamma',\lock \ar[rr]_-{\Dia\Gamma' t} & & 0,\Gamma',\lock \ar[r]_-{\Dia !} & \Dia\Box A \ar[u]_{\varepsilon}
}\]
The top line is $\den{\abort\,t}$ and the other perimeter is $\den{\open\,\abort\,t}$.


\section{Left Adjoints and Categorical Completeness}

\subsection{Type System and Computation}

\textbf{Logical soundness}: For $\dia$, we have $\Dia A\to B_1\to\cdots\to B_n\to
\Dia A$, which by necessitation and K yields $\Box\Dia A\to\Box(B_1\to\cdots\to B_n
\to\Dia A)$. But $A\to\Box\Dia A$ by $\eta$. Using Lemma~\ref{lem:sound_lem} for
the context $\Gamma$ completes the proof.

For $\letdbare$, the second premise yields $\Dia A\to B$ by monotonicity and
$\varepsilon$. Then $\den{\Gamma\vdash\Dia A\to B}$, and
Lemma~\ref{lem:sound_lem} completes the proof.

\textbf{Subject reduction}: We need left weakening (Lemma~\ref{lem:left_weak})
and variable weakening to weaken $x:A,\lock\vdash u:B$ to $\Gamma,x:A,\lock,
\Gamma'\vdash u:B$, and may then apply the substitution.

\textbf{Strong normalisation}: The $\dia$ rule involves a `parasitic' type, as with
sums, so we use similar techniques as for sums to extend Tait's method.
Unfortunately these techniques appear to be folklore and we could not find an
explicit description in the literature%
\footnote{But see \url{mathoverflow.net/questions/281387}.}%
, so we write out the proof with some care. We set
\begin{itemize}
\item $t\in\RED{\Dia A}$ if $t\red\dia\,u$ for $u\in\RED{A}$, or $t$ normalises to a
neutral term.
\end{itemize}

For (CR1), if $t\red\dia\,u$ and $u$ is by induction normalisable, then $\dia\,u$ is
also normalisable because no reduction touches the outer $\dia$. Otherwise $t$ is
normalisable by definition.

For (CR2), if $t\red t'$ and $t\red\dia\,u$ then by confluence $t'\red\dia\,u'$ for some
$u\red u'$. By induction $u'\in\RED{A}$, so $t'\in\RED{\Dia A}$. Else if $t$ normalises
to a neutral term then so does $t'$.

For (CR3), if there exists a one-step reduction of $t$ that reduces in turn to some
$\dia\,u$ then $t$ does also. Else if all one-step reductions of $t$ reduce to a neutral
term then $t$ does similarly.

Lemma~\ref{lem:fundamental} can then be extended to the new term-formers. The
$\dia$ case follows immediately by definition. For $\letdbare$ we use a secondary
induction on $\nu(t)+\nu(u)$, and (CR3). We omit the substitutions for $\Gamma$ in
the below for clarity. If $t$ has form $\dia\,s$ then one possible
reduction is to $u[s/x]$. But by definition $s\in\RED{A}$, so $u[s/x]\in\RED{B}$. If
$\nu(t)+\nu(u)=0$ this is the only possible reduction. Otherwise
we might reduce one of the subterms $t$ or $u$; without loss of generality, say
$t\mapsto t'$. By (CR2) $t'\in\RED{\Dia A}$. But then we can use our secondary
induction to conclude $\letd{x}{t'}{u}\in\RED{B}$.

\textbf{Canonicity} follows as before.

\subsection{Categorical Semantics}

To determine $\beta$-reduction for function spaces, which involves substitution, still
holds we must confirm that Lemma~\ref{lem:catsubs} extends to the new
type-formers for $\Dia$, which is straightforward from expanding the definitions.

$\beta$-reduction for $\Dia$ involves a left weakening \`{a} la
Lemma~\ref{lem:left_weak}, so we must determine the categorical equivalent of this.

\begin{lemma}\label{lem:cat_leftweak}
Given $\Gamma'\vdash t:A$, the categorical denotation of $\Gamma,\Gamma'\vdash
t:A$ is $\Gamma'\vdash t:A\circ\Gamma'!$, where $!$ is the unique arrow
$\Gamma\to 1$ (noting that $\Gamma'\cong\Gamma'1$; we abuse notation by 
treating them as equal).
\end{lemma}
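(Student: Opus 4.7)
The plan is to proceed by induction on the derivation of $\Gamma' \vdash t : A$, exploiting two structural facts: that $\den{\Gamma'}$ is built functorially from finite products and $\Dia$, so $\den{\Gamma'}(!)\colon \den{\Gamma, \Gamma'} \to \den{\Gamma'}$ is well-defined; and that every semantic operation used in the interpretation (projection, pairing, the $\Dia \dashv \Box$ transposition, coproduct maps, and the interpretation of $\dia$ and $\letdbare$) is natural in its domain. The base case $\Gamma' = \cdot$ is immediate: the interpretation of a term in the empty context is just precomposed with the unique map $!\colon \Gamma \to 1$, modulo the usual abuse identifying $X$ with $X \times 1$.

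For the inductive step, the cases for variables, $\lambda$/application, products, and sums are routine: each semantic constructor is natural in its domain object, so it commutes with precomposition by $\den{\Gamma'}(!)$. In particular, a variable's denotation is a chain of projections, all of which slide past $\den{\Gamma'}(!)$ by the universal property of products. The $\shut$ case follows from the naturality of the adjunction transposition $\Cat(\Dia -, A) \cong \Cat(-, \Box A)$ in its first argument, while the $\letdbare$ case is essentially composition and so unwinds immediately from the induction hypothesis applied to the $\Dia$-typed premise.

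The main technical case is $\open$, and by analogy $\dia$. Suppose the last rule derives $\Gamma'_0, \lock, \Gamma'' \vdash \open\, t : A$ from $\Gamma'_0 \vdash t : \Box A$, whose denotation I will call $f$; the induction hypothesis gives that $\Gamma, \Gamma'_0 \vdash t : \Box A$ is denoted $f \circ \den{\Gamma'_0}(!)$. Applying the $\Dia \dashv \Box$ transposition yields $\phi^{-1}(f) \circ \Dia(\den{\Gamma'_0}(!))$ by naturality, and precomposing with the projection past $\Gamma''$ gives the denotation of $\Gamma, \Gamma'_0, \lock, \Gamma'' \vdash \open\, t : A$. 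To match the target form, I would then verify the identity $\Dia(\den{\Gamma'_0}(!)) \circ pr = pr \circ \den{\Gamma'_0, \lock, \Gamma''}(!)$ by a short diagram chase using the functoriality of $\Dia$ and the naturality of product projections. The only real work here is bookkeeping in this final square; the underlying 2-cells are all standard.
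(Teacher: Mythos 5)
Your proposal is correct and takes essentially the same route as the paper's own proof: induction on the typing derivation, with the cartesian-closed cases discharged by naturality, the $\shut$ case by naturality of the $\Dia\dashv\Box$ transposition in its first argument, and the $\open$/$\dia$ cases reduced to precisely the square $\Dia(\den{\Gamma'_0}(!))\circ pr = pr\circ\den{\Gamma'_0,\lock,\Gamma''}(!)$ that the paper verifies by its diagrams. The only cosmetic slip is describing $\Gamma'=\cdot$ as the base case, since the induction is on derivations and its true base is the variable rule, which you in any case handle correctly.
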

\begin{proof}
By induction on $t$. The variable case holds because we use the projection to $A$.

For $\lambda$ we use the diagram below. The natural $\eta^c$ is the unit of the
cartesian closure adjunction, while the triangle commutes by induction.
\[\xymatrix{
  \Gamma,\Gamma' \ar[d]_{\Gamma'!} \ar[r]^-{\eta^c} & A\to(\Gamma,\Gamma',A) \ar[r]^-{A\to t} \ar[d]_{A\to(\Gamma'!\times A)} & A\to B \\
  \Gamma' \ar[r]_-{\eta^c} & A\to(\Gamma',A) \ar[ur]_-{A\to t}
}\]

Application: The denotation of the weakened term is $\varepsilon^c\circ
\langle t,u\rangle$, where $\varepsilon^c$ is the counit of the cartesian closure
adjunction. This is $\varepsilon^c\circ\langle t,u\rangle\circ\Gamma'!$ by induction.

$\shut$:
\[\xymatrix{
  \Gamma,\Gamma' \ar[d]_{\Gamma'!} \ar[r]^-{\eta^m} & \Box(\Gamma,\Gamma',\lock) \ar[r]^-{\Box t} \ar[d]_{\Box\Dia\Gamma'!} & \Box A \\
  \Gamma' \ar[r]_-{\eta^m} & \Box(\Gamma',\lock) \ar[ur]_-{\Box t}
}\]

$\open$:
\[\xymatrix{
  \Gamma,\Gamma',\lock,\Gamma'' \ar[d]_{(\Gamma',\sslock,\Gamma'')!} \ar[r]^-{pr} & \Gamma,\Gamma',\lock \ar[r]^-{\Dia t} \ar[d]_{\Dia\Gamma'!} & \Dia\Box A \ar[r]^-{\varepsilon^{m}} & A \\
  \Gamma',\lock,\Gamma'' \ar[r]_-{pr} & \Gamma',\lock \ar[ur]_-{\Dia t}
}\]

$\dia$:
\[\xymatrix{
  \Gamma,\Gamma',\lock,\Gamma'' \ar[d]_{(\Gamma',\sslock,\Gamma'')!} \ar[r]^-{pr} & \Gamma,\Gamma',\lock \ar[r]^-{\Dia t} \ar[d]_{\Dia\Gamma'!} & \Dia A \\
  \Gamma',\lock,\Gamma'' \ar[r]_-{pr} & \Gamma',\lock \ar[ur]_-{\Dia t}
}\]

$\letdbare$ follows immediately from definition.
\end{proof}

This lemma, along with Lemma~\ref{lem:catsubs}, establishes for us that $u[t/x]$ is
$u\circ\Dia(!\times A)\circ\Dia\langle\Gamma,t\rangle\circ pr$. The middle two
arrows simplify to $\Dia t$, as required by $\letd{x}{\dia\,t}{u}$.

For $\eta$-equivalence $\letd{x}{t}{x}$ has denotation $t$ by definition. For
the associativity equivalence it clear that both sides equal $(t\circ u)\circ s=
t\circ(u\circ s)$. \\

We now move to the \textbf{term model construction}.

$\Box$ is a functor: $\Box$ applied to the identity $x$ is $\shut\,\open\,x$, which is
$\eta$-equal to $x$. The composition of $\Box u$ and $\Box t$ is $\shut\,u[\open\,
\shut\,t[\open\,x/x]/x]$, which reduces to $\shut\,u[t[\open\,x/x]/x]$ as required.

$\Dia$ is a functor: $\Dia x$ is $\letd{x}{x}{\dia\,x}$, which is $\eta$-equal to the
identity. $\Dia u\circ\Dia t$ is $\letd{x}{(\letd{x}{x}{\dia\,t})}{\dia\,u}$. By associativity
this equals $\letd{x}{x}{\letd{x}{\dia\,t}{\dia\,u}}$, which reduces to
$\letd{x}{x}{(\dia\,u[t/x])}$ as required.

$\eta^m$ is natural: We require that $\Box\Dia t\circ\eta^m=\eta^m\circ t$. The
left hand side is $\shut\,\letd{x}{\open\,\shut\,\dia\,x}{\dia\,t}$, which reduces to
$\shut\,\letd{x}{\dia\,x}{\dia\,t}$, then to $\shut\,\dia\,t$.

$\varepsilon^m$ is natural: We require that $\varepsilon^m\circ\Dia\Box t=t\circ
\varepsilon^m$. The left hand side is
\[
\letd{y}{(\letd{x}{x}{\dia\,\shut\,t[\open\,x/x]})}{\open\,y}
\]
By associativity this equals
\[
\letd{x}{x}{\letd{y}{\dia\,\shut\,t[\open\,x/x]}{\open\,y}}
\]
which reduces to $\letd{x}{x}{(\open\,\shut\,t[\open\,x/x])}$, then in turn reduces to
$\letd{x}{x}{(t[\open\,x/x])}$. But by associativity this is $t[\letd{x}{x}{\open\,x}/x]$
as required.

Triangle equalities: We first check that $\Box\varepsilon^m\circ\eta^m$ is the identity
on any $\Box A$. This is $\shut\,\letd{x}{\open\,\shut\,\dia\,x}{\open\,x}$, which
reduces in turn to $\shut\,\letd{x}{\dia\,x}{\open\,x}\mapsto\shut\,\open\,x$, which is
$\eta$-equivalent to $x$.

We then check that $\varepsilon^m\circ\Dia\eta^m$ is the identity on any $\Dia A$.
This is
\[
\letd{x}{(\letd{x}{x}{\dia\,\shut\,\dia\,x})}{\open\,x}\
\]
which by associativity equals $\letd{x}{x}{\letd{x}{\dia\,\shut\,\dia\,x}{\open\,x}}$,
which reduces to $\letd{x}{x}{\open\,\shut\,\dia\,x}\mapsto\letd{x}{x}{\dia\,x}$,
which is $\eta$-equal to $x$.

\textbf{Proof of Lemma~\ref{lem:compl_lem}.}

$\Gamma,x:A,y_1:B_1,\cdots,y_n:B_n\vdash x:A$: The term model denotation
$\den{x}$ is $x:\den{\Gamma}\vdash\pi_2\pi_1\cdots\pi_1\,x:A$, where there are
$n$ first projections. The context term is $\Gamma\vdash\langle\langle\langle
c_{\Gamma},x\rangle,y_1\rangle,\ldots,y_n\rangle:\den{\Gamma}$. The substitution
applied to the term is then $\pi_2\pi_1\cdots\pi_1\langle\langle\langle c_{\Gamma},x\rangle,y_1\rangle,
\ldots,y_n\rangle$, which reduces to $x$.

$\den{\lambda y.t}$ is $\lambda y.\den{t}[\langle x,y\rangle/x]$. But then the
substitution is $\lambda y.\den{t}[\langle c_{\Gamma},y\rangle/x]$, which equals
$\lambda y.t$ by induction.

$\den{tu}$ is $\den{t}\den{u}$, so the substitution is $(\den{t}[c_{\Gamma}/x])
(\den{u}[c_{\Gamma}/x])$, which equals $tu$ by induction.

$\den{\shut\,t}$ is $\Box\den{t}\circ\eta^m$, which is $\shut\den{t}[\open\,\shut\,
\dia\,x/x]$, which reduces to $\shut\den{t}[\dia\,x/x]$. The substitution is then
$\shut\den{t}[\dia\,c_{\Gamma}/x]$, which equals $\shut\,t$ by induction.

$\den{\open\,t}$ is $\varepsilon^m\circ\Dia t\circ pr$, where $pr$ is the projection
out of the weakening of the variables $y_1,\ldots,y_n$. This is
\[
  \letd{y}{(\letd{x}{\pi_1\cdots\pi_1 x}{\dia\den{t}})}{\open\,y}
\]
where there are $n$ first projections. By associativity this equals
\[
  \letd{x}{\pi_1\cdots\pi_1 x}{\letd{y}{\dia\den{t}}{\open\,y}}
\]
which reduces to $\letd{x}{\pi_1\cdots\pi_1 x}{\open\den{t}}$. The substitution is
then
\[
  \letd{x}{\pi_1\cdots\pi_1\langle\langle\dia\,c_{\Gamma},y_1\rangle,\ldots,y_n\rangle}{\open\den{t}}
\]
This reduces to $\letd{x}{\dia\,c_{\Gamma}}{\open\den{t}}\mapsto\open\den{t}
[c_{\Gamma}/x]$, which equals $\open\,t$ by induction.

$\den{\dia\,t}$ is $\Dia\den{t}\circ pr$, which is
$\letd{x}{\pi_1\cdots\pi_1\,x}{\dia\den{t}}$. The substitution is
\[
  \letd{x}{\pi_1\cdots\pi_1\,\langle\langle\dia\,c_{\Gamma},y_1\rangle,\ldots,y_n\rangle}{\dia\den{t}}
\]
which reduces to $\letd{x}{\dia\,c_{\Gamma}}{\dia\den{t}}\mapsto
\dia\den{t}[c_{\Gamma}/x]$, which equals $\dia\,t$ by induction.

$\den{\letd{x}{t}{u}}$ is $\den{u}\circ\den{t}$, so the substitution is $\den{u}
[\den{t}[c_{\Gamma}/x]/x]$, which by induction equals $\den{u}[t/x]$. This is
$\eta$-equal to $\den{u}[\letd{x}{t}{\dia\,x}/x]$, which is by associativity equal
to $\letd{x}{t}{(\den{u}[\dia\,x/x])}$. This equals $\letd{x}{t}{u}$ by induction.


\section{Intuitionistic S4}

\subsection{Type System and Computation}

Logical soundness follows by showing that $\Box A\to
\den{\Gamma'\vdash A}$ is a theorem by induction on $\Gamma'$, then using
Lemma~\ref{lem:sound_lem} to incorporate $\Gamma$, and finally noting that
$\den{\Gamma\vdash\den{\Gamma'\vdash A}}=\den{\Gamma,\Gamma'\vdash A}$.

For the induction, the base case is the T axiom. The variable case extends the
induction hypothesis $\Box A\to\den{\Gamma'\vdash A}$ to $\Box A\to B\to
\den{\Gamma'\vdash A}$. The lock case combines the 4 axiom with $\Box\Box A\to
\Box\den{\Gamma'\vdash A}$, which follows by applying the K axiom to the induction.

\subsection{Categorical Semantics}

We here show that Lemma~\ref{lem:catsubs} extends to the new $\open$ rule. If
the variable substituted for is part of the weakening, this is easy. Suppose instead that
we have $\open$ applied to $\Gamma,x:A,\Gamma'\vdash t:\Box B$ with weakening
$\Gamma''$. Then
\[\xymatrix{
  \Gamma,\Gamma',\Gamma'' \ar[r]^-{l_{\Gamma''}} \ar[d]_{(\Gamma',\Gamma'')\langle\Gamma,u\rangle} & \Gamma,\Gamma',\lock \ar[d]^{(\Gamma',\sslock)\langle\Gamma,u\rangle} \\
  \Gamma,A,\Gamma',\Gamma'' \ar[r]_-{l_{\Gamma''}} & \Gamma,A,\Gamma',\lock \ar[r]_-{\Dia t} & \Dia\Box B \ar[r]_-{\varepsilon^m} & B
}\]
where the square commutes by the naturality of $l_{\Gamma''}$.

\textbf{Proof of Lemma~\ref{lem:lock_over}:}

The base case of the induction on $\Gamma_3$ follows because
$l_{\Gamma_2,\sslock}\circ l_{\cdot}$ is $\mu\circ\Dia l_{\Gamma_2}\circ\eta$,
which is $\mu\circ\eta\circ  l_{\Gamma_2}$ by the naturality of $\eta$,
which is $l_{\Gamma_2}$ by the monad laws.

The variable case follows because $l_{\Gamma_2,\sslock}\circ l_{\Gamma_3,A}$ is
$l_{\Gamma_2,\sslock}\circ l_{\Gamma_3}\circ pr=l_{\Gamma_2,\Gamma_3}\circ
pr$ by induction, which is $l_{\Gamma_2,\Gamma_3,A}$.

The lock case has $l_{\Gamma_2,\sslock}\circ l_{\Gamma_3,\sslock}=
\mu\circ\Dia l_{\Gamma_2}\circ\mu\circ\Dia l_{\Gamma_3}=
\mu\circ\mu\circ\Dia\Dia  l_{\Gamma_2}\circ\Dia l_{\Gamma_3}$ by the naturality of
$\mu$, which is $\mu\circ\Dia\mu\circ\Dia\Dia  l_{\Gamma_2}\circ
\Dia l_{\Gamma_3}$ by the monad laws, which is
$\mu\circ\Dia l_{\Gamma_2,\sslock}\circ \Dia l_{\Gamma_3}=
\mu\circ l_{\Gamma_2,\Gamma_3}$ by induction, which is
$l_{\Gamma_1,\Gamma_2,\sslock}$.

The step cases of the main induction follow by easy diagram chases.

\textbf{Proof of Lemma~\ref{lem:lockreponterms}:}

By induction on the derivation of $t$. The variable case holds because we use the
projection to $A$.

$\lambda$: By induction
$\Gamma,\lock,\Gamma'',A\vdash t:B\circ(\Gamma''(l_{\Gamma'})\times A)=
\Gamma,\Gamma',\Gamma'',x:A\vdash t:B$. This yields the triangle of
\[\xymatrix{
  \Gamma,\Gamma',\Gamma'' \ar[r]^-{\eta^c} \ar[d]_{\Gamma''(l_{\Gamma'})} & A\to\Gamma,\Gamma',\Gamma'',A \ar[r]^-{A\to t} \ar[d]_{A\to(\Gamma''(l_{\Gamma'})\times A)} & A\to B \\
  \Gamma,\lock,\Gamma'' \ar[r]_-{\eta^c} & A\to\Gamma,\lock,\Gamma',A \ar[ur]_-{A\to t}
}\]
where the natural $\eta^c$ is the unit of the cartesian closure adjunction.

Application: $\varepsilon^c\circ\langle t,u\rangle\circ\Gamma''(l_{\Gamma'})$,
where $\varepsilon^c$ is the counit of the cartesian closure adjunction, equals
$\varepsilon^c\circ\langle t\circ\Gamma''(l_{\Gamma'}),u\circ
\Gamma''(l_{\Gamma'})\rangle$, which is $\varepsilon^c\circ\langle t,u\rangle$ by
induction as required.

$\shut$: By induction $\Gamma,\lock,\Gamma'',\lock\vdash t:A\circ
\Dia\Gamma''(l_{\Gamma'})=\Gamma,\Gamma',\Gamma'',\lock\vdash t:A$. This
yields the triangle of
\[\xymatrix{
  \Gamma,\Gamma',\Gamma'' \ar[r]^-{\eta^{m}} \ar[d]_{\Gamma''(l_{\Gamma'})} & \Box(\Gamma,\Gamma',\Gamma'',\lock) \ar[r]^-{\Box t} \ar[d]_{\Box\Dia \Gamma''(l_{\Gamma'})} & \Box A \\
  \Gamma,\lock,\Gamma'' \ar[r]_-{\eta^{m}} & \Box(\Gamma,\lock,\Gamma',\lock) \ar[ur]_-{\Box t}
}\]
where the natural $\eta^{m}$ is the unit of the modal adjunction $\Dia\dashv\Box$.

$\open$: Where the lock in question is part of the context introduced by the
weakening we use Lemma~\ref{lem:lock_over}. Where it was part of the original
context we use induction and the naturality of lock replacement.

\subsection{Coherence}

We first note that for the proof of the first part of Lemma~\ref{lem:strength}, in
the case where $\open$ introduces $x$ as part of the weakening, we need that
$l_{\Gamma',x:A,\Gamma''}=l_{\Gamma',\Gamma''}\circ\Gamma''(pr)$. This is
easily proved by induction on $\Gamma''$. Similarly, for the second part of the lemma
in the case that $\open$ introduces the lock, we need $l_{\Gamma',\Gamma''}=
l_{\Gamma',\sslock,\Gamma''}\circ\Gamma''(\eta)$. This follows by induction on
$\Gamma''$, with the base case using the naturality of $\eta$ and the monad laws.

We present the variable case of Lemma~\ref{lem:strength_and_l}:

\[\xymatrix{
  \Gamma,\Gamma',B \ar[rr]^-{t} \ar[dd]_{l_{\Gamma',B}} \ar[dr]_-{pr} & & A \ar[dd]^{\eta} \\
  & \Gamma,\Gamma' \ar[ur]_-{t} \ar[dl]_-{l_{\Gamma'}} \\
  \Gamma,\lock \ar[rr]_-{\Dia t} & & \Dia A
}\]
The top triangle commutes by Lemma~\ref{lem:strength}, the left by
definition, and the bottom-right by induction.

Finally we present the step cases of Lemma~\ref{lem:strength_and_l2}:

For the variable case, the bottom-right triangle commutes by induction in:
\[\xymatrix{
  \Gamma,\Gamma',\Gamma'',B \ar[rr]^-{l_{\Gamma'',B}} \ar[dd]_{l_{\Gamma',\Gamma'',B}} \ar[dr]^-{pr}  & & \Gamma,\Gamma',\lock \ar[dd]^{\Dia t} \\
  & \Gamma,\Gamma',\Gamma'' \ar[ur]_-{l_{\Gamma''}} \ar[dl]^-{\,l_{\Gamma',\Gamma''}} \\
  \Gamma,\lock \ar[rr]_-{\Dia t} & & \Dia A
}\]

Lock case:
\[\xymatrix{
  \Gamma,\Gamma',\Gamma'',\lock \ar[rrr]^-{l_{\Gamma'',\tlock}} \ar[ddd]_{l_{\Gamma',\Gamma'',\tlock}} \ar[drr]^-{\Dia l_{\Gamma''}} \ar[ddr]^(0.6){\Dia l_{\Gamma',\Gamma''}} & & & \Gamma,\Gamma',\lock \ar[ddd]^{\Dia t} \\
  & & \Gamma,\Gamma',\lock,\lock \ar[ur]^-{\mu} \ar[d]^{\Dia\Dia t} \\
  & \Gamma,\lock,\lock \ar[dl]^-{\mu} \ar[r]_{\Dia\Dia t} & \Dia\Dia A \ar[dr]_-{\mu} \\
  \Dia\Gamma \ar[rrr]_-{\Dia t} & & & \Dia A
}\]
The triangles commute by definition, the trapeziums commute by the naturality of
$\mu$, and the inside quadrilateral commutes by induction.

\subsection{Left Adjoints and Categorical Completeness}

\textbf{Subject reduction:} Given $x:A,\lock\vdash u:B$ we can use left weakening to
get $\Gamma,x:A,\lock\vdash u:B$. Then given $\Gamma\vdash t:B$ we have
$\Gamma,\lock\vdash u[t/x]:B$. Then by lock replacement $\Gamma,\Gamma'\vdash
u[t/x]:B$. Hence $\letd{x}{\dia\,t}{u}\mapsto u[t/x]$ preserves the typing.

\textbf{Categorical Soundness:} We first confirm that Lemmas~\ref{lem:catsubs}
and~\ref{lem:cat_leftweak} hold for the new $\dia$ rule:
\[\xymatrix{
  \Gamma,\Gamma',\Gamma'' \ar[r]^-{l_{\Gamma''}} \ar[d]_{(\Gamma',\Gamma'')\langle\Gamma,u\rangle} & \Gamma,\Gamma',\lock \ar[d]^{(\Gamma',\sslock)\langle\Gamma,u\rangle} \\
  \Gamma,A,\Gamma',\Gamma'' \ar[r]_-{l_{\Gamma''}} & \Gamma,A,\Gamma',\lock \ar[r] \ar[r]_-{\Dia t} & \Dia B
}\]
and
\[\xymatrix{
  \Gamma,\Gamma',\Gamma'' \ar[r]^-{l_{\Gamma''}} \ar[d]_{(\Gamma',\Gamma'')!} & \Gamma,\Gamma',\lock \ar[d]^{(\Gamma',\sslock)!} \\
  \Gamma',\Gamma'' \ar[r]_-{l_{\Gamma''}} & \Gamma',\lock \ar[r] \ar[r]_-{\Dia t} & \Dia A
}\]
as $l_{\Gamma''}$ is natural. 

Then, unfolding definitions and using these two lemmas, both sides of
$\beta$-reduction for $\Dia$ have denotation
\[\xymatrix{
  \Gamma,\Gamma' \ar[r]^-{l_{\Gamma'}} & \Gamma,\lock \ar[r]^-{\Dia t} & \Dia A \ar[r]^-u & B
}\]

\textbf{Coherence:} We first must check that Lemma~\ref{lem:lockreponterms}
extends to the term-formers for $\Dia$. For $\dia$, the case where the lock is
introduced by the weakening follows by Lemma~\ref{lem:lock_over}. The other
$\dia$ case follows by induction and the naturality of lock replacement. $\letdbare$ is
easy. It is also easy to extend Lemma~\ref{lem:strength} to the new term-formers.

Then for coherence it is easy to see how to get a version of
Lemma~\ref{lem:coherence} for $\dia$.

\textbf{Categorical Completeness:} For our term model construction, we need to
confirm that $\Box$ is indeed an idempotent comonad.

We first need that $\varepsilon:\Box A\to A$, which is the term for the T axiom,
$\open\,x$, is natural, i.e. $t\circ\varepsilon=\varepsilon\circ\Box t$. The right hand
side is $\open\,\shut\,t[\open\,x/x]$, which reduces to $t[\open\,x/x]$, which is the
left hand side.

Next, $\delta:\Box A\to\Box\Box A$, which is the term for the 4 axiom,
$\shut\,\shut\,\open\,x$, is natural, i.e. $\delta\circ\Box t=\Box\Box t\circ\delta$. The
left hand side is $\shut\,\shut\,\open\,\shut\,t[\open\,x/x]$, which reduces to
$\shut\,\shut\,t[\open\,x/x]$. The right hand side is
\[
\shut\,\shut\,t[\open\,\open\,\shut\,\shut\,\open\,x/x]
\]
which reduces similarly.

Moving to the monad laws, we need that $\delta\circ\delta=\Box\delta\circ\delta$.
The left hand side is $\shut\,\shut\,\open\,\shut\,\shut\,\open\,x\mapsto
\shut\,\shut\,\shut\,\open\,x$, and right is
\[
\shut\,\shut\,\shut\,\open\,\open\,\shut\,\shut\,\open\,x
\]
reducing similarly.

We have already argued in Section~\ref{Sec:S4type} that $\varepsilon\circ\delta$ is
the identity. $\Box\varepsilon\circ\delta$ is
$\shut\,\open\,\open\,\shut\,\shut\,\open\,x$, which reduces to $\shut\,\open\,x$,
which is $\eta$-equal to $x$.

Finally, for idempotence, see the argument in Section~\ref{Sec:S4type}.

We finally prove Lemma~\ref{lem:lockrep_and_ctxterm}. First we unfold definitions
to understand the lock replacement arrows as terms:
\begin{align*}
  l_{\cdot} &= \dia\,x \\
  l_{\Gamma,A} &= l_{\Gamma}[\pi_1\,x/x] \\
  l_{\Gamma,\sslock} &= \letd{x}{x}{\open\,\letd{x}{l_{\Gamma}}{\shut\,\dia\,x}}
\end{align*}
We then proceed by induction on $\Gamma'$. The base case of the lemma has the
left hand side exactly $\dia\,c_{\Gamma}$.

$l_{\Gamma',A}[c_{\Gamma,\Gamma',y:A}/x]=l_{\Gamma'}[\pi_1
\langle c_{\Gamma,\Gamma'},y\rangle/x]$, which reduces to
$l_{\Gamma'}[c_{\Gamma,\Gamma'}/x]$, which equals $\dia\,c_{\Gamma}$ by
induction.

$l_{\Gamma',\sslock}[c_{\Gamma,\Gamma',\sslock}/x]=\letd{x}{\dia\,
c_{\Gamma,\Gamma'}}{\open\,\letd{x}{l_{\Gamma}}{\shut\,\dia\,x}}$. This reduces
to $\open\,\letd{x}{l_{\Gamma}[c_{\Gamma,\Gamma'}/x]}{\shut\,\dia\,x}$, which
by induction is equal to $\open\,\letd{x}{\dia\,c_{\Gamma}}{\shut\,\dia\,x}$. This reduces to
$\open\,\shut\,\dia\,c_{\Gamma}\mapsto\dia\,c_{\Gamma}$.


\section{Intuitionistic R}

\textbf{Proof of Lemma~\ref{q_preserved}:}

We first show that $\Box q\circ\eta^m=r:\Dia A\to\Box\Dia A$:
\[\xymatrix{
\Dia A \ar[r]^-r \ar[d]_{\eta^m} & \Box\Dia A \ar[d]_{\eta^m} \ar[dr]^-{id} \\
\Box\Dia\Dia A \ar[r]_-{\Box\Dia r} & \Box\Dia\Box\Dia A \ar[r]_-{\Box\varepsilon^m} & \Box\Dia A
}\]
Note that the bottom arrow is the definition of $\Box q$.

We then show that $\Box\Dia q\circ\eta^m=r$ also:
\[\xymatrix{
  & & \Box\Dia\Box A \ar[d]_{\Dia r=\Dia\Box r} \ar[dr]^-{\varepsilon^m} \\
  \Dia A \ar[r]_-{\Dia r} \ar[d]_{\eta^m} \ar[urr]^-{\Dia\eta^m} & \Dia\Box A  \ar[r]_-{\Dia\Box\eta^m} \ar[d]_{\eta^m} & \Dia\Box\Box\Dia A \ar[d]_{\varepsilon^m} & \Dia A \ar[dl]^-r \\
  \Box\Dia\Dia A \ar[r]_-{\Box\Dia\Dia r} & \Box\Dia\Dia\Box A \ar[r]_-{\Box\Dia\varepsilon^m} & \Box\Dia A
}\]
The right hand square commutes because $\Box\Dia\varepsilon^m\circ\eta^m=
\eta^m\circ\varepsilon^m=\varepsilon^m\circ\Dia\Box\eta^m$. The
top triangle commutes by the naturality of $r$; we then use that $\Box$ preserves
$r$ as shown. $\varepsilon^m\circ\Dia\eta^m=id$ completes the proof.

This establishes that $\Box q\circ\eta^m=\Box\Dia q\circ\eta^m$. But by
the adjunction there should be a unique arrow $h$ such that $\Box h\circ\eta^m=
r$, so $q=\Dia q$.

\textbf{Categorical Semantics:} Note that that for the soundness of $\beta$-reduction
for function spaces and $\Dia$ we need updated versions of
Lemmas~\ref{lem:catsubs} and~\ref{lem:cat_leftweak}; these are straightforward
from the naturality of weakening.

There are two simple lemmas we need for the soundness of $\beta$-reduction for
$\Box$ and $\Dia$:

\begin{lemma}\label{lem:weak_lem}
\[\xymatrix{
 \Gamma_1,\Gamma_2,\Gamma_3,\Gamma_4 \ar[rr]^-{w_{\Gamma_2,\Gamma_3,\Gamma_4}} \ar[dr]_-{\Gamma_4(w_{\Gamma_3})} & & \Gamma_1 \\
 & \Gamma_1,\Gamma_2,\Gamma_4 \ar[ur]_-{w_{\Gamma_2,\Gamma_4}}
}\]
\end{lemma}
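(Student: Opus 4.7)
The plan is to prove the equation by induction on $\Gamma_4$, with the base case itself requiring a secondary induction on $\Gamma_3$. This mirrors the structure of the analogous Lemma~\ref{lem:lock_over} for the IS4 semantics, with the arrow $q:\Dia A\to A$ playing the role previously taken by the monad multiplication $\mu$.

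For the base case $\Gamma_4=\cdot$, the statement reduces to $w_{\Gamma_2,\Gamma_3}=w_{\Gamma_2}\circ w_{\Gamma_3}$, which I would establish by a secondary induction on $\Gamma_3$. When $\Gamma_3=\cdot$ this is immediate since $w_{\cdot}$ is the identity. For $\Gamma_3=\Gamma_3',x:A$, unfolding the definition of $w$ gives $w_{\Gamma_2,\Gamma_3',x:A}=w_{\Gamma_2,\Gamma_3'}\circ\pi_1$, and the induction hypothesis yields the desired factorisation through $w_{\Gamma_2}$. The lock case $\Gamma_3=\Gamma_3',\lock$ is analogous, with $\pi_1$ replaced by $q$.

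For the inductive step on $\Gamma_4$, the variable case $\Gamma_4=\Gamma_4',y:B$ uses that $w$ applied to a context ending in a variable is post-composition with a projection; the naturality of the projection moves $\Gamma_4'(w_{\Gamma_3})$ past it, and the induction hypothesis then concludes. The lock case $\Gamma_4=\Gamma_4',\lock$ uses $w_{\Gamma,\sslock}=w_{\Gamma}\circ q$ on both sides, reducing the equation to showing $w_{\Gamma_2,\Gamma_3,\Gamma_4'}\circ q=w_{\Gamma_2,\Gamma_4'}\circ q\circ\Dia(\Gamma_4'(w_{\Gamma_3}))$. The naturality of $q$ rewrites the right hand side as $w_{\Gamma_2,\Gamma_4'}\circ\Gamma_4'(w_{\Gamma_3})\circ q$, after which the induction hypothesis completes the proof.

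The main obstacle is notational rather than mathematical: one must carefully track how contexts are interpreted as functors, so that expressions such as $\Gamma_4(w_{\Gamma_3})$ are correctly expanded in each case of the induction. Once the bookkeeping is settled, every case reduces to naturality of a projection or of $q$ together with the induction hypothesis; notably, the non-trivial property $q=\Dia q$ established in Lemma~\ref{q_preserved} is not required here, being reserved instead for the soundness of $\eta$-equivalence and similar results elsewhere.
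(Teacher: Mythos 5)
Your proof is correct and takes essentially the same approach as the paper, whose entire proof reads ``By induction on $\Gamma_4$, with the base case using induction on $\Gamma_3$''; your case analysis (unfolding $w$ in the variable and lock cases, naturality of projections and of $q$, then the induction hypothesis) supplies exactly the intended details. Your closing observation is also accurate: only the naturality of $q$ is needed here, and the paper indeed reserves the idempotence property $q=\Dia q$ of Lemma~\ref{q_preserved} for the base case of Lemma~\ref{lem:weak_and_tick}.
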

\begin{proof}
By induction on $\Gamma_4$, with the base case using induction on $\Gamma_3$.
\end{proof}

\begin{lemma}
$\den{\Gamma,\Gamma',\Gamma''\vdash t:A}=\den{\Gamma,\Gamma''\vdash t:A}
\circ\den{\Gamma''}(w_{\Gamma'})$.
\end{lemma}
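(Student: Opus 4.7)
The plan is an induction on the derivation of $t$, mirroring the structure of Lemma~\ref{lem:lockreponterms} from the IS4 section but replacing lock-replacement arguments with weakening arguments, and using Lemma~\ref{lem:weak_lem} as the key combinatorial tool to merge nested weakenings.

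First I would dispose of the variable case. Since variables in IR are denoted by a projection composed with a weakening $w_{\text{suffix}}$, if $x:A$ lies in $\Gamma$ the LHS has suffix $\Gamma',\Gamma''$ and the RHS has suffix $\Gamma''$ composed with $\Gamma''(w_{\Gamma'})$; Lemma~\ref{lem:weak_lem} (with $\Gamma_3=\Gamma'$ and $\Gamma_4=\Gamma''$) identifies these. If $x:A$ lies in $\Gamma''$, both sides have suffix in $\Gamma''$ only, and the required equality reduces to naturality of the relevant projection with respect to $\Gamma''(w_{\Gamma'})$. The cases for $\lambda$, application, and $\letdbare$ then follow by induction on subterms together with naturality of the cartesian-closure unit/counit (as in the corresponding cases of Lemma~\ref{lem:cat_leftweak}).

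The cases for $\shut$, $\open$, and $\dia$ are where the structure of the problem really comes in. For $\shut$, we apply the induction hypothesis to the premise over the extended context $\Gamma,\Gamma',\Gamma'',\lock$, whereupon the extra $\Gamma''(w_{\Gamma'})$ is promoted through $\Dia$ and commutes past $\eta^m$ by naturality and functoriality of $\Box$. For $\open$ (and symmetrically $\dia$), the denotation on each side has the form $\varepsilon^m\circ\Dia t\circ w_{\text{suffix}}$: on the LHS the suffix is $\lock,\Gamma_0'$ where $\Gamma_0'$ is the portion of $\Gamma',\Gamma''$ beyond the relevant lock, while on the RHS we obtain $w_{\lock,\Gamma_0''}\circ\Gamma''(w_{\Gamma'})$ with $\Gamma_0''\subseteq\Gamma''$. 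Lemma~\ref{lem:weak_lem} identifies these two composite weakenings as a single weakening over the same context, so the two denotations coincide.

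The main obstacle is bookkeeping in the $\open$/$\dia$ cases: $\Gamma'$ may be inserted either strictly before or strictly after the lock introduced by the rule, and we must check both subcases separately (when $\Gamma'$ lies in the scope of the $\open$'s weakening, and when it lies outside, having been absorbed into the ambient $\Gamma''$). In both subcases Lemma~\ref{lem:weak_lem} is exactly the statement needed to collapse the two stacked weakenings, so once the case split is made the diagrams commute routinely.
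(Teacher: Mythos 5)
Your overall strategy is the same as the paper's: the paper proves this lemma by induction on the formation of $t$, with the variable case discharged by Lemma~\ref{lem:weak_lem}, and your variable case (both positions of $x$), the congruence cases via naturality of the cartesian-closure unit/counit, and your $\shut$ case (push $\den{\Gamma''}(w_{\Gamma'})$ through $\Dia$, then past $\eta^m$ by naturality) all match the pattern the paper spells out for the IS4 analogue, Lemma~\ref{lem:lockreponterms}. Your case split for $\open$/$\dia$ --- $\Gamma'$ inserted inside the rule's weakening versus before the rule's lock --- is also the right decomposition.

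There is, however, one concrete misstep: you assert that in \emph{both} subcases Lemma~\ref{lem:weak_lem} ``is exactly the statement needed to collapse the two stacked weakenings.'' That is true only in the first subcase. When $\Gamma'$ is inserted strictly before the lock, it lands in the \emph{premise} context: writing the rule as $\Delta_1,\Delta_2\vdash u:\Box A$ with conclusion $\Delta_1,\Delta_2,\lock,\Sigma$, where $\Gamma=\Delta_1$ and $\Gamma''=\Delta_2,\lock,\Sigma$, the weakened derivation denotes $\varepsilon^m\circ\Dia\den{u'}\circ w_{\Sigma}$ with $u'$ the weakened premise. Here the $\Gamma'$-weakening sits \emph{under} $\Dia$, so there are no two stacked outer weakenings for Lemma~\ref{lem:weak_lem} to merge; you must instead apply the induction hypothesis to the premise to get $\den{u'}=\den{u}\circ\den{\Delta_2}(w_{\Gamma'})$ and then commute $\Dia(\den{\Delta_2}(w_{\Gamma'}))$ past $w_{\Sigma}$ by naturality of $w$ --- precisely the ``induction and naturality'' move the paper makes in the corresponding case of Lemma~\ref{lem:lockreponterms}, and the move you yourself make in the $\shut$ case, so the fix is local. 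A smaller slip: the weakening in the denotation of $\open$ is the map $\den{\Delta,\lock,\Sigma}\to\den{\Delta,\lock}$, so the weakened-away suffix is $\Sigma$ alone; writing it as $\lock,\Gamma_0'$ would delete the lock as well, leaving $\Dia t$ with no $\Dia$-shaped domain to compose with.
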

\begin{proof}
By induction on the formation of $t$. The variable case uses
Lemma~\ref{lem:weak_lem}.
\end{proof}

\textbf{Proof of Lemma~\ref{lem:weak_and_tick}:}

We use induction on $\Gamma'$. The base case is exactly
Lemma~\ref{q_preserved}.

$\Dia w_{\sslock,\Gamma',A}=\Dia w_{\sslock,\Gamma'}\circ\Dia pr=
w_{\Gamma',\sslock}\circ\Dia pr$ by induction. This is $w_{\Gamma'}\circ q\circ
\Dia pr=w_{\Gamma'}\circ pr\circ q$ as required by the naturality of $q$. The
step case with $\lock$ follows similarly.

\textbf{Proof of Lemma~\ref{lem:weak_and_ctx}:}

We first show the denotation of $w$ in the term model:
\begin{align*}
  w_{\cdot} &= x \\
  w_{\Gamma,A} &= w_{\Gamma}[\pi_1\,x/x] \\
  w_{\Gamma,\sslock} &= w_{\Gamma}[\letd{x}{x}{x}/x]
\end{align*}
The base case of the lemma is trivial. 

$w_{\Gamma',A}[c_{\Gamma,\Gamma',y:A}/x]=w_{\Gamma'}[\pi_1
\langle c_{\Gamma,\Gamma'},y\rangle/x]\mapsto
w_{\Gamma'}[c_{\Gamma,\Gamma'}/x]$, then apply induction.

$w_{\Gamma',\sslock}[c_{\Gamma,\Gamma',\sslock}/x]=w_{\Gamma'}
[\letd{x}{\early\,c_{\Gamma,\Gamma'}}{x}/x]\mapsto
w_{\Gamma'}[c_{\Gamma,\Gamma'}/x]$.

\end{document}